\tikzstyle{circle_node} = [circle, draw, minimum size=2.5em, inner sep=-0.1em]
\tikzstyle{diamond_node} = [diamond, draw, minimum size=3.1em, inner sep=-0.1em]
\tikzstyle{rounded_rectangle} = [rectangle, draw, rounded corners=11pt, minimum width=2.5em, minimum height=2.5em, text centered]
\tikzstyle{arrow} = [->, thick]
\def\eqref#1{equation~\ref{#1}}
\def\1{\bm{1}}
\DeclareMathAlphabet{\mathsfit}{\encodingdefault}{\sfdefault}{m}{sl}
\SetMathAlphabet{\mathsfit}{bold}{\encodingdefault}{\sfdefault}{bx}{n}
\newcommand{\Var}{\mathrm{Var}}
\theoremstyle{plain}
\newtheorem{theorem}{\textbf{Theorem}}
\newtheorem{definition}{\textbf{Definition}}
\newtheorem{lemma}{\textbf{Lemma}}
\newtheorem{corollary}{\textbf{Corollary}}
\newtheorem{proposition}{\textbf{Proposition}}
\newtheorem{assumption}{\textbf{Assumption}}
\DeclareMathOperator{\trace}{trace}
\newcommand{\EE}{\mathbb{E}}
\newcommand{\myfnsymbol}[1]{%
  \expandafter\@myfnsymbol\csname c@#1\endcsname
}
\newcommand{\@myfnsymbol}[1]{%
  \ifcase #1
  \or 1
  \or 2
  \or 3
  \or \TextOrMath{\textasteriskcentered}{*}
  \fi
}
\newcommand{\affiliationA}{\@myfnsymbol{1}}
\newcommand{\affiliationB}{\@myfnsymbol{2}}
\newcommand{\affiliationC}{\@myfnsymbol{3}}
\newcommand{\correspondingA}{\@myfnsymbol{4}}
\begin{document}

\title{Adversarial Transform Particle Filters}

\author{
Chengxin Gong\textsuperscript{\affiliationA},
Wei Lin\textsuperscript{\affiliationB},
Cheng Zhang\textsuperscript{\affiliationC,\correspondingA}
}

\date{
}

\renewcommand{\thefootnote}{\myfnsymbol{footnote}}
\maketitle
\footnotetext[1]{School of Mathematical Sciences and Center for Statistical Science, Peking University,
   Beijing, 100871, China. Email: gongchengxin@pku.edu.cn}%
\footnotetext[2]{School of Mathematical Sciences and Center for Statistical Science, Peking University,
Beijing, 100871, China. Email: weilin@math.pku.edu.cn}%
\footnotetext[3]{School of Mathematical Sciences and Center for Statistical Science, Peking University, Beijing, 100871, China. Email: chengzhang@math.pku.edu.cn}%
\footnotetext[4]{Corresponding author}%

\setcounter{footnote}{0}
\renewcommand{\thefootnote}{\fnsymbol{footnote}}

\begin{abstract}
  The particle filter (PF) and the ensemble Kalman filter (EnKF) are widely used for approximate inference in state-space models. From a Bayesian perspective, these algorithms represent the prior by an ensemble of particles and update it to the posterior with new observations over time.
  However, the PF often suffers from weight degeneracy in high-dimensional settings, whereas the EnKF relies on linear Gaussian assumptions that can introduce significant approximation errors.
  In this paper, we propose the Adversarial Transform Particle Filter (ATPF), a novel filtering framework that combines the strengths of the PF and the EnKF through adversarial learning.
  Specifically, importance sampling is used to ensure statistical consistency as in the PF, while adversarially learned transformations, such as neural networks, allow accurate posterior matching for nonlinear and non-Gaussian systems.
  In addition, we incorporate kernel methods to ease optimization and leverage regularization techniques based on optimal transport for better statistical properties and numerical stability.
  We provide theoretical guarantees, including generalization bounds for both the analysis and forecast steps of ATPF.
  Extensive experiments across various nonlinear and non-Gaussian scenarios demonstrate the effectiveness and practical advantages of our method.
\end{abstract}


\begin{keywords}
  Adversarial learning; Data assimilation; Ensemble Kalman filter; Particle filter; State-space models.
\end{keywords}


\section{Introduction}
\label{sec:intro}
Data assimilation plays a critical role in fields such as meteorology
\citep{rabier2005overview}, oceanography \citep{bertino2003sequential}, and geophysics \citep{fletcher2022data}.
In these fields, state-space models (SSMs) are often employed to describe the evolution dynamics of the system state, and the relationship of the latent states to observed data.
The goal of data assimilation is to obtain an estimate of the true latent states of the system and the associated uncertainty of that estimate.
At each time step, the learning and inference process consists of two sequential alternating phases: analysis and forecast.
During the analysis phase, current observations are used to update prior knowledge of the latent state, resulting in a posterior distribution that provides a more accurate estimate of the current state.
In the forecast phase, the posterior distribution is propagated forward using the evolution dynamics to predict the distribution and likely value of the next state.
While numerous methods exist to perform these two steps \citep{evensen1994sequential,gordon1993novel,anderson2001ensemble,lei2011moment}, the forecast phase generally relies on predefined evolution dynamics. Therefore, in this paper we focus on improving the analysis step to efficiently transform prior particles into posterior ones for more accurate latent state estimation.
There is an enormous statistics and engineering literature on state estimation for data assimilation.
For linear, Gaussian models, the Kalman filter \citep{kalman1960} provides an efficient closed-form solution for posterior state estimation given the observations.
For more general SSMs with nonlinear, non-Gaussian models, Markov chain Monte Carlo (MCMC) methods have been developed \citep{carlin1992,Shephard1997,Gamerman1998}.
While these methods are effective, they struggle to scale to high-dimensional states in view of the rapidly growing computational cost (typically cubically) as the state dimension increases.
An alternative approach for general SSMs is sequential Monte Carlo, also known as the particle filter (PF; \citet{gordon1993novel}).
The PF approximate the state distribution using a weighted set of samples or \emph{particles}, which are propagated forward over time according to the state evolution dynamics and updated with new observations via reweighting or resampling, typically using importance sampling. 
Although PFs are exact in the limit of infinite particles, they tend to suffer from weight degeneracy (i.e., all weights but one become essentially zero), especially in high dimensions \citep{snyder2008obstacles}.
While various PF variants have been proposed to alleviate this issue \citep{arulampalam2002tutorial, nakano2007merging, xiong2006note, murphy2001rao}, scaling PFs to high-dimensional problems remains a significant challenge.
Instead of adjusting particle weights, one can update the particles directly to approximate the posterior using an appropriate transformation.
For example, the ensemble Kalman filter (EnKF; \citet{evensen1994sequential}) updates the ensemble of particles using a linear ``shift" that approximates the best linear update, assuming a linear Gaussian model.
While this approach enhances particle diversity, it sacrifices statistical consistency in the context of nonlinear and non-Gaussian models.
Furthermore, in high-dimensional scenarios, the EnKF requires additional techniques to avoid rank deficiency and underestimation of forecast-error covariance matrices \citep{sun2024high}.
Three main methods are typically used: dimension localization (e.g., \citet{hunt2007efficient,montemerlo2003fastslam}), covariance tapering (e.g., \citet{anderson2007adaptive,furrer2006covariance}) and inflation (e.g., \citet{evensen2003ensemble,lei2011moment}).
These approaches assume that correlations decrease rapidly with spatial distance, implying that the covariance matrix should be sparse.
Besides these, numerous EnKF variants, such as the ensemble transform Kalman filter (ETKF; \citet{bishop2001adaptive}) and the ensemble adjustment Kalman filter (EAKF; \citet{anderson2001ensemble}), have been developed to improve approximation accuracy and reduce computational burden.

In this paper, we propose the adversarial transform particle filter (ATPF), a new framework for data assimilation in SSMs that combines the statistical consistency of the PF with the particle diversity of the EnKF through the lens of adversarial learning.
In the field of generative models, adversarial learning has proven successful, as exemplified by generative adversarial networks (GANs; \citet{goodfellow2014generative}) and further Wasserstein GANs (W-GANs; \citet{arjovsky2017wasserstein}), which can effectively approximate complex target data distributions.
From a statistical perspective, W-GANs distinguish between two distributions, $P$ and $Q$, using the maximum mean discrepancy (MMD), also known as the integral probability metric (IPM) or adversarial loss.
Specifically, a set of test functions, denoted by $\mathcal{F}_D$, acts as a discriminator to measure the dissimilarity between $P$ and $Q$, which leads to the MMD defined as follows:
\begin{equation}
  \mathrm{MMD}(P\|Q)=\sup_{f\in\mathcal{F}_D}\left|\mathbb{E}_{x\sim P}f(x)-\mathbb{E}_{x\sim Q}f(x)\right|.
  \label{eq:MMD}
\end{equation}
Similar to the EnKF, we aim to learn a transformation function to update prior particles toward the posterior.
As shown in \eqref{eq:MMD}, only the expectations of the test functions under $P$ and $Q$ are needed.
By leveraging importance sampling, as in PFs, we can estimate these expectations under the posterior distribution and train the transformation function by minimizing the MMD between the pushforward of the prior and the posterior.
Unlike the EnKF, this adversarial framework allows us to match arbitrary posterior distributions, making it particularly well-suited for nonlinear and non-Gaussian models.
Moreover, the approach inherits key advantages from both the PF and the EnKF.
On one hand, the transformation formulation and adversarial loss in \eqref{eq:MMD} mitigates particle collapse.
On the other hand, the posterior expectations from PFs ensure statistical consistency in the estimates.

The remainder of the paper is organized as follows.
In Section \ref{sec:SSMs}, we provide a brief introduction to state-space models (SSMs) and review existing filtering methods.
Our proposed method is presented in detail in Section \ref{sec:ATPF}, followed by an exploration of its theoretical properties in Section \ref{sec:theory}.
Section \ref{sec:experi} presents the experimental results that validate our method.
We conclude the paper and offer further discussions in Section \ref{sec:conc}.
All proofs are provided in Appendix \ref{app:AppendixA}.

\section{State-Space Models and Ensemble Methods}
\label{sec:SSMs}
\subsection{State-Space Models}
\label{subsec:SSMs}
We begin with a brief introduction to state-space models (SSMs), which describe the temporal evolution of the latent state and how it relates to the observations.
The SSM is defined as
\begin{equation}
  \begin{aligned}
    x_t&=\mathcal{M}_{t-1}(x_{t-1})+\eta_t,\\
    y_t&=\mathcal{H}_t(x_t)+\varepsilon_t,
  \end{aligned}
  \label{eq:SSM}
\end{equation} where $x_t,\eta_t\in\mathbb{R}^{d_{x_t}}$ denote the latent state and the evolution noise, $y_t,\varepsilon_t\in\mathbb{R}^{d_{y_t}}$ denote the observed data and the observation noise, the evolution operator $\mathcal{M}_{t-1}:\mathbb{R}^{d_{x_{t-1}}}\mapsto\mathbb{R}^{d_{x_t}}$ and the observation operator $\mathcal{H}_t:\mathbb{R}^{d_{x_t}}\mapsto\mathbb{R}^{d_{y_t}}$ are both measurable functions, for each time $t=1,\dots,T$. Without loss of generality, we set $\EE\eta_t=\EE\varepsilon_t=0$.
In the data assimilation problem, it is commonly assumed that (1) the analytical forms of $\mathcal{M}_{t-1},\mathcal{H}_t$ are fully known; (2) $\eta_t,\varepsilon_t$ are mutually independent for all $t=1,\dots,T$; and (3) densities $p_{\eta_t}$ and $p_{\varepsilon_t}$ are simple and straightforward to sample from. At each time $t$, all filtering methods perform two sequential steps: the analysis step and the forecast step. The goal of the analysis step is to transform the prior $p_t(x_t):=p(x_t|y_{1:t-1})$ into the posterior $q_t(x_t):=p(x_t|y_{1:t})$, while the forecast step aims to estimate the prior $p_{t+1}(x_{t+1}):=p(x_{t+1}|y_{1:t})$ for the next time step.

The optimal transformation in the analysis step is naturally induced by Bayes' rule, which gives the posterior
\begin{equation}
  q_t(x_t)=\frac{p_t(x_t)p_{\varepsilon_t}(y_t-\mathcal{H}_t(x_t))}{\int p_t(x)p_{\varepsilon_t}(y_t-\mathcal{H}_t(x))\text{d}x}.
  \label{eq:Bayes}
\end{equation}
However, a closed-form solution to this equation is only tractable in a few special cases, such as when $\mathcal{M}_t,\mathcal{H}_t$ are linear functions and $\eta_t$ and $\varepsilon_t$ are Gaussian variables, or when $x_t$ is discrete. In more general scenarios, ensemble filtering methods are often employed to approximate the solution, where $p_t(x_t),q_t(x_t)$ are represented by an ensemble of particles and the updating rules operate directly on these particles. Two essential methods for this are the PF and the EnKF, which we introduce next.

\subsection{Particle Filter}
\label{subsec:pf}
Particle filters (PFs) \citep{gordon1993novel} update the prior particles by recalculating their weights according to the associated likelihood of the observed data using self-normalized importance sampling (SNIS; \citet{kong1992note}).
At time $t$, given the prior particles $\{x_t^1,\dots,x_t^n\}$ and the corresponding weights $\{w_t^1,\dots,w_t^n\}$, the vanilla PF algorithm works as follows:

\noindent\textbf{PF Step 1}: Compute the likelihood $l_t^i=p_{\varepsilon_t}(y_t-\mathcal{H}_t(x_t^i))$ for $i=1,\dots,n$.

\noindent\textbf{PF Step 2}: Update the weights $\tilde{w}_{t+1}^i=w_t^il_t^i$ for $i=1,\dots,n$ and renormalize $w_{t+1}^i \propto \tilde{w}_{t+1}^i$ such that $\sum_{i=1}^nw_{t+1}^i=1$.

\noindent\textbf{PF Step 3}: Sample $x_{t+1}^i=\mathcal{M}_{t}(x_{t}^i)+\eta_{t+1}^i$ with $\eta_{t+1}^i\stackrel{\text{i.i.d.}}{\sim}p_{\eta_{t+1}}$ for $i=1,\dots,n$ to obtain forecast particles $\{x_{t+1}^1,\dots,x_{t+1}^n\}$ with weights $\{w_{t+1}^1,\dots,w_{t+1}^n\}$.

The PF method is asymptotically consistent since the distribution induced by the weighted particles converge weakly to the true posterior as the number of particles goes to infinity \citep{kunsch2005recursive}.
When the effective sample size $\displaystyle\text{ESS}:=\frac{1}{\sum_{i=1}^n(w_t^i)^2}$ becomes relatively small, which is likely to occur in high-dimensional settings, an additional resampling step can be introduced between steps 2 and 3 to mitigate weight degeneracy:

\noindent\textbf{PF Step 2a}: Resample from the ensemble $\{x_t^1,\dots,x_t^n\}$ with weights $\{w_{t+1}^1,\dots,w_{t+1}^n\}$ for $n$ times, that is, sample from the distribution $\sum_{i=1}^nw_{t+1}^i\delta(x-x_t^i)$ for $n$ times. Then set $w_{t+1}^i\equiv\frac{1}{n}$ for $i=1,\dots,n$.

As mentioned in Section \ref{sec:intro}, the resampling step may fail to improve particle diversity when the evolution law is deterministic, i.e., $\mathbb{P}(\eta_t=0)=1$ for all $t=1,\dots,T$, as identical particles will remain identical as time evolves.
To address this issue, the regularized PF method \citep{musso2001improving,arulampalam2002tutorial} adopts the kernel smoothing trick. Let $m(\cdot)$ denote a kernel density function satisfying $\int m(x)\text{d}x=1$. After step 2 or step 2a, we can take the following step:

\noindent\textbf{PF Step 2b}: Sample from the kernelized mixture distribution $\sum_{i=1}^nw_{t+1}^im(x-x_t^i)$ for $n$ times. Then set $w_{t+1}^i\equiv\frac{1}{n}$ for $i=1,\dots,n$.

\subsection{Ensemble Kalman Filter}
\label{subsec:enkf}
For each time $t$, assume a linear Gaussian state-space model
\begin{align}
  x_t & = M_tx_{t-1} + \eta_t, \label{eq:linevo} \\
  y_t & =H_tx_t+\varepsilon_t, \label{eq:lineobs}
\end{align}
where $M_t\in\mathbb{R}^{d_{x_t}\times d_{x_{t-1}}},\eta_t\sim\mathcal{N}_{d_{x_t}}(0,Q_t), H_t\in\mathbb{R}^{d_{y_t}\times d_{x_t}},\varepsilon_t\sim\mathcal{N}_{d_{y_t}}(0,R_t)$.
When the prior distribution is Gaussian, the posterior remains Gaussian and explicit update formulas for the mean and variance parameters can be derived.
This is the well-known Kalman filter \citep{kalman1960,kim2018introduction}.
In high dimensions, however, maintaining and propagating full covariance matrices becomes computationally expensive and memory demanding.
To address these issues, \citet{evensen1994sequential} proposed the EnKF which represents the state distribution using an ensemble of particles.
This ensemble is propagated forward through time and updated as new observations become available.
Given the prior forecast ensemble $\{x_t^1,\dots,x_t^n\}$ and the perturbed observation $y_t$, the EnKF proceeds as follows:

\noindent\textbf{EnKF Step 1}: Estimate the covariance of $x_t$ by the sample covariance: $P_t:=\widehat{\text{Cov}}(x_t,x_t)=\frac{1}{n-1}\sum_{i=1}^n(x_t^i-\bar x_t)(x_t^i-\bar x_t)^T$ where $\bar x_t=\frac{1}{n}\sum_{i=1}^nx_t^i$ is the forecast mean.

\noindent\textbf{EnKF Step 2}: Estimate the Kalman gain as $K_t:=P_tH_t^T(H_tP_tH_t^T+R_t)^{-1}.$

\noindent\textbf{EnKF Step 3}: For each $i=1,\dots,n$, sample $\varepsilon_t^i\stackrel{\text{i.i.d.}}{\sim}\mathcal{N}(0,R_t)$ and generate the corresponding pseudo-observation $y_t^i$ as $ y_t^i=H_tx_t^i+\varepsilon_t^i.$

\noindent\textbf{EnKF Step 4}: For each $i=1,\dots,n$, update the particle $x_t^i$ as\begin{equation}
  x_t^i \leftarrow x_t^i + K_t(y_t-y_t^i).
  \label{eq:kalupd}
\end{equation}

\noindent\textbf{EnKF Step 5}: For each $i=1,\dots,n$, sample $x_{t+1}^i=\mathcal{M}_t(x_t^i)+\eta_{t+1}^i$ with $\eta_{t+1}^i\stackrel{\text{i.i.d.}}{\sim}p_{\eta_{t+1}}$.

In high dimensions, techniques such as covariance inflation or covariance tapering can be incorporated between steps 4 and 5, while localization can be applied during step 4.
These strategies, which will be elaborated in Appendix \ref{app:AppendixC}, enhance the performance of the EnKF in complex settings.
However, the linear transformations used in the EnKF are often suboptimal for nonlinear or non-Gaussian models.
While some non-Gaussian EnKF approaches have been proposed \citep{anderson2010nonGaussian, lei2011moment}, they still exhibit poor performance for certain classes of measurement distributions \citep{katzfuss2020ensemble}.

\section{Adversarial Transform Particle Filter}
\label{sec:ATPF}
In this section, we present the adversarial transform particle filter (ATPF), a novel adversarial learning framework for data assimilation that integrates the particle diversity of the EnKF with the statistical consistency of the PF.
The key motivations for ATPF are: (1) extending the linear update equation in the EnKF (i.e., \eqref{eq:kalupd}) and similar methods to more flexible, general transformations; (2) replacing the moment-matching criteria in the nonlinear ensemble adjustment filter (NLEAF; \citet{lei2011moment}) with a maximum mean discrepancy (MMD) objective, which offers greater effectiveness for distributional matching.

Let $\mathcal{G}_t$ and $\mathcal{F}_D$ be two pre-specified function spaces, where $\mathcal{G}_t$ represents the collection of transformation functions and $\mathcal{F}_D$ represents the collection of test functions.
For a given time $t$, let $G_t\in\mathcal{G}_t$ denote the transformation function, which takes the prior particles as inputs and outputs the estimated posterior samples. Let $\hat P_t$ and $P_t$ be the estimated and the ground truth prior, respectively, where $P_t$ may be estimated via PFs or Markov chain Monte Carlo (MCMC) methods. Similarly, let $\hat Q_t$ and $Q_t$ be the corresponding estimated and true posterior distributions.
Specifically, $\hat Q_t=G_{t\#}\hat P_t$ (``$_\#$'' denotes the pushforward operator) and $Q_t$ can be estimated with a reweighted version of $P_t$ via importance sampling.
Our goal is to ensure that $\hat Q_t$ closely approximates the true posterior $Q_t$ under the adversarial loss \eqref{eq:MMD}.
This leads to the oracle objective of minimizing the MMD:
\begin{equation}\small
  G_t=\underset{G_t\in\mathcal{G}_t}{\arg\min}\sup_{f\in\mathcal{F}_D}\left|\mathbb{E}_{x\sim \hat Q_t}f(x)-\mathbb{E}_{x'\sim Q_t}f(x')\right|=\underset{G_t\in\mathcal{G}_t}{\arg\min}\sup_{f\in\mathcal{F}_D}\left|\mathbb{E}_{x\sim \hat P_t}f(G_t(x))-\mathbb{E}_{x'\sim P_t}b_t(x')f(x')\right|,
  \label{eq:ATPFobj}
\end{equation}
where $b_t(\cdot)$ is a weighting function defined as:
\begin{equation}
  b_t(x')=\frac{\text{d}Q_t}{\text{d}P_t}=\frac{p_{\varepsilon_t}(y_t-\mathcal{H}_t(x'))}{\int p_t(x)p_{\varepsilon_t}(y_t-\mathcal{H}_t(x))\text{d}x}.
\end{equation}
Since computing the normalizing constant $\int p_t(x_t)p_{\varepsilon_t}(y_t-\mathcal{H}_t(x_t))\text{d}x_t$ is intractable, we adopt the self-normalized importance sampling (SNIS; \citet{hesterberg1988advances}) instead. By leveraging the empirical distributions of the priors, the transformation function $\hat G_t$ can be learned by minimizing the empirical MMD:

\begin{equation}
  \hat G_t=\underset{G_t\in\mathcal{G}_t}{\arg\min}\;\widehat{\mathrm{MMD}}(\hat Q_t\| Q_t)=\underset{G_t\in\mathcal{G}_t}{\arg\min}\sup_{f\in\mathcal{F}_D}\left|\frac{1}{N_1}\sum_{i=1}^{N_1}f(G_t(x_t^i))-\sum_{j=1}^{N_2}\frac{w_t(x_t'^j)}{\sum_{j=1}^{N_2}w_t(x_t'^j)}f(x_t'^j)\right|,
  \label{eq:obj}
\end{equation}
where $w_t(x)=w_{t-1}(x)p_{\varepsilon_t}(y_t-\mathcal{H}_t(x))$, $x_t^i\stackrel{\text{i.i.d.}}{\sim}\hat P_t$ for $i=1,\dots,N_1$, and $x_t'^j\stackrel{\text{i.i.d.}}{\sim}P_t$ for $j=1,\dots,N_2$. Since the estimator given by SNIS is biased, $N_2$ is often chosen to be larger than $N_1$ for better estimation of the posterior expectation.

Once $\hat G_t$ is learned, we can immediately obtain the posterior ensemble $\{\hat G_t(x_i)\}_{i=1}^{N_1}$. Then we can execute the forecast step (e.g., EnKF step 5) and obtain the estimate for the next latent state.
The detailed procedure is outlined in Algorithm \ref{ATPF algorithm}.

\begin{algorithm}[!t]
\caption{The adversarial transform particle filter algorithm for SSMs}
\label{ATPF algorithm}
\begin{algorithmic}[1]
\REQUIRE $y_{1:T}$, $\mathcal{M}_{1:T}$, $\mathcal{H}_{1:T}$, $Q_{1:T}$, $R_{1:T}$, $\alpha$, $\tau$

\ENSURE $\{x_1^i\}_{i=1}^{N_1},\dots,\{x_T^i\}_{i=1}^{N_1}$

\STATE Sample $\{x_0^i\}_{i=1}^{N_1},\{x_0'^j\}_{j=1}^{N_2}$ from the prior and set $w_j\equiv\frac{1}{N_2}$ for $j=1,\dots,N_2$;
\FOR{$t=1,\dots,T$}
\STATE Update $w_j$, $j=1,\dots,N_2$ for $\{x_t'^j\}_{j=1}^{N_2}$ according to PF Steps 1 and 2;
\STATE Train $G_t$ with learning rate $\alpha$ until convergence or reach the iteration limit $\tau$;
\STATE Update the particle $x_t^i$ as $x_t^i\leftarrow G_t(x_t^i)$ for $i=1,\dots,N_1$;
\IF{$t \neq T$}
\STATE Resample from $\{x_t'^j\}_{j=1}^{N_2}$ according to their corresponding weights $\{w_j\}_{j=1}^{N_2}$;
\STATE Execute the inflation step \eqref{eq:inflat1} for $\{x_t^i\}_{i=1}^{N_1}$ and $\{x_t'^j\}_{j=1}^{N_2}$, respectively;
\STATE Execute the forecast step (EnKF Step 5) for all particles.
\ENDIF
\ENDFOR
\end{algorithmic}
\end{algorithm}

\subsection{The MMD in Reproducing Kernel Hilbert Spaces}
\label{subsec:ker}
The MMD training objective \eqref{eq:obj} shares a similar form with that of GANs, which suggests that it may inherit challenges such as instability and difficulty in training when $G_t$ and $f$ are optimized alternately.
It turns out that we can eliminate the inner optimization of $f$ by restricting the test functions $\mathcal{F}_D$ to the unit ball of a reproducing kernel Hilbert space (RKHS).

\begin{proposition}[Kernel MMD, \citet{smola2006maximum}]
  \label{prop:KMMD}
  Let $K(\cdot,\cdot)$ denote the reproducing kernel. Under the RKHS constraint $\mathcal{F}_D=\{f:\|f\|_{\mathcal{H}}\leq 1\}$, the squared MMD takes the explicit form
  \begin{equation}
    \mathrm{MMD}^2(P\|Q)=\mathbb{E}_{x,x'\sim P}K(x,x')-2\mathbb{E}_{x\sim P,y\sim Q}K(x,y)+\mathbb{E}_{y,y'\sim Q}K(y,y').
  \end{equation}
\end{proposition}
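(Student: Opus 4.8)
The strategy is the standard mean-embedding argument. The key observation is that under the RKHS constraint, each expectation $\mathbb{E}_{x\sim P}f(x)$ can be rewritten as an inner product with a fixed element of $\mathcal{H}$, after which the supremum over the unit ball is resolved by Cauchy--Schwarz. First I would invoke the reproducing property $f(x)=\langle f,K(x,\cdot)\rangle_{\mathcal{H}}$ for every $f\in\mathcal{H}$ and every $x$, and (assuming the mild integrability condition $\mathbb{E}_{x\sim P}\sqrt{K(x,x)}<\infty$, and likewise for $Q$, so that the relevant Bochner integrals exist) define the kernel mean embeddings $\mu_P:=\mathbb{E}_{x\sim P}K(x,\cdot)\in\mathcal{H}$ and $\mu_Q:=\mathbb{E}_{y\sim Q}K(y,\cdot)\in\mathcal{H}$. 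Pulling the (bounded linear) functional $\langle f,\cdot\rangle_{\mathcal{H}}$ through the Bochner integral gives $\mathbb{E}_{x\sim P}f(x)=\langle f,\mu_P\rangle_{\mathcal{H}}$ and similarly $\mathbb{E}_{y\sim Q}f(y)=\langle f,\mu_Q\rangle_{\mathcal{H}}$.

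Next I would substitute these into the definition \eqref{eq:MMD}:
\begin{equation*}
  \mathrm{MMD}(P\|Q)=\sup_{\|f\|_{\mathcal{H}}\leq 1}\bigl|\langle f,\mu_P-\mu_Q\rangle_{\mathcal{H}}\bigr|=\|\mu_P-\mu_Q\|_{\mathcal{H}},
\end{equation*}
where the last equality is Cauchy--Schwarz together with the fact that the bound is attained at $f=(\mu_P-\mu_Q)/\|\mu_P-\mu_Q\|_{\mathcal{H}}$ (or any $f$ if $\mu_P=\mu_Q$). Squaring and expanding the norm via bilinearity of the inner product yields
\begin{equation*}
  \mathrm{MMD}^2(P\|Q)=\langle\mu_P,\mu_P\rangle_{\mathcal{H}}-2\langle\mu_P,\mu_Q\rangle_{\mathcal{H}}+\langle\mu_Q,\mu_Q\rangle_{\mathcal{H}}.
\end{equation*}

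Finally I would evaluate each of the three terms by again passing inner products through the Bochner integrals and using $\langle K(x,\cdot),K(x',\cdot)\rangle_{\mathcal{H}}=K(x,x')$: this gives $\langle\mu_P,\mu_P\rangle_{\mathcal{H}}=\mathbb{E}_{x,x'\sim P}K(x,x')$ with $x,x'$ independent, $\langle\mu_P,\mu_Q\rangle_{\mathcal{H}}=\mathbb{E}_{x\sim P,y\sim Q}K(x,y)$, and $\langle\mu_Q,\mu_Q\rangle_{\mathcal{H}}=\mathbb{E}_{y,y'\sim Q}K(y,y')$, which is exactly the claimed formula. The only genuine obstacle is the analytic bookkeeping needed to justify that $\mu_P,\mu_Q$ are well-defined elements of $\mathcal{H}$ and that expectation commutes with the inner product; this is handled by Bochner-integrability under the stated moment condition on $K$, and for a characteristic kernel one additionally notes $\mathrm{MMD}(P\|Q)=0\iff P=Q$, though that is not needed for the stated identity. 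Everything else is routine Hilbert-space manipulation.
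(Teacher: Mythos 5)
Your proposal is correct and follows essentially the same route as the paper's proof: rewrite $f(x)$ via the reproducing property, pass expectations through the inner product to obtain the mean embeddings, resolve the supremum over the unit ball by Cauchy--Schwarz with the maximizer $(\mu_P-\mu_Q)/\|\mu_P-\mu_Q\|_{\mathcal{H}}$, and expand the squared norm into the three kernel expectations. Your version is in fact slightly more careful than the paper's, since you make the Bochner-integrability condition and the degenerate case $\mu_P=\mu_Q$ explicit.
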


The above proposition demonstrates that using the kernel method, the challenging bi-level optimization can be reduced to a more standard and tractable problem:
\begin{equation}
  \begin{aligned}
    G_t&=\underset{G_t\in\mathcal{G}_t}{\arg\min}\sup_{f\in\mathcal{F}_D}\left|\mathbb{E}_{x\sim \hat Q_t}f(x)-\mathbb{E}_{y\sim Q_t}f(y)\right|=\underset{G_t\in\mathcal{G}_t}{\arg\min}\,\mathrm{MMD}^2(\hat Q_t\|Q_t)\\
    &=\underset{G_t\in\mathcal{G}_t}{\arg\min}\left\{\mathbb{E}_{x,x'\sim\hat P_t}K(G_t(x),G_t(x'))-2\mathbb{E}_{x\sim \hat P_t,y\sim Q_t}K(G_t(x),y)+\mathbb{E}_{y\sim Q_t,y'\sim Q_t}K(y,y')\right\}.
  \end{aligned}
\end{equation}
We can use the V-statistics \citep{akritas1986empirical,serfling2009approximation} to estimate the first term, and use the SNIS to estimate the second term. The last term can be omitted as it is independent of $G_t$.
With this simplification, we can learn $G_t$ by minimizing the empirical kernel MMD
\begin{equation}
  \hat G_t:=\underset{G_t\in\mathcal{G}_t}{\arg\min}\frac{1}{N_1^2}\sum_{i=1}^{N_1}\sum_{j=1}^{N_1}K(G_t(x_i),G_t(x_j))-\frac{2}{N_1}\sum_{i=1}^{N_1}\sum_{j=1}^{N_2}\frac{w_t(x_j')}{\sum_{j=1}^{N_2}w_t(x_j')}K(G_t(x_i),x_j').
\end{equation}
We refer to this approach as the Kernel Adversarial Transform Particle Filter (KATPF).

\subsection{Optimal Transport Regularization}
\label{subsec:OT}
When updating the prior to the posterior in SSMs, an optimal transport (OT, \citet{reich2013nonparametric}) approach has been employed to identify a transformation that minimizes the expected distance between the two distributions and ensure statistical consistency for infinite samples \citep{reich2013nonparametric}.
Likewise, we can incorporate OT to regularize the transformation as\begin{equation}
  G_t:=\underset{G_t\in\mathcal{G}_t}{\arg\min}\text{ MMD}^2(\hat Q_t\|Q_t)+\lambda_tW_2^2(\hat Q_t,P_t),
  \label{eq:regulobj}
\end{equation} where $W_2$ is the Wasserstein-2 distance. Similar optimization objectives can also be found in other works \citep{jordan1998variational,fan2021variational}.
Intuitively, when the space of the test functions is relatively limited and insufficient (which induces a weak MMD metric), the regularization provided by the $W_2$ distance can help maintain numerical stability during iterative updates without compromising statistical consistency \citep{reich2013nonparametric}.
Furthermore, it can prevent our ATPF particles from overfitting and being too similar to the PF particles, which are known to exhibit several issues such as weight degeneracy and sample impoverishment \citep{poterjoy2016localized,snyder2008obstacles}.

In practice, we may use the coupling induced by the transformation to form a surrogate (upper bound) of the $W_2$ distance, which leads to the following regularized optimization
\begin{equation}
     \tilde{G}_t:=\underset{G_t\in\mathcal{G}_t}{\arg\min}\text{ MMD}^2( G_{t\#}P_t\|Q_t)+\lambda_t\mathbb{E}_{x\sim P_t}\|G_t(x)-x\|_2^2.
     \label{eq:surrogate}
\end{equation}
Lemma A.1 in \citet{xu2024normalizing} reveals the equivalence between \eqref{eq:regulobj} and \eqref{eq:surrogate}. Furthermore, the surrogate can be estimated as\begin{equation}
  \mathbb{E}_{x\sim P_t}\|G_t(x)-x\|_2^2\approx\frac{1}{N_1}\sum_{i=1}^{N_1}\|G_t(x_i)-x_i\|^2.
\end{equation}

\subsection{Neural Network Parameterization and Joint Optimization}
\label{subsec:ATPF-JO}
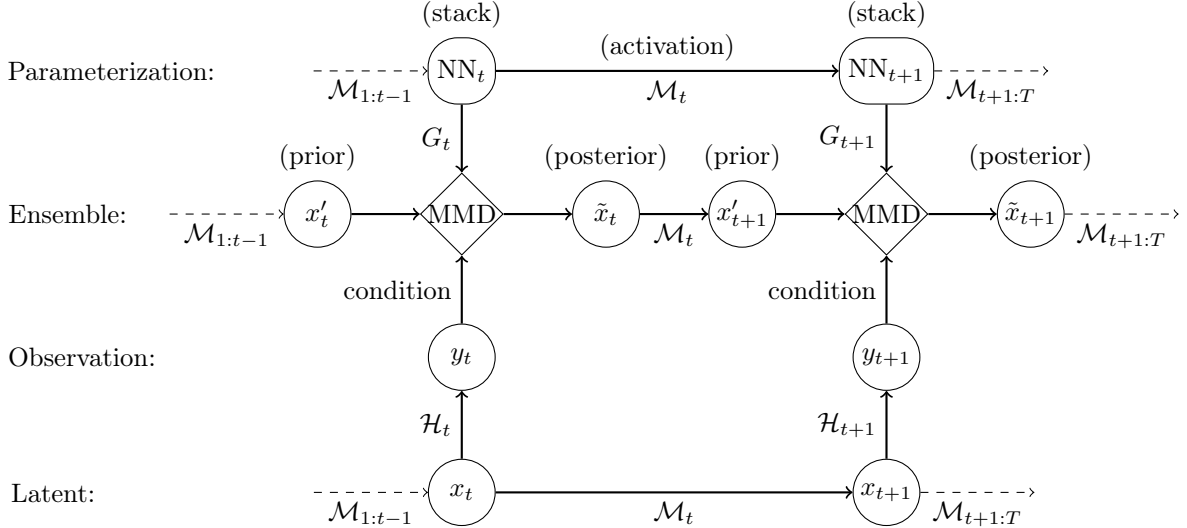
\begin{figure}[t]
  \centering
  \normalsize{
  \begin{tikzpicture}[node distance=0.9cm]

    \node[rounded_rectangle, label=above:{(stack)}] (NNt) at (0, 2) {NN$_t$};
    \node[diamond_node, below=of NNt] (MMD) {MMD};
    \node[circle_node, label=above:{(prior)}, left=of MMD] (xt_prime) {$x'_t$};
    \node[circle_node, label=above:{(posterior)}, right=of MMD] (xt_tilde) {$\tilde{x}_t$};
    \node[circle_node, label=above:{(prior)}, right=of xt_tilde] (xt1_prime) {$x_{t+1}'$};
    \node[circle_node, below=of MMD] (yt) {$y_t$};
    \node[circle_node, below=of yt] (xt) {$x_t$};
    \node[diamond_node, right=of xt1_prime] (MMD2) {MMD};
    \node[circle_node, below=of MMD2] (yt1) {$y_{t+1}$};
    \node[circle_node, label=above:{(posterior)}, right=of MMD2] (xt1_tilde) {$\tilde{x}_{t+1}$};
    \node[circle_node, below=of yt1] (xt1) {$x_{t+1}$};
    \node[rounded_rectangle, label=above:{(stack)}, above=of MMD2] (NNt1) {NN$_{t+1}$};
    
    \draw[arrow] (NNt) -- (MMD) node[midway, left] {$G_t$};
    \draw[arrow] (NNt) -- (NNt1) node[midway, below] {$\mathcal{M}_t$} node[midway, above] {(activation)};
    \draw[arrow] (xt_tilde) -- (xt1_prime) node[midway, below] {$\mathcal{M}_t$};
    \draw[arrow] (yt) -- (MMD) node[midway, left] {condition};
    \draw[arrow] (xt) -- (yt) node[midway, left] {$\mathcal{H}_t$};
    \draw[arrow] (xt_prime) -- (MMD);
    \draw[arrow] (xt1_prime) -- (MMD2);
    \draw[arrow] (MMD) -- (xt_tilde);
    \draw[arrow] (yt1) -- (MMD2) node[midway, left] {condition};
    \draw[arrow] (xt1) -- (yt1) node[midway, left] {$\mathcal{H}_{t+1}$};
    \draw[arrow] (NNt1) -- (MMD2) node[midway, left] {$G_{t+1}$};
    \draw[arrow] (xt) -- (xt1) node[midway, below] {$\mathcal{M}_t$};
    \draw[arrow] (MMD2) -- (xt1_tilde);
    \draw[dashed, ->] ([xshift=-1.5cm]NNt.west) -- (NNt.west) node[midway, below]
    {$\mathcal{M}_{1:t-1}$};
    \draw[dashed, ->] (NNt1.east) -- ([xshift=1.5cm]NNt1.east) node[midway, below]
    {$\mathcal{M}_{t+1:T}$};
    \draw[dashed, ->] ([xshift=-1.5cm]xt_prime.west) -- (xt_prime.west) node[midway, below] {$\mathcal{M}_{1:t-1}$}; 
    \draw[dashed, ->] ([xshift=-1.5cm]xt.west) -- (xt.west) node[midway, below] {$\mathcal{M}_{1:t-1}$};
    \draw[dashed, ->] (xt1.east) -- ([xshift=1.5cm]xt1.east) node[midway, below] {$\mathcal{M}_{t+1:T}$};
    \draw[dashed, ->] (xt1_tilde.east) -- ([xshift=1.5cm]xt1_tilde.east) node[midway, below] {$\mathcal{M}_{t+1:T}$};
    
    \node at ([xshift=-4.16cm]NNt.west) {Parameterization:};
    \node at ([xshift=-2.84cm]xt_prime.west) {Ensemble:};
    \node at ([xshift=-4.55cm]yt.west) {Observation:};
    \node at ([xshift=-4.93cm]xt.west) {Latent:};
  \end{tikzpicture}
  \caption{Our parameterization methods and flow chart of filtering for SSMs.}
\label{fig:NNparameterization}
}
\end{figure}

The rich family of test functions used in ATPF provides better distributional matching capability than the NLEAF methods that only use the first and second moments. Moreover, compared to the EnKF that uses linear transformations, we use more flexible nonlinear transformations based on neural networks for more accurate posterior estimation.

While it is natural to train these transformations sequentially, we can also train them jointly by minimizing the sum of the MMD and $W_2$ regularization terms across all observation times:
\begin{equation}
  \hat G_{1:T}=\underset{(G_1,\dots,G_T)\in\mathcal{G}_1\times\dots\times\mathcal{G}_T}{\arg\min}\sum_{i=1}^T\left(\widehat{\mathrm{MMD}}(\hat Q_t\|Q_t)+\frac{\lambda_t}{N_1}\sum_{i=1}^{N_1}\|G_t(x_i^t)-x_i^t\|^2\right).
\end{equation}
The difference is as follows.
Specifically, the $t$-th term $\widehat{\mathrm{MMD}}(\hat Q_t\|Q_t)+\frac{\lambda_t}{N_1}\sum_{i=1}^{N_1}\|G_t(x^t_i)-x^t_i\|^2$, which we refer to as $\widehat{\mathrm{MMD}}_t$, involves the first $t$ transformation functions $G_1,\dots,G_t$.
By viewing the forecast step as a special activation layer, training $G_{1:t}$ together under the loss $\widehat{\mathrm{MMD}}_t$ is equivalent to training a deep neural network(NN), which can be seen as a stack of $t$ small NNs (Figure \ref{fig:NNparameterization}).
This joint optimization procedure, therefore, allows the transformations at different time steps to collaborate more effectively with each other for a better fit of the target posteriors.
In contrast, in the sequential optimization approach, the transformations from earlier time steps are fixed once trained, preventing them from aiding subsequent transformations in improving their approximations.
We refer to this joint optimization version of ATPF as ATPF-JO.
\section{Theoretical Properties}
\label{sec:theory}
\subsection{Upper Bounds on the Excess Risk}
In this section, we apply techniques from empirical process theory to derive bounds on the excess risk for our method. To ensure that the model remains well-posed and avoids singularities or divergence, we first introduce some common assumptions on the regularity of SSMs and the collection of test functions \citep{agapiou2017importance,distfreebook,lange2024bellman}.

\begin{assumption}[Continuity and differentiability]\label{ass:1} For each $t=1,\dots,T$, $\mathcal{M}_t,\mathcal{H}_t\in C(\mathbb{R}^d)$ and $p_{\eta_t},p_{\varepsilon_t}\in C^1(\mathbb{R}^d)$.
\end{assumption}

\begin{assumption}[Bounded weights and test functions]\label{ass:2} $\sup_{1\le t\le T}\|w_t\|_\infty\vee\|\mathcal{F}_D\|_\infty\leq B$, where $\{w_t\}_{t=1}^T$ are the weights of the SNIS particles and $\|\mathcal{F}_D\|_\infty:=\sup_{f\in\mathcal{F}_D}\|f\|_\infty$.
\end{assumption}

\begin{assumption}[Bounded variance]\label{ass:3} $\sup_{1\le t\le T}\mathbb{E}\eta_t^2\vee\mathbb{E}\varepsilon_t^2<\infty$.
\end{assumption}

To better understand the excess risk, we first decompose it into three parts.

\begin{theorem}[Decomposition of the excess risk]
  \label{thm:boundrisk}
  The excess risk of the analysis step (also called the statistical error, i.e., $\mathrm{MMD}(\hat Q_t\|Q_t)-\widehat{\mathrm{MMD}}(\hat Q_t\|Q_t)$), can be uniformly upper bounded by the following sum of three components:\begin{equation}\label{eq:decomposition}\footnotesize\begin{aligned}
    \sup_{f\in\mathcal{F}_D}\left|\underset{y\sim {Q_t}}{\EE}f(y)-\underset{x'\sim P_t}{\EE}\left[\frac{\sum_{j=1}^{N_2}w_t(x'_j)f(x'_j)}{\sum_{j=1}^{N_2}w_t(x'_j)}\right]\right|&+\sup_{f\in\mathcal{F}_D}\left|\underset{x'\sim P_t}{\EE}\left[\frac{\sum_{j=1}^{N_2}w_t(x'_j)f(x'_j)}{\sum_{j=1}^{N_2}w_t(x'_j)}\right]-\frac{\sum_{j=1}^{N_2}w_t(x'_j)f(x'_j)}{\sum_{j=1}^{N_2}w_t(x'_j)}\right|\\
    &+\sup_{h\in\mathcal{F}_D\circ\mathcal{G}_t}\left|\frac{1}{N_1}\sum_{i=1}^{N_1}h(x_i)-\underset{x\sim \hat P_t}{\EE}h(x)\right|
  \end{aligned}
  \end{equation}
  for all $\hat Q_t=G_{t\#}\hat P_t$ where $G_t\in\mathcal{G}_t$.
\end{theorem}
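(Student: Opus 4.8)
The plan is to establish the bound by inserting two intermediate quantities between $\mathrm{MMD}(\hat Q_t\|Q_t)$ and $\widehat{\mathrm{MMD}}(\hat Q_t\|Q_t)$ and then applying the triangle inequality for the supremum norm over $\mathcal{F}_D$. Recall that $\hat Q_t=G_{t\#}\hat P_t$, so that $\mathbb{E}_{x\sim\hat Q_t}f(x)=\mathbb{E}_{x\sim\hat P_t}f(G_t(x))$; similarly the empirical side of $\widehat{\mathrm{MMD}}$ replaces $\hat P_t$ by the $N_1$-sample empirical measure and replaces $Q_t$ by the SNIS estimate built from the $N_2$ prior draws $x_1',\dots,x_{N_2}'$. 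Writing $I_f:=\mathbb{E}_{y\sim Q_t}f(y)$ for the true posterior expectation, $J_f:=\mathbb{E}_{x'\sim P_t}\big[\sum_j w_t(x_j')f(x_j')/\sum_j w_t(x_j')\big]$ for the expectation of the (biased) SNIS functional, $\hat J_f:=\sum_j w_t(x_j')f(x_j')/\sum_j w_t(x_j')$ for its realized value, and noting that $\mathbb{E}_{x\sim\hat Q_t}f(x)=\mathbb{E}_{x\sim\hat P_t}h(x)$ and $\frac1{N_1}\sum_i f(G_t(x_i))=\frac1{N_1}\sum_i h(x_i)$ with $h:=f\circ G_t\in\mathcal{F}_D\circ\mathcal{G}_t$, we have for each fixed $f$
\begin{equation*}
\left|(I_f-\mathbb{E}_{\hat Q_t}f)-(\hat J_f-\tfrac1{N_1}\textstyle\sum_i f(G_t(x_i)))\right|\le |I_f-J_f|+|J_f-\hat J_f|+\left|\mathbb{E}_{\hat P_t}h-\tfrac1{N_1}\textstyle\sum_i h(x_i)\right|.
\end{equation*}

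Next I would take the supremum over $f\in\mathcal{F}_D$ on both sides. On the left, $\sup_f|A_f-B_f|\ge |\sup_f|A_f|-\sup_f|B_f||$ by the reverse triangle inequality for suprema, and $\sup_f|A_f|=\mathrm{MMD}(\hat Q_t\|Q_t)$ while $\sup_f|B_f|=\widehat{\mathrm{MMD}}(\hat Q_t\|Q_t)$ by the definitions in \eqref{eq:MMD} and \eqref{eq:obj}; hence the left side dominates $\mathrm{MMD}(\hat Q_t\|Q_t)-\widehat{\mathrm{MMD}}(\hat Q_t\|Q_t)$. On the right, the supremum of a sum is at most the sum of suprema, and in the third term $\sup_{f\in\mathcal{F}_D}$ of $|\mathbb{E}_{\hat P_t}(f\circ G_t)-\frac1{N_1}\sum_i (f\circ G_t)(x_i)|$ is exactly $\sup_{h\in\mathcal{F}_D\circ\mathcal{G}_t}|\frac1{N_1}\sum_i h(x_i)-\mathbb{E}_{\hat P_t}h(x)|$ since $G_t$ is fixed. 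This yields precisely the three-term bound \eqref{eq:decomposition}. Because the argument holds for every $G_t\in\mathcal{G}_t$ and every realization of the particles, the bound is uniform over $\hat Q_t=G_{t\#}\hat P_t$ as claimed.

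The only subtlety worth spelling out is the direction of the inequality on the left-hand side: we need $\mathrm{MMD}-\widehat{\mathrm{MMD}}$ (not its absolute value) to be controlled, and this follows from $|\sup_f|A_f|-\sup_f|B_f||\le\sup_f|A_f-B_f|$, so in particular $\sup_f|A_f|-\sup_f|B_f|\le\sup_f|A_f-B_f|$. I expect the main (mild) obstacle to be purely bookkeeping: carefully checking that the SNIS functional as written in \eqref{eq:obj} factors as $\hat J_f$ with the same normalization appearing in the first and second terms of \eqref{eq:decomposition}, and that Assumptions \ref{ass:1}--\ref{ass:3} guarantee all the expectations $I_f$, $J_f$ are finite so the decomposition is not vacuous — boundedness of $\mathcal{F}_D$ and of the weights $w_t$ (Assumption \ref{ass:2}) makes every term finite and the rearrangements legitimate. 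No concentration or complexity estimate is needed here; this theorem is purely the deterministic decomposition step, with the three terms to be bounded individually in the subsequent results.
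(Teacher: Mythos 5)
Your proposal is correct and follows essentially the same route as the paper: a per-$f$ triangle inequality through the intermediate SNIS quantities $J_f$ and $\hat J_f$, then the sup-difference bound $\sup_f|A_f|-\sup_f|B_f|\le\sup_f|A_f-B_f|$ together with sup-of-sum $\le$ sum-of-sups. The only nitpick is that for a fixed $G_t$ the third term is a supremum over $\{f\circ G_t:f\in\mathcal{F}_D\}\subseteq\mathcal{F}_D\circ\mathcal{G}_t$, so it is ``exactly'' the stated quantity only up to an inequality ($\le$), which is the direction you need anyway and is precisely what makes the bound uniform over $G_t\in\mathcal{G}_t$.
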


The first term represents the bias of the SNIS, which has a convergence rate of $O(\frac{1}{N_2})$ (Theorem 2.1 in \citet{agapiou2017importance}, Theorem \ref{thm:SNISbias} in Appendix \ref{app:AppendixA}).
The second term reflects the randomness inherent in the stochastically weighted empirical process of the SNIS estimator. Ensuring a sharp convergence rate for this term requires additional techniques and concentration conditions (see Lemma \ref{lem:lip} in Appendix \ref{app:AppendixA}).
Specifically, the following assumptions are commonly used in related work \citep[e.g.,][]{liang2021well,oko2023diffusion,magazinov2022concentration,saumard2014log}.

\begin{assumption}[Smoothness]\label{ass:4} Both $f(x)$ and $w_t(x)$ are differentiable for all $f\in\mathcal{F}_D$ and $t=1,\dots,T$.
Moreover, they are $\beta$-Lipschitz, i.e., $\|\nabla f(x)\|_2\vee\|\nabla w_t(x)\|\leq\beta$.
\end{assumption}

\begin{assumption}[Concavity]\label{ass:5} For all $t=1,\dots,T$, the distribution $P_t$ is strongly log-concave with parameter $\gamma_t>0$.
\end{assumption}

Note that Assumptions 4 and 5 are not necessary for the convergence of our method. However, they play a crucial role to ensure a faster exponential rate of $O(\exp(-N_2))$. Without these assumptions, the rate reduces to $O(\frac{1}{N_2})$.
For further details, please refer to Theorems \ref{thm:bound2} and \ref{thm:w/oconvex} in Appendix \ref{app:AppendixA}.
With these assumptions and the decomposition of the excess risk, we derive the following nonasymptotic upper bound for our proposed method.

\begin{theorem}[Upper bound on the excess risk]\label{thm:simplefinalupperbound} Assume Assumptions \ref{ass:1}--\ref{ass:5} hold. Let $\mathscr{N}_k(\varepsilon,\cdot)$ denote the $L_k$ $\varepsilon$-covering number and $\mathscr{N}_k(\varepsilon,\cdot,X^{1:n})$ denote the empirical $l_k$ $\varepsilon$-covering number with the empirical $l_k$ metric.
Then
\begin{equation}\label{eq:directPACbound}\begin{aligned}
  &\mathbb{P}\left(\sup_{G_t\in\mathcal{G}_t}\left|\mathrm{MMD}(G_{t\#}\hat{P}_t\|Q_t)-\widehat{\mathrm{MMD}}(G_{t\#}\hat{P}_t\|Q_t)\right|>\varepsilon+\frac{12B(\chi^2(Q_t\|P_t)+1)}{N_2}\right)\\
  \leq\,& 2\mathscr{N}_\infty\left(\frac{\varepsilon}{6},\mathcal{F}_D\right)\exp\left(-\frac{\gamma_tN_2(\mathbb{E}w_t)^2\varepsilon^2}{5760B^2\beta^2}\right)+2\exp\left(-\frac{N_2(\mathbb{E}w_t)^2}{2B^2}\right)\\
  &\qquad\qquad\qquad\qquad\qquad+8\mathscr{N}_\infty\left(\frac{\varepsilon}{32},\mathcal{F}_D\right)\mathbb{E}_{X\sim\hat P_t}\mathscr{N}_1\left(\frac{\varepsilon}{32\beta},\mathcal{G}_t,X^{1:N_1}\right)\exp\left(-\frac{N_1\varepsilon^2}{2048B^2}\right),\end{aligned}
  \end{equation}
  where $\chi^2(Q_t\|P_t)=\int(\frac{\text{d}Q_t}{\text{d}P_t}-1)^2\text{d}P_t$ is the $\chi^2$ divergence between $Q_t$ and $P_t$. As a consequence, for any $\delta>0$, there exist $N_1(\delta)$ and $N_2(\delta)$, such that when $N_1\geq N_1(\delta),N_2\geq N_2(\delta)$, with probability at least $1-\delta$, the following inequality holds simultaneously and uniformly for all $\hat Q_t=G_{t\#}\hat P_t$ over $G_t\in\mathcal{G}_{t}$:
  \begin{equation}\label{eq:PACbound}\begin{aligned}
    &\mathrm{MMD}(\hat G_{t\#}\hat P_t\|Q_t)\leq \inf_{G_t\in\mathcal{G}_t}\mathrm{MMD}(G_{t\#}\hat P_t\|Q_t)+\frac{24B(\chi^2(Q_t\|P_t)+1)}{N_2}\\
    &\quad+\left(\frac{128B\beta}{\sqrt{\gamma_t}(\EE w_t)}\sqrt{\frac{\log\mathscr{N}_\infty(\frac{\varepsilon(\delta)}{6},\mathcal{F}_D)}{N_2}}+128B\sqrt{\frac{\log\mathscr{N}_\infty(\frac{\varepsilon(\delta)}{32},\mathcal{F}_D)+\log\EE\mathscr{N}_1(\frac{\varepsilon(\delta)}{32\beta},\mathcal{G}_t)}{N_1}}\right)\sqrt{\log\frac{3}{\delta}}.\end{aligned}
    \end{equation}
\end{theorem}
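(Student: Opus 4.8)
The plan is to prove the probabilistic bound \eqref{eq:directPACbound} by controlling each of the three summands in the decomposition \eqref{eq:decomposition} of Theorem \ref{thm:boundrisk} separately, and then to derive the PAC-style consequence \eqref{eq:PACbound} via a standard ``basic inequality'' argument for empirical-risk minimizers. For the first term (the SNIS bias), I would invoke Theorem \ref{thm:SNISbias} in Appendix \ref{app:AppendixA} (the $\chi^2$-based version of Theorem 2.1 in \citet{agapiou2017importance}), which under Assumption \ref{ass:2} gives a deterministic bound of order $B(\chi^2(Q_t\|P_t)+1)/N_2$; tracking constants carefully should yield the $12B(\chi^2(Q_t\|P_t)+1)/N_2$ offset that appears inside the probability, and the factor $2$ when doubled for the later two-sided use gives the $24B(\cdots)/N_2$ in \eqref{eq:PACbound}. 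This term contributes no randomness, so it simply shifts the threshold.

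For the second term — the stochastically weighted empirical process of the SNIS estimator — I would first linearize the self-normalization: writing the normalized weights as $w_t(x_j')/\sum_j w_t(x_j')$, a Taylor expansion around $\EE w_t$ separates the fluctuation of the numerator from that of the denominator, with the denominator fluctuation controlled by a one-dimensional Hoeffding/Bernstein bound (this is the source of the clean term $2\exp(-N_2(\EE w_t)^2/(2B^2))$, coming from the event that $\frac{1}{N_2}\sum_j w_t(x_j')$ stays near $\EE w_t$). On that good event, the remaining numerator process $f\mapsto \frac{1}{N_2}\sum_j w_t(x_j')f(x_j') - \EE[w_t f]$ is a bounded empirical process indexed by $\mathcal{F}_D$; under Assumptions \ref{ass:4}–\ref{ass:5} (Lipschitzness and strong log-concavity of $P_t$, which gives a log-Sobolev/transportation inequality) one gets Gaussian concentration with variance proxy $\sim B^2\beta^2/(\gamma_t N_2)$, and a union bound over an $L_\infty$ $\varepsilon/6$-net of $\mathcal{F}_D$ converts the pointwise bound into the uniform one, producing $2\mathscr{N}_\infty(\varepsilon/6,\mathcal{F}_D)\exp(-\gamma_t N_2(\EE w_t)^2\varepsilon^2/(5760 B^2\beta^2))$. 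This is essentially the content of Lemma \ref{lem:lip} and Theorem \ref{thm:bound2}; I would cite those and just assemble the pieces.

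For the third term — $\sup_{h\in\mathcal{F}_D\circ\mathcal{G}_t}|\frac{1}{N_1}\sum_i h(x_i)-\EE_{\hat P_t}h|$ — this is a plain i.i.d.\ bounded empirical process over the composed class $\mathcal{F}_D\circ\mathcal{G}_t$. I would bound its covering number via $\mathscr{N}_\infty(\varepsilon,\mathcal{F}_D\circ\mathcal{G}_t,X^{1:N_1}) \le \mathscr{N}_\infty(\varepsilon/2,\mathcal{F}_D)\,\mathscr{N}_1(\varepsilon/(2\beta),\mathcal{G}_t,X^{1:N_1})$ using the $\beta$-Lipschitzness of $f\in\mathcal{F}_D$ from Assumption \ref{ass:4} (so a change in $G_t(x)$ of size $\delta$ changes $f(G_t(x))$ by at most $\beta\delta$), then apply a symmetrization + bounded-differences (McDiarmid) argument followed by a union bound over the net, taking expectation over $X^{1:N_1}$ to replace the random empirical covering number by $\EE\mathscr{N}_1$. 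Choosing the resolution $\varepsilon/32$ and tracking the Hoeffding constant $B$ produces $8\mathscr{N}_\infty(\varepsilon/32,\mathcal{F}_D)\,\EE\mathscr{N}_1(\varepsilon/(32\beta),\mathcal{G}_t,X^{1:N_1})\exp(-N_1\varepsilon^2/(2048 B^2))$. Summing the three tail bounds and using $\widehat{\mathrm{MMD}}(\hat G_{t\#}\hat P_t\|Q_t)\le \widehat{\mathrm{MMD}}(G_{t\#}\hat P_t\|Q_t)$ for the minimizer $\hat G_t$, together with a two-sided application of \eqref{eq:directPACbound}, gives \eqref{eq:PACbound} after inverting each exponential tail at level $\delta/3$ and using $\sqrt{a}+\sqrt{b}\ge\sqrt{a+b}$ to merge logarithms; the threshold $\varepsilon(\delta)$ is then whatever net resolution makes the covering-number prefactors subdominant.

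The main obstacle I anticipate is the second term: handling the self-normalization rigorously while keeping the \emph{uniform}-over-$\mathcal{F}_D$ structure and the sharp exponential (rather than polynomial) rate. One has to be careful that the linearization error from the Taylor expansion is itself controlled uniformly in $f$ and does not degrade the rate, and that the concentration inequality invoked genuinely applies to $\mathcal{F}_D$-indexed weighted sums under strong log-concavity of $P_t$ alone (not of the reweighted $Q_t$). This is exactly why Assumptions \ref{ass:4}–\ref{ass:5} are singled out as sufficient-but-not-necessary in the text, and why the fallback rate without them is only $O(1/N_2)$; I would lean heavily on Lemma \ref{lem:lip} and Theorems \ref{thm:bound2}/\ref{thm:w/oconvex} in the appendix for this step rather than reproving the concentration from scratch. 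The first and third terms, by contrast, are routine: a cited bias bound and a textbook covering-number/symmetrization argument, respectively.
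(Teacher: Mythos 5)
Your proposal follows essentially the same route as the paper's proof: the same three-term decomposition from Theorem~\ref{thm:boundrisk}, the SNIS bias bound of Theorem~\ref{thm:SNISbias}, the Lipschitz-plus-log-concavity concentration of Lemma~\ref{lem:lip} and Theorem~\ref{thm:bound2} (which you rightly defer to, rather than re-deriving via a Taylor linearization of the self-normalized statistic --- the paper instead bounds the gradient of the full ratio directly on the concentration event), the uniform law of large numbers with the covering number of $\mathcal{F}_D\circ\mathcal{G}_t$ split by $\beta$-Lipschitzness, and the standard ERM basic inequality with $\varepsilon(\delta)$ obtained by inverting the tails at level $\delta/3$. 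The only slip is your composition bound, which should be stated for the empirical $l_1$ covering number of $\mathcal{F}_D\circ\mathcal{G}_t$ rather than the empirical $L_\infty$ one (an $l_1$ net of $\mathcal{G}_t$ cannot control the pointwise sup of the composition), but since the uniform law of large numbers only requires the empirical $l_1$ covering number this does not affect the argument.
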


The term $\mathrm{MMD}(\hat G_{t\#}\hat P_t\|Q_t)$ represents the loss of our empirical estimate defined in \eqref{eq:obj} and the term $\inf_{G_t\in\mathcal{G}_t}\mathrm{MMD}(G_{t\#}\hat P_t\|Q_t)$ represents the approximation error. In fact, $\mathrm{MMD}(\hat G_{t\#}\hat P_t\|Q_t)-\inf_{G_t\in\mathcal{G}_t}\mathrm{MMD}(G_{t\#}\hat P_t\|Q_t)$ is the statistical error. The other three terms appearing on the right-hand side of \eqref{eq:PACbound} correspond to the three terms in the decomposition \eqref{eq:decomposition}. For the first term, the $\chi^2$ divergence reflects the level of weight decay and a larger value means a severer decay and thus a slower convergence rate.  For the second term, the factor $\frac{\beta}{\sqrt{\gamma_t}}$ means that stronger log-concavity and Lipschitzity help to accelerate the convergence.
This is reasonable since (1) a stronger log-concavity in $P_t$ implies that this distribution is relatively easier to estimate by finite samples, and (2) a stronger Lipschitzity in $w_t$ implies that the importance sampling in the analysis step is more effective. The factor $\EE w_t$ also acts as an indicator of the weight decay.
For the third term, it is a standard rate with a parameter $\beta$. The law of convergence is the same as before: the larger value $\beta$ takes, the slower rate the model will converge at.

Concretely, we can consider the case that $Q_t$ and $\mathcal{F}_D$ both refer to the Sobolev space $W^{\alpha,2}(1)$.
Let $N_1$ go to infinity and take expectations on $X^{'1:N_2}$.
Then the term $\mathscr{N}_\infty(\mathcal{F}_D)$ can be replaced by its Dudley entropy integral, and the convergence rate would recover the vanilla GANs' rate $\displaystyle N_2^{-\frac{\alpha}{d}}\vee\frac{\log N_2}{\sqrt{N_2}}$ (Lemma 26 and Theorem 9 in \citet{liang2021well}).
This is because the bias of the SNIS estimator goes to zero as $N_2$ goes to infinity.
In our experiments, we always set a relatively large $N_2$, say 10000, to mitigate the negative effects caused by the SNIS bias and weight decay.

Up to now, we have shown the bound on the excess risk in the analysis step. Next, we generalize our result to the forecast step.
\begin{proposition}[An upper bound on the forecast error]\label{prop:forecastrisk}
  Assume Assumptions \ref{ass:1}--\ref{ass:4} hold. If $\mathcal{M}_t$ is $\gamma$-Lipschitz, then 
  \begin{equation}
    \mathrm{MMD}(\hat P_{t+1}\|P_{t+1})=\mathrm{MMD}(\mathcal{M}_{t\#}\hat Q_t*p_{\eta_{t+1}}\|\mathcal{M}_{t\#}Q_t*p_{\eta_{t+1}})\leq\beta\gamma W_2(\hat Q_t\|Q_t),
  \end{equation} where ``$*$" denotes the convolution operator.
\end{proposition}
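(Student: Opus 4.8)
The plan is to reduce the forecast-step MMD to a Wasserstein-$2$ distance using the Lipschitz structure of the discriminator class, and then propagate that Wasserstein bound through the two operations that define the forecast map: pushforward by $\mathcal{M}_t$ and convolution with $p_{\eta_{t+1}}$.

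First I would observe that, by Assumption~\ref{ass:4}, every $f\in\mathcal{F}_D$ is $\beta$-Lipschitz, so that for any probability measures $\mu,\nu$ on $\mathbb{R}^d$ with finite second moments,
\[
  \mathrm{MMD}(\mu\|\nu)=\sup_{f\in\mathcal{F}_D}\left|\mathbb{E}_{\mu}f-\mathbb{E}_{\nu}f\right|\le\beta\sup_{\|g\|_{\mathrm{Lip}}\le 1}\left|\mathbb{E}_{\mu}g-\mathbb{E}_{\nu}g\right|=\beta\,W_1(\mu,\nu)\le\beta\,W_2(\mu,\nu),
\]
where the middle equality is Kantorovich--Rubinstein duality and the last step is $W_1\le W_2$. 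The state equation $x_{t+1}=\mathcal{M}_t(x_t)+\eta_{t+1}$ in \eqref{eq:SSM} identifies $P_{t+1}=\mathcal{M}_{t\#}Q_t*p_{\eta_{t+1}}$ and $\hat P_{t+1}=\mathcal{M}_{t\#}\hat Q_t*p_{\eta_{t+1}}$, so applying the display above reduces the claim to
\[
  W_2\!\left(\mathcal{M}_{t\#}\hat Q_t*p_{\eta_{t+1}},\ \mathcal{M}_{t\#}Q_t*p_{\eta_{t+1}}\right)\le\gamma\,W_2(\hat Q_t,Q_t).
\]

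Next I would prove two elementary coupling facts. (i) Convolution with a fixed measure is non-expansive for $W_2$: taking an optimal coupling $(X,Y)$ of $\mathcal{M}_{t\#}\hat Q_t$ and $\mathcal{M}_{t\#}Q_t$ and an independent draw $Z\sim p_{\eta_{t+1}}$, the pair $(X+Z,Y+Z)$ couples the two convolutions and $\mathbb{E}\|(X+Z)-(Y+Z)\|^2=\mathbb{E}\|X-Y\|^2$. (ii) A $\gamma$-Lipschitz pushforward contracts $W_2$ by a factor $\gamma$: for an optimal coupling $(U,V)$ of $\hat Q_t$ and $Q_t$, the pair $(\mathcal{M}_t(U),\mathcal{M}_t(V))$ couples $\mathcal{M}_{t\#}\hat Q_t$ and $\mathcal{M}_{t\#}Q_t$ with $\mathbb{E}\|\mathcal{M}_t(U)-\mathcal{M}_t(V)\|^2\le\gamma^2\,\mathbb{E}\|U-V\|^2=\gamma^2\,W_2^2(\hat Q_t,Q_t)$. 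Chaining (ii) then (i) gives the displayed Wasserstein inequality, and combining it with the first paragraph yields $\mathrm{MMD}(\hat P_{t+1}\|P_{t+1})\le\beta\gamma\,W_2(\hat Q_t,Q_t)$.

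I do not expect a substantive obstacle; the only points needing a line of care are (a) the finiteness of the Wasserstein distances, so that the coupling manipulations and the duality are legitimate — this follows from Assumption~\ref{ass:3} together with finite second moments of $\hat Q_t$ and $Q_t$, and in any case the bound is vacuous when its right-hand side is infinite — and (b) checking that the uniform Lipschitz constant of $\mathcal{F}_D$ is exactly the $\beta$ of Assumption~\ref{ass:4}, so that no extra constant is introduced. The coupling arguments (i) and (ii) are the only real content of the proof.
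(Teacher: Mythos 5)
Your proof is correct, and it rests on the same two ingredients as the paper's --- an optimal coupling of $\hat Q_t$ and $Q_t$, the $\beta$-Lipschitzness of the discriminators from Assumption~\ref{ass:4}, and the $\gamma$-Lipschitzness of $\mathcal{M}_t$ --- but it is organized differently. You factor the argument through the forecast-level Wasserstein distance: first the generic comparison $\mathrm{MMD}(\mu\|\nu)\le\beta W_1(\mu,\nu)\le\beta W_2(\mu,\nu)$ via Kantorovich--Rubinstein duality, then the two coupling lemmas showing that convolution with $p_{\eta_{t+1}}$ is $W_2$-non-expansive and that a $\gamma$-Lipschitz pushforward contracts $W_2$ by $\gamma$, giving $W_2(\hat P_{t+1},P_{t+1})\le\gamma W_2(\hat Q_t,Q_t)$. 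The paper instead works entirely inside the MMD supremum: it takes the optimal coupling of $\hat Q_t$ and $Q_t$, feeds both marginals the \emph{same} noise $\eta_{t+1}$, writes $f(\mathcal{M}_t(X)+\eta_{t+1})$ as a perturbation of $f(\mathcal{M}_t(Y)+\eta_{t+1})$ by $\mathcal{M}_t(X)-\mathcal{M}_t(Y)$, and applies Lipschitzness of $f$, then of $\mathcal{M}_t$, then Cauchy--Schwarz to reach $W_2$. Your modular route buys the intermediate statement $W_2(\hat P_{t+1},P_{t+1})\le\gamma W_2(\hat Q_t,Q_t)$ as a standalone forecast-error bound in $W_2$ (and two reusable lemmas), at the cost of invoking $W_1$ duality; the paper's single-chain coupling is slightly more self-contained, needing only the pointwise Lipschitz bound on each $f\in\mathcal{F}_D$. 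Your caveats (finite second moments so the couplings and duality are legitimate, and that the Lipschitz constant is exactly the $\beta$ of Assumption~\ref{ass:4}) are the right ones and are covered by Assumptions~\ref{ass:3} and~\ref{ass:4}.
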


A straightforward condition to ensure that $\mathcal{M}_t$ is $\gamma$-Lipschitz is to set $\mathcal{M}_t\equiv\text{id}$, as considered in \citet{lange2024bellman}.
Here, we extend this condition to a more general case under the Lipschitz continuity assumption. Notably, the MMD metric may be weaker than the $W_2$ metric, especially when the test function space is relatively small.
Therefore, with assumptions \ref{ass:1}-\ref{ass:5}, we cannot ensure the convergence of $W_2(\hat Q_t\|Q_t)$.
Nevertheless, we establish an alternative bound for our kernel ATPF, which is detailed below.

\begin{proposition}[An alternative upper bound on the forecast error]\label{prop:forecastriskkernel}
  Assume Assumptions \ref{ass:1}--\ref{ass:4} hold and we adopt the unit ball of the RKHS for $\mathcal{F}_D$ as described in Section \ref{subsec:ker}. If the contraction condition holds, i.e., $\|\mathcal{F}_D\circ\mathcal{M}_t\|_{\mathcal{H}}\leq \Omega\|\mathcal{F}_D\|_{\mathcal{H}}$ with an absolute constant $\Omega$, then
  \begin{equation}
    \mathrm{MMD}(\mathcal{M}_{t\#}\hat Q_t*p_{\eta_{t+1}}\|\mathcal{M}_{t\#}Q_t*p_{\eta_{t+1}})\leq\Omega\cdot\mathrm{MMD}(\hat Q_t\|Q_t)+2\beta\cdot\trace(\Var(\eta_{t+1}))^{\frac{1}{2}}.
  \end{equation}
\end{proposition}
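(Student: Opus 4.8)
Since the kernel $K$ is fixed, $\mathrm{MMD}(\cdot\|\cdot)$ built from the unit ball of $\mathcal{H}$ is a pseudometric on probability measures, so the triangle inequality is available. The plan is to interpolate between the two forecast laws through the noise-free pushforwards $\mathcal{M}_{t\#}\hat Q_t$ and $\mathcal{M}_{t\#}Q_t$:
\begin{equation*}
  \mathrm{MMD}(\mathcal{M}_{t\#}\hat Q_t*p_{\eta_{t+1}}\|\mathcal{M}_{t\#}Q_t*p_{\eta_{t+1}})\le (\mathrm{I})+(\mathrm{II})+(\mathrm{III}),
\end{equation*}
with $(\mathrm{I})=\mathrm{MMD}(\mathcal{M}_{t\#}\hat Q_t*p_{\eta_{t+1}}\|\mathcal{M}_{t\#}\hat Q_t)$, $(\mathrm{II})=\mathrm{MMD}(\mathcal{M}_{t\#}\hat Q_t\|\mathcal{M}_{t\#}Q_t)$, and $(\mathrm{III})=\mathrm{MMD}(\mathcal{M}_{t\#}Q_t\|\mathcal{M}_{t\#}Q_t*p_{\eta_{t+1}})$. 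I would then control the two ``noise'' terms $(\mathrm{I})$, $(\mathrm{III})$ and the ``signal'' term $(\mathrm{II})$ separately; summing the estimates gives the claimed bound.

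\textbf{The two noise terms.} For $(\mathrm{I})$ I would write the convolution as an independent additive perturbation (using the noise independence assumptions from Section~\ref{subsec:SSMs}): for any $f$ in the unit ball of $\mathcal{H}$,
\begin{equation*}
  \left|\underset{z\sim\mathcal{M}_{t\#}\hat Q_t,\,\eta_{t+1}}{\EE}f(z+\eta_{t+1})-\underset{z\sim\mathcal{M}_{t\#}\hat Q_t}{\EE}f(z)\right|\le\underset{z,\,\eta_{t+1}}{\EE}\left|f(z+\eta_{t+1})-f(z)\right|\le\beta\,\EE\|\eta_{t+1}\|_2,
\end{equation*}
because every $f\in\mathcal{F}_D$ is $\beta$-Lipschitz by Assumption~\ref{ass:4}. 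Since $\EE\eta_{t+1}=0$, Jensen's inequality gives $\EE\|\eta_{t+1}\|_2\le(\EE\|\eta_{t+1}\|_2^2)^{1/2}=\trace(\Var(\eta_{t+1}))^{1/2}$, which is finite by Assumption~\ref{ass:3}. Taking the supremum over $f$ yields $(\mathrm{I})\le\beta\,\trace(\Var(\eta_{t+1}))^{1/2}$, and the same argument with $\hat Q_t$ replaced by $Q_t$ gives $(\mathrm{III})\le\beta\,\trace(\Var(\eta_{t+1}))^{1/2}$, for a combined contribution of $2\beta\,\trace(\Var(\eta_{t+1}))^{1/2}$.

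\textbf{The signal term.} For $(\mathrm{II})$ I would use the change-of-variables identity $\underset{z\sim\mathcal{M}_{t\#}\mu}{\EE}f(z)=\underset{x\sim\mu}{\EE}(f\circ\mathcal{M}_t)(x)$ for $\mu\in\{\hat Q_t,Q_t\}$, and then invoke the contraction hypothesis: for $f$ in the unit ball, $f\circ\mathcal{M}_t\in\mathcal{F}_D\circ\mathcal{M}_t$ with $\|f\circ\mathcal{M}_t\|_{\mathcal{H}}\le\Omega\|f\|_{\mathcal{H}}\le\Omega$, so $\Omega^{-1}(f\circ\mathcal{M}_t)$ is again in the unit ball of $\mathcal{H}$. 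Hence
\begin{equation*}
  \left|\underset{x\sim\hat Q_t}{\EE}(f\circ\mathcal{M}_t)(x)-\underset{x\sim Q_t}{\EE}(f\circ\mathcal{M}_t)(x)\right|\le\Omega\,\mathrm{MMD}(\hat Q_t\|Q_t),
\end{equation*}
and taking the supremum over the unit ball gives $(\mathrm{II})\le\Omega\,\mathrm{MMD}(\hat Q_t\|Q_t)$. Adding the three estimates completes the argument.

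\textbf{Where the difficulty lies.} The computation is short, so the substantive points are conceptual. First, one must ensure $f\circ\mathcal{M}_t$ genuinely lies in $\mathcal{H}$ (not merely that it is bounded) so that, after rescaling by $\Omega^{-1}$, it is an admissible test function---this is precisely what the contraction condition $\|\mathcal{F}_D\circ\mathcal{M}_t\|_{\mathcal{H}}\le\Omega\|\mathcal{F}_D\|_{\mathcal{H}}$ supplies. Second, one must justify that the convolution with $p_{\eta_{t+1}}$ behaves as an independent additive perturbation so that Fubini lets the expectation over $\eta_{t+1}$ factor out, which rests on the standing independence of the noises together with the finite-second-moment Assumption~\ref{ass:3}. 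I would also note that the factor $2$ is an artefact of handling the noise on the $\hat Q_t$ and $Q_t$ sides separately; sharpening it would require extra structure on the kernel (e.g.\ translation invariance), which is not assumed here.
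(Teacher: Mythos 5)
Your proposal is correct and follows essentially the same route as the paper: the paper likewise peels off the two noise perturbations via the $\beta$-Lipschitz property of the test functions (bounding $\EE\|\eta_{t+1}\|_2$ by $\trace(\Var(\eta_{t+1}))^{1/2}$ through Jensen/Cauchy--Schwarz) and handles the remaining term by setting $g=f\circ\mathcal{M}_t$ and rescaling by $\Omega$ under the contraction condition. The only difference is cosmetic — you organize the split as a three-term MMD triangle inequality, while the paper carries out the same estimates inside a single supremum chain.
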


The additional term $\trace(\Var(\eta_{t+1}))^{\frac{1}{2}}$ is due to the evolution noise, which cannot be neglected. Similar results can also be found in Proposition 2 of \citet{lange2024bellman}.

\subsection{Convergence Rate Analysis}
In this section, we derive upper bounds for the covering numbers involved in Theorem \ref{thm:simplefinalupperbound}, which allows us to establish convergence rates of our method.
When $\mathcal{F}_D$ denotes an RKHS, Lemma D.2 in \citet{yang2020function} provides two absolute constants $C_3$ and $C_4$ such that
\begin{equation}\label{eq:kernelcovering}\log\mathscr{N}_\infty\left(\varepsilon,\mathcal{F}_D\right)\leq \begin{cases}
  C_3\gamma[\log(B/\varepsilon)+C_4]&(\gamma\text{-finite spectrum})\\C_3[\log(B/\varepsilon)+C_4]^{1+1/\gamma}&(\gamma\text{-exponential decay})\\C_3(B/\varepsilon)^{2/[\gamma(1-2\tau)-1]}[\log(B/\varepsilon)+C_4]&(\gamma\text{-polynomial decay})
\end{cases}\end{equation}
where these three cases correspond to different eigenvalue decay rates.
In this paper, we mainly focus on the linear kernel $K(x,y)=x^Ty+c$, which corresponds to the first case with $\gamma=d$, and the Gaussian kernel $\displaystyle K(x,y)=\exp\left(-\frac{\|x-y\|_2^2}{2w^2}\right)$, which corresponds to the second case with $\gamma=\frac{1}{d}$.

We now focus on $\log\mathcal{N}_\infty(\varepsilon,\mathcal{G}_t)$.
Specifically, here we adopt the neural network parameterization method in \citet{suzuki2019adaptivity}, where $\mathcal{G}_t=\Phi(L,W,S,R):=\{(A^{(L)}\text{ReLU}(\cdot)+b^{(L)})\circ\dots\circ(A^{(1)}x+b^{(1)})| A^{(i)}\in\mathbb{R}^{W_i\times W_{i+1}},b^{(i)}\in \mathbb{R}^{W_{i+1}},\sum_{i=1}^L(\|A^{(i)}\|_0+\|b^{(i)}\|_0)\leq S,\max_i\|A^{(i)}\|_\infty\vee \|b^{(i)}\|_\infty\leq R\}$.
Note that results for this fully-connected neural networks can be easily translated into other architectures (e.g., \citet{petersen2018optimal,ramesh2022hierarchical}).
By Lemma 3 in \citet{suzuki2019adaptivity}, there exists an absolute constant $C_5$ such that
\begin{equation}\label{eq:suzuki}
\log\mathscr{N}_\infty(\varepsilon,\mathcal{G}_t,\|\cdot\|_{L_\infty[-M,M]^d})\leq 2C_5SL\log(\varepsilon^{-1}L\|W\|_\infty(R\vee 1)M).
\end{equation}
Note that this bound is only valid for distributions with compact support.
It is necessary to extend this result to sub-Gaussian cases since the noises involved in SSMs are usually not bounded but sub-Gaussian.
We summarize it in the following corollary.

\begin{corollary}[Bound on log-covering number in sub-Gaussian cases]\label{cor:log-covering}
  If we assume that for each $t$, all mentioned distributions are sub-Gaussian with uniform sub-Gaussian parameter $\sigma_t$, then there exists an absolute constant $C_7$ such that the following inequality holds with probability at least $\displaystyle 1-\exp\left(-\frac{N_1\varepsilon^2}{2048 B^2}\right)$:
  \begin{equation}\label{eq:sub-gaussin-covering}
    \log\mathscr{N}_1\left(\frac{\varepsilon}{32\beta},\mathcal{G}_t,X^{1:N_1}\right)
    \leq 2C_7SL\log\left(\beta L\|W\|_\infty(R\vee 1)\sigma_t\sqrt{\frac{\log(2dN_1)}{\varepsilon^2}+\frac{N_1}{2048B^2}}\right).
  \end{equation}
\end{corollary}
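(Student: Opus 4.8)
The plan is a truncation argument that reduces the sub-Gaussian case to the compactly-supported bound \eqref{eq:suzuki}, paying for the truncation with exactly the failure probability that appears in the statement. First I would fix a truncation radius $M=M(\varepsilon,N_1,\sigma_t)$ to be chosen and define the good event $E:=\{\|X_i\|_\infty\le M\text{ for all }i=1,\dots,N_1\}$. On $E$ every design point lies in the cube $[-M,M]^d$, so for any $G,G'\in\mathcal{G}_t$ the empirical $l_1$ distance is dominated by the sup-norm over that cube, $\frac{1}{N_1}\sum_{i=1}^{N_1}\|G(X_i)-G'(X_i)\|\le\|G-G'\|_{L_\infty[-M,M]^d}$; hence any $\frac{\varepsilon}{32\beta}$-cover of $\mathcal{G}_t$ in $\|\cdot\|_{L_\infty[-M,M]^d}$ is also a $\frac{\varepsilon}{32\beta}$-cover in the empirical $l_1$ metric, giving $\mathscr{N}_1(\frac{\varepsilon}{32\beta},\mathcal{G}_t,X^{1:N_1})\le\mathscr{N}_\infty(\frac{\varepsilon}{32\beta},\mathcal{G}_t,\|\cdot\|_{L_\infty[-M,M]^d})$ on $E$ (vector-valued networks contribute at most an extra dimension factor, absorbed into the constant).

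Next I would control $\mathbb{P}(E^c)$. Applying the sub-Gaussian tail bound coordinatewise and taking a union bound over the $d$ coordinates and $N_1$ samples, $\mathbb{P}(E^c)\le 2dN_1\exp(-M^2/(2\sigma_t^2))$. I then pick $M$ so that this equals the prescribed failure probability, i.e. $\frac{M^2}{2\sigma_t^2}=\log(2dN_1)+\frac{N_1\varepsilon^2}{2048B^2}$, which gives $M=\sqrt{2}\,\sigma_t\varepsilon\sqrt{\frac{\log(2dN_1)}{\varepsilon^2}+\frac{N_1}{2048B^2}}$ and hence $\mathbb{P}(E)\ge 1-\exp(-N_1\varepsilon^2/(2048B^2))$. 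This level is not arbitrary: it matches the last exponential term of \eqref{eq:directPACbound}, so that in the proof of Theorem \ref{thm:simplefinalupperbound} the covering-number event and the empirical-process deviation event can be merged into a single good event.

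Finally, on $E$ I would invoke \eqref{eq:suzuki} with covering radius $\frac{\varepsilon}{32\beta}$ and cube half-width $M$:
\[
  \log\mathscr{N}_1\!\left(\tfrac{\varepsilon}{32\beta},\mathcal{G}_t,X^{1:N_1}\right)\le 2C_5SL\log\!\left(\tfrac{32\beta}{\varepsilon}L\|W\|_\infty(R\vee 1)M\right).
\]
Substituting the chosen $M$ turns the argument of the logarithm into $32\sqrt{2}\,\beta L\|W\|_\infty(R\vee1)\sigma_t\sqrt{\frac{\log(2dN_1)}{\varepsilon^2}+\frac{N_1}{2048B^2}}$, i.e. a fixed multiple of the target argument; since $\log(cz)\le C\log z$ for an absolute constant $C$ whenever $z$ exceeds a fixed threshold (and if it does not, the claimed bound is vacuous because $\log\mathscr{N}_1\ge 0$), the factor $32\sqrt2$ is absorbed by enlarging $C_5$ to an absolute constant $C_7$, which is precisely \eqref{eq:sub-gaussin-covering}.

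I do not expect a genuine obstacle here: the argument is a textbook truncation-plus-union-bound. The two points that need care are (i) the empirical-$l_1$ versus $L_\infty[-M,M]^d$ comparison, including the harmless output-dimension factor and the fact that \eqref{eq:suzuki} is stated over a cube, and (ii) calibrating $M$ so that the residual probability is \emph{exactly} $\exp(-N_1\varepsilon^2/(2048B^2))$, which is what makes this corollary plug cleanly into the proof of Theorem \ref{thm:simplefinalupperbound}; the remaining constant bookkeeping (the $32\sqrt2$ inside the log) is routine.
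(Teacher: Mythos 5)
Your proposal is correct and mirrors the paper's own argument: the paper likewise truncates to the cube $[-M,M]^d$ with $M\asymp\sigma_t\sqrt{\log(2dN_1)+N_1\varepsilon^2/(2048B^2)}$, bounds the exception probability via a coordinatewise sub-Gaussian tail plus union bound so that it equals $\exp(-N_1\varepsilon^2/(2048B^2))$, dominates the empirical $l_1$ metric by the sup-norm on the cube, and then applies \eqref{eq:suzuki}, absorbing constants into $C_7$. Your explicit calibration of $M$ and the remark about the output-dimension factor are just slightly more detailed versions of the same bookkeeping.
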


Note that the growth rate of the log-covering number is $O(\log(N_1))$.
Thus, the convergence rate in sub-Gaussian cases will be slower by a factor $\log(N_1)$ than that in bounded cases.

After analyzing the covering number of these two spaces, we can now derive the convergence rate for our proposed kernel ATPF methods with some standard neural network approximation theorems \citep{suzuki2019adaptivity,chakraborty2024statistical}.
Following these works, we consider the Besov space $B_{p,q}^s(\Omega)$ (see Definition 4 in \citet{suzuki2019adaptivity}) and assume the following assumption.

\begin{assumption} \label{ass:6} There exists an oracle $G'_t\in B_{p,q}^s(\Omega)$ such that $G'_{t\#}\hat P_t=Q_t$.
\end{assumption}

Since high-probability bounds can be directly derived by plugging these log-covering number bounds into \eqref{eq:PACbound}, in the following we focus more on the entropy integral, which further leads to the bound on $\mathbb{E}\mathrm{MMD}(\hat G_{t\#}\hat{P}_t\|Q_t)$.
We observe that the Gaussian kernel may lack scalability in high-dimensional settings, as its approximation error is significantly affected by the curse of dimensionality.
Specifically, we can establish the following convergence rate.

\begin{proposition}[Convergence rate of the Gaussian kernel MMD]\label{prop:gaussconvergence} Assume Assumptions \ref{ass:1}--\ref{ass:6} hold and that all mentioned distributions are sub-Gaussian. Let $s>\frac{d}{p}$. If we adopt the unit ball of the Gaussian RKHS for $\mathcal{F}_D$ and parameterize $\mathcal{G}_t$ by the NN set $\Phi(L,W,S,R)$, then the following convergence rate holds: \begin{equation}\label{eq:MMDGaussianrate}
  \EE\mathrm{MMD}(\hat G_{t\#} \hat P_t\|Q_t)\lesssim(\log N_1)^{\frac{s}{2}\vee 1}N_1^{-\frac{s}{2(s+d)}}+N_2^{-\frac{1}{2}}.
\end{equation}
\end{proposition}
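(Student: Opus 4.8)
The plan is to combine the high-probability excess-risk bound of Theorem \ref{thm:simplefinalupperbound} with two ingredients: (i) the covering-number estimates for the Gaussian RKHS and the ReLU network class, and (ii) a neural-network approximation result controlling the approximation error $\inf_{G_t\in\mathcal{G}_t}\mathrm{MMD}(G_{t\#}\hat P_t\|Q_t)$ under Assumption \ref{ass:6}. First I would pass from the high-probability statement \eqref{eq:PACbound} to a bound in expectation by integrating the tail; the three stochastic terms on the right-hand side contribute, respectively, the SNIS-bias term of order $N_2^{-1}$ (dominated by the stated $N_2^{-1/2}$), the RKHS-discriminator fluctuation term, and the generator-class fluctuation term, and the deterministic approximation term $\inf_{G_t}\mathrm{MMD}$. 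For the Gaussian kernel we invoke \eqref{eq:kernelcovering} in the $\gamma$-exponential-decay case with $\gamma=1/d$, so $\log\mathscr{N}_\infty(\varepsilon,\mathcal{F}_D)\lesssim[\log(B/\varepsilon)+C_4]^{1+d}$, which is polylogarithmic in $1/\varepsilon$ and therefore contributes only logarithmic factors to the rate. For the network class I would use Corollary \ref{cor:log-covering}, giving $\log\mathscr{N}_1\lesssim SL\log(\cdots\sqrt{\log(2dN_1)/\varepsilon^2+N_1/(2048B^2)})$, i.e. $\log\mathscr{N}_1 = O\!\left(SL\log N_1\right)$ up to $\varepsilon$-dependence.

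The core of the argument is balancing the network size $(L,W,S,R)$ against the approximation error. Under Assumption \ref{ass:6}, $G'_t\in B^s_{p,q}(\Omega)$ with $G'_{t\#}\hat P_t=Q_t$, and since $s>d/p$ the Besov embedding places $G'_t$ in $C(\Omega)$, so the approximation theory of \citet{suzuki2019adaptivity} (and its sub-Gaussian extension à la \citet{chakraborty2024statistical}) yields a network in $\Phi(L,W,S,R)$ with $\|G_t-G'_t\|_{L^\infty}\lesssim S^{-s/d}$ (up to log factors), using $L=O(\log S)$, $\|W\|_\infty=O(S)$, $R=O(1)$ or polynomial in $S$. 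Since the Gaussian kernel $K$ is bounded and Lipschitz on bounded sets, $\mathrm{MMD}(G_{t\#}\hat P_t\|G'_{t\#}\hat P_t)\lesssim \|G_t-G'_t\|_{L^\infty}$ (plus a sub-Gaussian tail contribution that is $O((\log N_1)^{1/2})$-controlled), so the approximation term is of order $S^{-s/d}$ up to logs. Plugging the covering bounds into the statistical term gives a fluctuation of order $\sqrt{SL\log N_1 / N_1}\asymp \sqrt{S(\log N_1)^2/N_1}$. Equating approximation and stochastic contributions, $S^{-s/d}\asymp \sqrt{S/N_1}$, yields the optimal choice $S\asymp N_1^{d/(s+d)}$ (up to logs) and a rate of order $N_1^{-s/(2(s+d))}$, with the logarithmic factors aggregating into $(\log N_1)^{s/2\vee 1}$ — the exponent $s/2$ coming from the $[\log(1/\varepsilon)]^{1+d}$ RKHS entropy combined with the $\varepsilon\asymp N_1^{-s/(2(s+d))}$ scale, and the $\vee 1$ from the $\log N_1$ in the network covering number. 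Adding back the irreducible SNIS term $N_2^{-1/2}$ from the first component of the decomposition \eqref{eq:decomposition} gives exactly \eqref{eq:MMDGaussianrate}.

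The main obstacle I expect is the careful bookkeeping of the logarithmic factors: one must track how $\log\mathscr{N}_\infty(\varepsilon/6,\mathcal{F}_D)$ with the Gaussian kernel's $(1+d)$-power of $\log(1/\varepsilon)$ interacts with the $\varepsilon$-dependence of the network covering number and with the choice $\varepsilon = \varepsilon(\delta)\asymp$ (rate), and then verify that after integrating the $\delta$-tail the dominant logarithmic power is precisely $(\log N_1)^{s/2\vee 1}$ rather than something larger. A secondary technical point is justifying that the sub-Gaussian (non-compact) nature of the distributions only costs the extra $\log N_1$ factor already isolated in Corollary \ref{cor:log-covering} — this requires truncating to an $O(\sigma_t\sqrt{\log N_1})$-ball, bounding the contribution outside via sub-Gaussian tails, and checking the truncation does not degrade the approximation rate. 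Finally, one should confirm that the strong-log-concavity factor $\gamma_t$ and the Lipschitz constants $\beta,\Omega$ from Assumptions \ref{ass:4}--\ref{ass:5} enter only as constants (absorbed into $\lesssim$), which is immediate since they are fixed by the model and do not scale with $N_1,N_2$.
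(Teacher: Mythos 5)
Your proposal follows essentially the same route as the paper: decompose the error via Theorem~\ref{thm:simplefinalupperbound}, pass to expectation through a Dudley-type entropy integral, insert the Gaussian-RKHS covering bound (the $\gamma$-exponential-decay case of \eqref{eq:kernelcovering} with $\gamma=1/d$) and the sub-Gaussian network covering bound of Corollary~\ref{cor:log-covering}, control the approximation error through Suzuki's Besov approximation under Assumption~\ref{ass:6} with a sub-Gaussian truncation, and balance the network size against the $N_1$-fluctuation term, with the $N_2^{-1/2}$ contribution coming from the weighted empirical process over $\mathcal{F}_D$ (the SNIS bias itself is only $O(N_2^{-1})$, as you note earlier, so attributing $N_2^{-1/2}$ to the first component of the decomposition at the end is a slip of labeling only). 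Two bookkeeping points deserve correction. First, the factor $(\log N_1)^{s/2}$ does not arise from the $[\log(1/\varepsilon)]^{1+d}$ RKHS entropy combined with the scale $\varepsilon\asymp N_1^{-s/(2(s+d))}$: that entropy integrates over $\varepsilon\in(0,\tfrac12]$ to an $N_1$-independent constant. In the paper it comes from the approximation error under sub-Gaussianity, namely $\inf_{G_t}\mathrm{MMD}\lesssim Bd\exp(-H^2/\sigma_t^2)+\beta H^{s}N^{-s/d}$ obtained by restricting to $[-H,H]^d$, and optimizing $H\asymp\sqrt{\log N}$ produces $(\log N)^{s/2}N^{-s/d}$; your gloss of the tail as an $O((\log N_1)^{1/2})$ correction misses the $H^{s}$ dependence that generates the power $s/2$, so executed literally your sketch would not recover the claimed logarithmic exponent. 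Second, your balancing is internally inconsistent: $S^{-s/d}\asymp\sqrt{S/N_1}$ gives $S\asymp N_1^{d/(2s+d)}$, not $N_1^{d/(s+d)}$; with your choice $S\asymp N_1^{d/(s+d)}$ the stochastic term $\sqrt{S/N_1}\asymp N_1^{-s/(2(s+d))}$ simply dominates the approximation term, so the stated bound still follows (the paper instead balances its cruder stochastic bound $N\log(NN_1)/\sqrt{N_1}$ with $N\asymp N_1^{d/(2(s+d))}$, which makes both terms match the stated rate). Neither issue changes the final rate, but the truncation-optimization step is the ingredient you need to make the $(\log N_1)^{s/2\vee 1}$ factor come out as claimed.
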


Moreover, when the linear kernel is adopted, we have the following convergence rate for the posterior mean estimate.

\begin{proposition}[Convergence rate of the posterior mean]\label{prop:meanconvergence}
  Assume Assumptions \ref{ass:1}--\ref{ass:5} hold and that all mentioned distributions are sub-Gaussian. If we adopt the unit ball of the linear RKHS for $\mathcal{F}_D$ and parameterize $\mathcal{G}_t$ by the NN set $\Phi(L,W,S,R)$, the convergence rate of the posterior mean is\begin{equation}
    \|\EE_{X\sim\hat G_{t}\#\hat P_t}X-\EE_{Y\sim Q_t}Y\|_2\lesssim d\left(\frac{\log N_1}{\sqrt{N_1}}+\frac{1}{\sqrt{N_2}}\right)
  \end{equation}
  if $R\geq\|\EE_{Y\sim Q_t}Y\|_\infty$. Here, $\EE_{X\sim\hat G_{t}\#\hat P_t}X$ is the estimated posterior mean and $\EE_{Y\sim Q_t}Y$ is the true posterior mean.
\end{proposition}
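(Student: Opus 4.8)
The plan is to exploit the special structure of the linear kernel to identify the posterior-mean error with the MMD quantity that Theorem \ref{thm:simplefinalupperbound} already controls. First I would expand the kernel-MMD formula of Proposition \ref{prop:KMMD} for $K(x,y)=x^{T}y+c$: the additive constant $c$ cancels and, by independence of the two copies inside each expectation, every cross term collapses to an inner product of means, so that $\mathrm{MMD}^2(P\|Q)=\|\EE_{X\sim P}X-\EE_{Y\sim Q}Y\|_2^2$ for arbitrary $P,Q$ with finite second moments (which holds here by sub-Gaussianity). In particular the left-hand side of the claim is \emph{exactly} $\mathrm{MMD}(\hat G_{t\#}\hat P_t\|Q_t)$, so it suffices to bound the latter.

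Next I would invoke Theorem \ref{thm:simplefinalupperbound}. Its bound \eqref{eq:PACbound} decomposes $\mathrm{MMD}(\hat G_{t\#}\hat P_t\|Q_t)$ into an approximation error $\inf_{G_t\in\mathcal{G}_t}\mathrm{MMD}(G_{t\#}\hat P_t\|Q_t)$ plus three statistical terms. The key observation is that the approximation error vanishes: under the hypothesis $R\ge\|\EE_{Y\sim Q_t}Y\|_\infty$, the constant map $G_t\equiv\EE_{Y\sim Q_t}Y$ lies in $\Phi(L,W,S,R)$ (take all weight matrices equal to zero and $b^{(L)}=\EE_{Y\sim Q_t}Y$, which uses sparsity budget at most $d\le S$ and only needs $L\ge 1$), and this network pushes $\hat P_t$ to the Dirac mass at the true posterior mean, whose linear-kernel MMD to $Q_t$ is $0$ by the identity just derived. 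Hence $\mathrm{MMD}(\hat G_{t\#}\hat P_t\|Q_t)$ is controlled by the three statistical terms alone, and, incidentally, the generator network may be taken minimal ($S\asymp d$, $L$ constant).

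It then remains to evaluate those terms for the linear kernel. For the unit ball of the linear RKHS, \eqref{eq:kernelcovering} (the finite-spectrum case, with $\gamma=d$) gives $\log\mathscr{N}_\infty(\varepsilon,\mathcal{F}_D)\lesssim d\log(1/\varepsilon)$; for $\mathcal{G}_t=\Phi(L,W,S,R)$, Corollary \ref{cor:log-covering} gives $\log\mathscr{N}_1(\varepsilon/(32\beta),\mathcal{G}_t,X^{1:N_1})\lesssim d\log N_1$ after absorbing $S,L,\beta,\sigma_t$ and residual logarithms into the implicit constant. Since the linear test functions are unbounded on $\mathbb{R}^d$, on the effective $O(\sigma_t\sqrt d)$-radius support of the sub-Gaussian ensembles one has $\|\mathcal{F}_D\|_\infty\lesssim\sigma_t\sqrt d$ up to logs, so the constant $B$ of Assumption \ref{ass:2} contributes an extra $\sqrt d$; combined with the $\sqrt d$ coming from $\sqrt{\log\mathscr{N}_\infty}$, this upgrades the dimension prefactor from $\sqrt d$ to $d$. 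The first (SNIS-bias) term is $O(1/N_2)$, dominated by $1/\sqrt{N_2}$; the second term, using the exponential concentration for the weighted empirical process afforded by Assumptions \ref{ass:4}--\ref{ass:5} (with the factor $\beta/(\sqrt{\gamma_t}\,\EE w_t)$ absorbed into the constant), contributes $O(d/\sqrt{N_2})$ up to polylog; the third term contributes $O(d\log N_1/\sqrt{N_1})$. Finally, the $\delta$-indexed high-probability statement \eqref{eq:PACbound} is converted into the stated in-expectation bound by choosing the free radius $\varepsilon(\delta)$ of order $1/N_1$ (resp.\ $1/N_2$) and integrating the resulting sub-Gaussian/sub-exponential tails over $\delta$, exactly as in the discussion preceding the proposition; this yields $\|\EE_{X\sim\hat G_{t\#}\hat P_t}X-\EE_{Y\sim Q_t}Y\|_2\lesssim d(\log N_1/\sqrt{N_1}+1/\sqrt{N_2})$.

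The main obstacle I expect is the dimension bookkeeping: landing on exactly the prefactor $d$ rather than $\sqrt d$ or $d^{3/2}$ requires tracking carefully how the unbounded linear test functions are truncated to the $\sqrt d$-radius effective support of the sub-Gaussian particles, how that truncation feeds into the constant $B$ of Assumption \ref{ass:2}, and how it interacts with the covering-number exponent $\gamma=d$ in \eqref{eq:kernelcovering}; one must also verify that the truncation error itself is of smaller order than the claimed rate. A secondary technical point is producing the $\log N_1$ (rather than $\sqrt{\log N_1}$) factor, which emerges from balancing the free radius $\varepsilon(\delta)$ against the $\log(1/\varepsilon)$-type growth of both covering numbers — the standard ``parametric GAN rate'' phenomenon already referenced in the discussion after Theorem \ref{thm:simplefinalupperbound}.
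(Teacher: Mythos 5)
Your proposal is correct and follows essentially the same route as the paper: it uses the linear-kernel identity reducing MMD to the distance between means, kills the approximation error with the constant network $G_t\equiv\EE_{Y\sim Q_t}Y$ (admissible since $R\geq\|\EE_{Y\sim Q_t}Y\|_\infty$), and then plugs the finite-spectrum covering bound for the linear RKHS ($\gamma=d$) together with the sub-Gaussian NN covering bound of Corollary \ref{cor:log-covering} into the excess-risk bound and entropy integral, exactly as the paper does by mimicking the proof of Proposition \ref{prop:gaussconvergence}. The only cosmetic difference is that the paper additionally rewrites the empirical linear-kernel objective as an empirical mean-matching least-squares problem, while you invoke the population-level identity directly; the dimension bookkeeping you flag is handled no more explicitly in the paper than in your sketch.
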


\section{Experiments}
\label{sec:experi}
In this section, we compare our proposed ATPF with several data assimilation algorithms: EnKF \citep{katzfuss2020ensemble}, ETPF \citep{reich2013nonparametric}, NLEAF \citep{lei2011moment}, PF \citep{gordon1993novel}, MPF \citep{nakano2007merging} and PFGR \citep{xiong2006note}. Following \citet{lei2011moment}, we use NLEAF1 and NLEAF2 for Lorenz63 system and NLEAF1 and NLEAF1q for Lorenz96 system, where ``1'' stands for matching the first moment, ``2'' stands for matching the second moment and ``1q'' stands for quadratic regression.

We use the following settings for different approaches in our experiments.
For ATPF methods, we simply set a time-homogeneous regularization parameter $\lambda_t\equiv\lambda$ for each $t$.
For KATPF with Gaussian kernels, we set the bandwidth parameter $w$ to be the median of all $L_2$ distances between each pair of particles \citep{liu2016stein}.
For other existing methods, we just follow the standard settings used in previous works.
For PF series, we take the resampling step after each reweighting step. In addition, we adopt the inflation trick (see Appendix \ref{app:AppendixC} or \citet{lei2011moment}) for all filtering algorithms appearing in this section.
We use the likelihood of the data as a criterion to select the optimal inflation parameters.
All data sets and code are available at \url{https://github.com/WQgcx/ATPF.git}.

\subsection{Nonlinear Simulation}
In this section, we test the approximation accuracy of different methods on a nonlinear model where the EnKF method would fail to provide exact posterior estimate.
More specifically, we consider the following simple 1-dimension 2-stage state-space model:
\begin{equation}
    p(x_0)\sim\mathcal{N}(0,1),\quad p(x_1|x_0)\sim\mathcal{N}(x_0^2+\log(x_0^2+1),0.1^2),\quad p(y_t|x_t)\sim \mathcal{N}(x_t,1)\text{ for }t=0,1.
\end{equation}

Although the observation $y_t$ conditioned on the latent state $x_t$ follows a standard Gaussian distribution, the true posterior of $x_1$ is non-Gaussian due to the nonlinear evolution from $x_0$ to $x_1$. Mathematically speaking, this is because $p(x_1|y_0,y_1)\propto p(x_1|y_0)p(y_1|x_1)$; while $p(y_1|x_1)$ is Gaussian, $p(x_1|y_0)$ is not.
Now assume that we already have the observations $y_0=y_1=0$, our goal is to sample from the posterior $p(x_1|y_0,y_1)$ whose ground truth value can be obtained via numerical quadratures.

The regularization parameter $\lambda$ corresponding to the vanilla ATPF (i.e., use the alternative optimization algorithm), Gaussian KATPF and linear KATPF is set to be 0.001, 0.0 and 0.2, respectively. Experimental results are shown in Figure \ref{fig:simulation}.
We use 10000 particles to demonstrate the approximation accuracy of the fitted posteriors from different methods. The blue lines represent the density of the true posterior $p(x_1|y_0,y_1)$ and the orange histograms represent the estimated posterior densities from different methods.
\begin{figure}[t]
  \centering
  \includegraphics[width=0.85\textwidth]{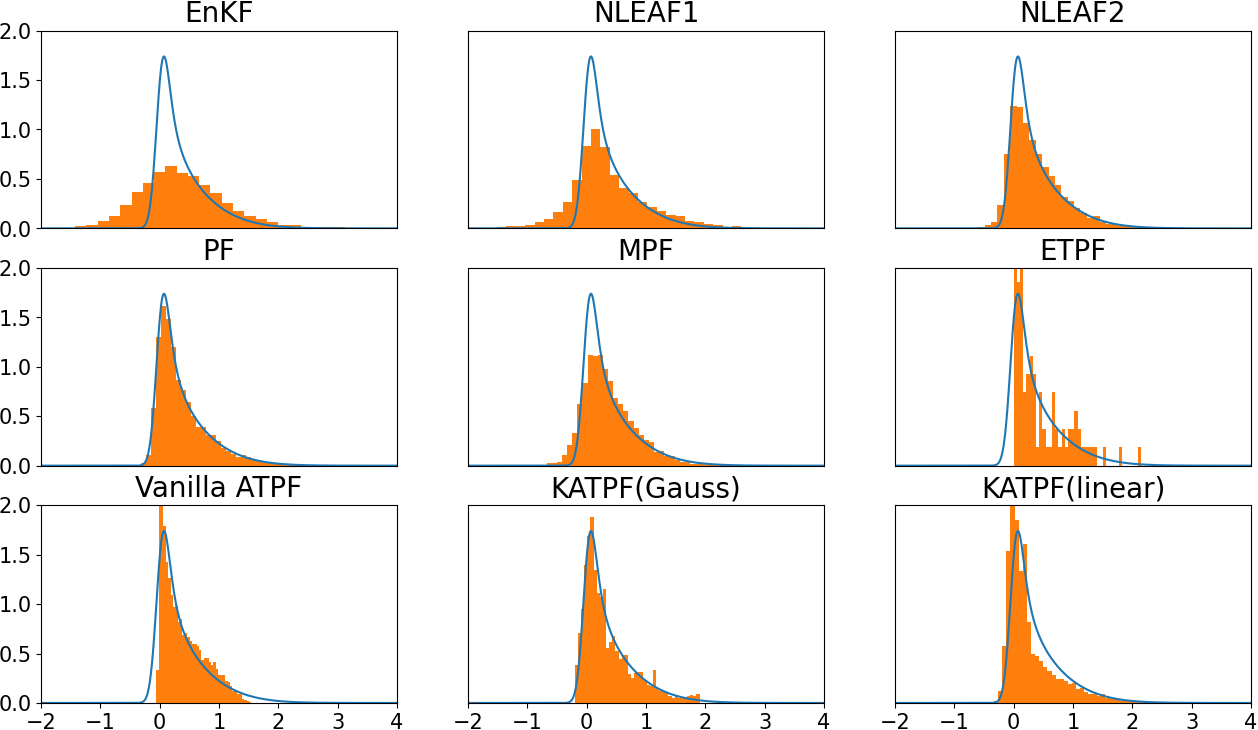}
  \caption{Experimental results of a simple two-stage state-space model.}
\label{fig:simulation}
\end{figure}

From Figure \ref{fig:simulation}, we see that the true posterior is not a symmetric Gaussian, but instead exhibits a right-skewed distribution.
Therefore, the core challenge in training and filtering is to accurately capture the shape of density around the peak.
We see that the EnKF (top left panel) is limited to a Gaussian approximation which fails to capture the skewness.
The NLEAF methods (top middle and top right panels) perform better, as they can capture the peak shape by matching either the first or second moments.
The vanilla PF (middle left panel) is the best estimator in this case, as it is asymptotically unbiased and well suits this low-dimensional scenario.
MPF (center panel) enhances particle diversity and stability, though it does so at the price of some statistical consistency by allowing more outliers.
ETPF (middle right panel), which relies on a linear programming algorithm, tends to produce some indented protuberances, possibly due to the lack of randomness in its design.
As for our proposed methods, they all capture the general shape of the target distribution well.
More specifically, the vanilla ATPF (bottom left panel) provide particles that concentrate on the mode, and the linear KATPF (bottom right) tends to underestimate the skewness of the density and would benefit from additional $W_2$ regularization to correct this.
In this low-dimensional case, the Gaussian kernel performs particularly well.
The corresponding RKHS manages to provide a rich set of test functions, leading to the best performance among all ATPF variants (bottom middle panel). 
\subsection{Filtering the Rainfall Data}
In this section, we consider the threshold observation model for rainfall (e.g., \citet{sanso1999venezuelan}).
The observations are modeled as $y_{t,l}=z_{t,l}^{\theta_t}1_{z_{t,l}>0}$ for $l=1,\dots,d$ and some $\theta_t>1$, and $z_t|x_t\sim\mathcal{N}(H_tx_t,\sigma_t^2)$.
Here, $z_t$ is an unknown auxiliary variable which connects the latent state $x_t$ and the actual observation $y_t$.
This kind of two-stage observation rule defines a rain-fall type distribution, i.e., a mixture of a positive right-skewed distribution, with a point mass at zero (e.g., \citet{sanso1999venezuelan,katzfuss2020ensemble}).
The parameter $\theta_t$, which controls the skewness, is either known or assigned with a pre-specified prior.
The whole rainfall model can be mathematically depicted as\begin{equation}\begin{aligned}
  x_t|x_{t-1}\sim\mathcal{N}(\mathcal{M}_{t-1}&(x_{t-1}),Q_t),\quad z_t|x_t\sim\mathcal{N}(H_tx_t,\sigma_t^2),\quad \theta_t\sim p(\theta_t),\\
  y_{t,l}|z_{t,l},&\theta_t= z_{t,l}^{\theta_t}1_{z_{t,l}>0},\text{ for }l=1,\dots,d.\\
\end{aligned}\end{equation}

Following \citet{katzfuss2020ensemble}, we focus on the analysis step, that is, how to transport the prior particles to the posterior ones. 
We first simulate $100$ true state vectors on a one-dimensional spatial domain $[1,100]$ according to a Gaussian Process (GP) prior with the following exponential kernel:\begin{equation}
  (x_1,\dots,x_d)\sim\mathcal{N}_d(\mu,\Sigma) \text{ with }\Sigma_{ij}=\alpha_1\exp\left(-\frac{|i-j|}{\alpha_2}\right),
\end{equation} where $\alpha_1$ is the scale parameter and $\alpha_2$ is the width parameter.
In our experiments, we set $d=100$, $\mu=(0,\dots,0)^T$, $\alpha_1=1$ and $\alpha_2=20$.
As for the rainfall model itself, we set $H_t\equiv I_{100\times 100}$ and $\sigma_t\equiv 0.4$.
And for simplicity, we set $\theta_t\equiv 3$, i.e., $p(\theta_t)=\delta(\theta_t-3)$.

As in \citet{katzfuss2020ensemble}, we truncate all particles in the posterior estimates at $0$ as negative rainfall values are not physically meaningful.
We compute RMSEs over the true rain and treat the estimate obtained by the first-order Langevin method \citep{welling2011bayesian} (running 10000 steps for 500 independent chains and collecting the final samples) as the exact posterior.
Experimental results are presented in Figure \ref{fig:rainfall}, with corresponding RMSE values shown in Table \ref{tb:rainfall}.
We use 100000 particles for the PF and 500 particles for the other filtering methods.
The 95\% credit intervals are constructed based on the ensemble.

\begin{table}[t]
  \centering
  \caption{Experimental results (RMSE) of the rainfall model.}
  \vspace{0.5em}
  \label{tb:rainfall}
  \resizebox{0.9\linewidth}{!}{
  \begin{tabular}{ccccccc}
  \toprule
    MCMC & EnKF  & NLEAF1 & NLEAF2 & PF & KATPF(Gauss) & KATPF(linear)\\
    \midrule
    0.1245 & 0.2422 & 0.1717& 0.1752 & 0.1433 & 0.1209 & 0.1394 \\
    \bottomrule
  \end{tabular}
  }
\end{table}

From Table \ref{tb:rainfall}, we see that our ATPF methods outperform their teacher (i.e., PF) with much fewer particles.
This is partly due to the explicit regularization of $W_2$ distance and implicit regularization of neural network parameterization and the corresponding stochastic optimization algorithms \citep{ma2018implicit,wei2020implicit}.
While the PF relies solely on a mixture of delta functions to approximate the density, our KATPF method introduces smoother approximations, akin to other kernel-based methods, which results in improved performance.
This trend is consistent across the subsequent experiments.

\begin{figure}[t]
  \centering
  \includegraphics[width=0.9\textwidth]{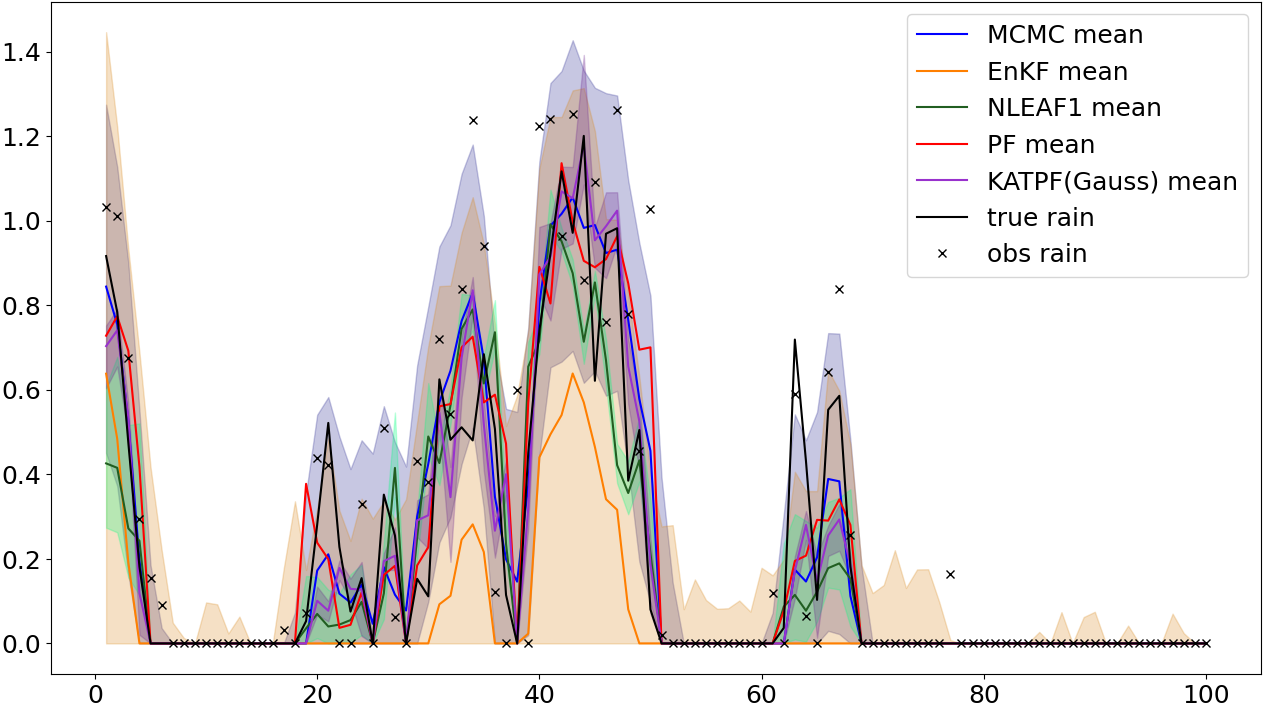}
  \caption{Experimental results of the rainfall model. The shaded areas represent the corresponding 95\% confidence intervals (CIs).}
  \label{fig:rainfall}
\end{figure}
From Figure \ref{fig:rainfall}, we observe that the EnKF tends to significantly underestimate the true rainfall while NLEAF-type methods are less robust to outliers, as evidenced by their narrower confidence intervals (CIs) that sometimes fail to cover the true data.
In fact, except for the EnKF, nearly all baseline filtering methods produce narrower and bumpier CIs than the exact posterior during the peak period, indicating that prediction with this rainfall model is challenging due to the strong randomness assumption and truncated observations.
On the other hand, our proposed methods strike a good balance between the accuracy of point estimates and the uncertainty in the posterior approximations, thanks to the adversarial framework.
This results in more accurate estimates than the EnKF (see Table \ref{tb:rainfall}) and more reasonable CIs than NLEAF methods (see Figure \ref{fig:rainfall}).
Notably, the Gaussian KATPF outperforms all the other filtering algorithms in terms of RMSE, even though it was not trained by minimizing the square loss.

\subsection{The Lorenz63 Model}
\label{subsec:lorenz63}
As a chaotic system, the Lorenz63 model is described by the following set of ordinary differential equations:
\begin{equation}
  \begin{aligned}
    \frac{\text{d}x_\tau(1)}{\text{d}\tau}&=-\sigma x_{\tau}(1)+\sigma x_{\tau}(2),\\
    \frac{\text{d}x_\tau(2)}{\text{d}\tau}&=-x_\tau(1)x_\tau(3)+\rho x_\tau(1)-x_\tau(2),\\
    \frac{\text{d}x_\tau(3)}{\text{d}\tau}&=x_\tau(1)x_\tau(2)-\beta x_\tau(3),
  \end{aligned}
\end{equation}
where $\tau$ is the continuous time index, $x_\tau$ is a three-dimensional latent state variable representing a simplified model of a heated fluid's flow, with parameters $\beta=\frac{8}{3}$, $\rho=28$, and $\sigma=10$. Note that this model does not include evolution noise, which implies $\mathbb{P}(\eta_t=0)=1$ in \eqref{eq:SSM}.

In the simulation, we use standard Euler discretization with a small time step $\delta t=0.02$ to numerically integrate this continuous-time Lorenz63 model.
The observation rule is set as $y_t|x_t\sim\mathcal{N}(x_t,I_{3\times 3})$ and data are generated by observing $T=100$ or $T=2000$ samples from the system.
The goal is to evaluate the performance of different methods in both short-term and long-term evolution scenarios.
In addition, the time interval between consecutive observations is set to be $\Delta t=0.02$, $0.04$, $0.06$ or $0.08$, corresponding to 1, 2, 3 or 4 Euler discretization steps between $x_{t-1}$ and $x_t$ (or $y_{t-1}$ and $y_t$), respectively.
Here, larger observation gaps typically introduce more nonlinearities in the evolution.
We use 1000 particles in all experiments.

\begin{table}[t]\footnotesize
  \centering\caption{Experimental results (RMSE) of the Lorenz63 system.}
  \vspace{0.5em}
  \label{tb:lorenz63}
  \begin{tabular}{cccccccccc}
  \toprule
  &EnKF&NLEAF1&NLEAF2&PF&MPF&PFGR&\makecell{KATPF\\(Gauss)}&\makecell{KATPF\\(linear)}&\makecell{KATPF-JO\\(linear)}\\\midrule
    \makecell{$T = 100$\\$\Delta t=0.02$}& 0.1924&0.1862&0.1907&0.1719&0.2295&0.2110&0.1946&0.1891&0.1398 \\\midrule
    \makecell{$T = 2000$\\$\Delta t=0.02$}&0.1142&0.1133&0.1063&0.1131&0.1287&0.1476&0.1111&0.1087&0.0843\\\midrule
    \makecell{$T = 2000$\\$\Delta t=0.04$}&0.1870&0.1716&0.1475&0.1465&0.1970&0.1874&0.1608&0.1544&0.1157\\\midrule
    \makecell{$T = 2000$\\$\Delta t=0.06$}&0.1824&0.1769&0.1445&0.1503&0.1968&0.1759&0.1769&0.1506&0.0973\\\midrule
    \makecell{$T = 2000$\\$\Delta t=0.08$}&0.2293&0.2140&0.1751&0.1825&0.2343&0.2081&0.1858&0.1829&0.1305\\
    \bottomrule
  \end{tabular}
\end{table}

Table \ref{tb:lorenz63} shows the RMSE between the true values and the ensemble mean \citep{lei2011moment}.
We see that our KATPF methods consistently outperform the other baselines, particularly when the nonlinearity increases.
Specifically, when the observation interval $\Delta t$ is small (i.e., the first two rows), the evolution model is nearly linear due to the Euler discretization, and all methods perform similarly.
As $\Delta t$ increases, the filtering task becomes more challenging due to the increasing nonlinearity of the evolution model, leading to a general decline in performance across all methods.
However, KATPF methods exhibit a slower rate of performance degradation compared to the others.
From Figure \ref{fig:lorenz63step4}, we can see that as non-linearity increases, the particle trajectories become more chaotic, making it harder to accurately identify them.
Despite this, our method maintains strong performance.
The results also reveal that the joint optimization procedure (ATPF-JO) outperforms the sequential training approach.
See more results on the effect of the OT regularization in Appendix \ref{app:AppendixB} (Table \ref{tb:lorenz63_wo_OT}).
\subsection{The Lorenz96 Model}
\label{subsec:lorenz96}
The Lorenz96 model is another commonly used testbed for evaluating data assimilation methods. The 40-dimensional state vector evolves according to the following system of ordinary differential equations:\begin{equation}\frac{\text{d}x_\tau(i)}{\text{d}\tau}=[x_\tau(i+1)-x_\tau(i-2)]x_\tau(i-1)-x_\tau(i)+8\text{  for }i=1,\dots,40,\end{equation} where periodic boundary conditions are applied: $x_\tau(0)=x_\tau(40),x_\tau(-1)=x_\tau(39)$ and $x_\tau(41)=x_\tau(1)$.
This model is widely used for studying chaotic behavior and turbulence in atmospheric and climate systems.
Also, this model does not include evolution noise.

Due to the complexity of the Lorenz96 model, we set the discretization step size to $\delta=0.01$ in the simulation.
To increase the difficulty of fitting, we adopt the hard case setup from \citet{lei2011moment} where the observation rule is given by $y_t=H_tx_t+\varepsilon_t$ \eqref{eq:lineobs} with $H_t=(e_1,e_3,\dots,e_{39})_{40\times 20}$ and $\varepsilon_t\sim\mathcal{N}(0,I_{20\times 20})$.
Here, $e_i=(0,\dots,0,\underbrace{1}_{i\text{-th}},0,\dots,0)$ is the standard basis vector with all zero entries except the $i$th position. This means that only 20 out of the 40 coordinates of the state vector are observed at each time step, with each observation contaminated by standard Gaussian noise.
We generate data by simulating $T=100$ or $T=1000$ samples from this model with $\Delta t=0.01$ (i.e., only one discretization step between $x_{t-1}$ and $x_t$). To further enhance nonlinearity, we also consider a case with $\delta=0.02$, $\Delta t=0.2$ and $T=10$, similar to \citet{katzfuss2020ensemble}.
To better handle the high-dimensional nature of the system, we use localization for the NLEAF methods (denoted as ``w/ loc.'') and the tapering method for the EnKF (denoted as ``w/tap.'').
The extra hyperparameters involved are chosen according to \citet{lei2011moment}.
For our proposed ATPF methods, we use MPF as a guidance mechanism, which helps alleviate weight degeneracy in this high-dimensional setting.
All filtering algorithms adopt the inflation trick introduced in Appendix \ref{app:AppendixC}.
For the EnKF, NLEAF series, and our proposed KATPF, we used 500 particles, which we found to be sufficient in our experiments.
For PF series, we use 30000 particles (and $10^6$ particles for the third case) to ensure numerical stability and statistical consistency.

\begin{table}[t]
\footnotesize
  \centering
  \caption{Experimental results (RMSE) of the Lorenz96 model.}
  \label{tb:lorenz96}
  \vspace{0.5em}
  \begin{tabular}{cccccccccc}\toprule
    &\makecell{EnKF\\(w/tap.)}&\makecell{NLEAF1\\(w/loc.)}&\makecell{NLEAF1q\\(w/loc.)}&PF&MPF&\makecell{KATPF\\(Gauss)}&\makecell{KATPF\\(linear)}&\makecell{KATPF-JO\\(linear)}\\\midrule
    \makecell{$T = 100$\\$\Delta t= 0.01$}&0.4212&0.3895&0.4620&0.5478&0.4457&0.4250&0.4133&0.3927\\\midrule
    \makecell{$T=1000$\\$\Delta t=0.01$}&0.2125&0.2133&0.2374&0.2733&0.2170&0.2155&0.1920&0.1600\\\midrule
    \makecell{$T=10$\\$\Delta t=0.2$}&0.8320&0.8036&0.9043&0.8498&0.8079&0.8649&0.7860&0.7723\\
    \bottomrule
  \end{tabular}
\end{table}

Table \ref{tb:lorenz96} shows the RMSEs of different methods across three different scenarios.
We see that KATPF methods perform well in all three cases, even outperforming their teacher method (i.e., MPF) with significantly fewer particles.
Although NLEAF1 performs well in short-term settings (when $T$ is small), KATPF methods exhibit stronger capacities, especially in long-term settings.
Again, with joint optimization, the KATPF-JO performs the best among all KATPF variants.
\section{Conclusion and Discussion}
\label{sec:conc}
In this paper, we propose the adversarial transform particle filter (ATPF), a new filtering method for data assimilation in nonlinear and non-Gaussian models.
ATPF combines the particle diversity of EnKFs with the statistical consistency of PFs within an adversarial framework that minimizes the maximum mean discrepancy (MMD) to match the posterior distribution.
We provide theoretical guarantees for ATPF's consistency and convergence, and validate its accuracy and effectiveness through extensive experiments.
Additionally, optimal transport based regularization can be employed to enhance performance.
We also show that existing techniques, such as the inflation trick, can be seamlessly incorporated into our method.

Despite its advantages, fitting high dimensional models directly with ATPF remains challenging as it relies on particle filters (or their variants) to estimate posterior expectations, and the importance sampling required for this becomes difficult in high-dimensional settings (see the property of the $\chi^2$ divergence in Theorem \ref{thm:SNISbias}).
However, unlike PFs that use importance sampling to estimate the whole posterior distribution, ATPF only uses importance sampling to estimate the posterior expectation of test functions.
This is similar to NLEAF \citep{lei2011moment}, which lends itself more easily to localization for dimension reduction.
Another remedy would be to introduce a sequence of auxiliary distributions that interpolate between the prior and the posterior (e.g., the annealing path \citep{geyer1991markov}).
By learning transformations to map between successive distributions using ATPF, we could more easily achieve better approximations, as successive distributions can be made closer to each other.
We leave the investigation of more efficient ATPF methods for high-dimensional models to future work.

\section*{Acknowledgments}

This research was supported by National Natural Science Foundation of China grants 12292980.
The research of Cheng Zhang was supported in part by National Natural Science Foundation of China (grant no. 12201014 and grant no. 12292983), the National Engineering Laboratory for Big Data Analysis and Applications, the Key Laboratory of Mathematics and Its Applications (LMAM), and the Fundamental Research Funds for the Central Universities, Peking University.
The research of Wei Lin was supported in part by National Natural Science Foundation of China (grant no. 12171012 and grant no. 12292981).

\newpage
\bibliographystyle{nameyear}
\bibliography{Bibliography-MM-MC}

\newpage
\appendix
\begin{center}
{\Large\bf Supplementary Material for ``Adversarial Transform Particle Filters''}
\end{center}
\section{Proofs}
\label{app:AppendixA}
\begin{definition}[RKHS, \citet{alvarez2012kernels}]
 A bivariate, symmetric and positive definite function $K(\cdot,\cdot)$ defined on the domain $E$ is called a reproducing kernel for a Hilbert space $\mathscr{H}$ if it satisfies the following two properties:
\begin{enumerate}[(a)]
\item $\forall x\in E$, $K(\cdot,x)\in\mathscr{H}$;
\item (Reproducing property) $\forall f\in\mathscr{H}$ and $x\in E$, $f(x)=\langle f,K(\cdot,x)\rangle_{\mathscr{H}}$.
\end{enumerate}
When such a reproducing kernel exists, $\mathscr{H}$ is called a reproducing kernel Hilbert space (RKHS).
\end{definition}

We recall that in Theorem \ref{thm:boundrisk} of Section \ref{sec:theory}, the excess risk is decomposed into three terms. In this section, we will detailedly show the corresponding bounds for these three terms and finally put them together to prove our main result (i.e., Theorem \ref{thm:finalupperbound}).

For the first term, we directly cite the following existing theorem:
\begin{theorem}[Upper bound on the SNIS bias \citep{agapiou2017importance}]\label{thm:SNISbias}

  Assume Assumptions \ref{ass:1}--\ref{ass:3} hold. Let $\chi^2(Q_t\|P_t)$ denote the $\chi^2$ divergence between $Q_t$ and $P_t$. Then\begin{equation}
    \sup_{\|f\|_\infty\leq B}\left|\EE_{x'\sim P_t}\left[\sum_{j=1}^{N_2}\frac{w_t(x'_j)}{\sum_{j=1}^{N_2}w_t(x'_j)}f(x'_j)\right]-\EE_{y\sim {Q_t}}f(y)\right|\leq\frac{12B(\chi^2(Q_t\|P_t)+1)}{N_2}.
  \end{equation}
\end{theorem}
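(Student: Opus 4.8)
The plan is to reconstruct the argument behind the cited bound of \citet{agapiou2017importance}, which proceeds by a second-order expansion of a self-normalized ratio around its mean. First I would exploit the scale invariance of the self-normalized estimator: replacing $w_t$ by $w_t/\mathbb{E}_{P_t}[w_t]$ leaves $\sum_j w_t(x'_j)f(x'_j)\big/\sum_j w_t(x'_j)$ unchanged, so without loss of generality $w_t=\mathrm{d}Q_t/\mathrm{d}P_t$ with $\mathbb{E}_{P_t}[w_t]=1$, and then $\mathbb{E}_{P_t}[w_t^2]=\chi^2(Q_t\|P_t)+1$, which is finite by Assumption~\ref{ass:2}. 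Next, fix $f$ with $\|f\|_\infty\le B$, set $\bar g:=\mathbb{E}_{Q_t}[f]$ and $g:=f-\bar g$, so that $\|g\|_\infty\le 2B$ and $\mathbb{E}_{P_t}[w_tg]=\mathbb{E}_{Q_t}[g]=0$. Writing $A_{N_2}:=\frac{1}{N_2}\sum_j w_t(x'_j)g(x'_j)$ and $B_{N_2}:=\frac{1}{N_2}\sum_j w_t(x'_j)$, the SNIS error equals $A_{N_2}/B_{N_2}$, with $\mathbb{E}[A_{N_2}]=0$ and $\mathbb{E}[B_{N_2}]=1$.

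The core step is the algebraic identity $\frac{A}{B}=A-A(B-1)+\frac{A(B-1)^2}{B}$, applied with $A=A_{N_2}$, $B=B_{N_2}$, followed by taking expectations. The first term vanishes. For the second, the i.i.d. structure together with $\mathbb{E}_{P_t}[w_tg]=0$ gives $\mathbb{E}[A_{N_2}(B_{N_2}-1)]=\mathbb{E}[A_{N_2}B_{N_2}]=\frac{1}{N_2}\mathbb{E}_{P_t}[w_t^2g]$, whose absolute value is at most $\frac{2B(\chi^2(Q_t\|P_t)+1)}{N_2}$. The delicate term is the remainder $\mathbb{E}\big[A_{N_2}(B_{N_2}-1)^2/B_{N_2}\big]$, since $B_{N_2}$ can be arbitrarily small; the resolution is the elementary pointwise bound $|A_{N_2}|\le\|g\|_\infty B_{N_2}\le 2B\,B_{N_2}$ (using $w_t\ge 0$), which cancels the dangerous denominator and yields $\big|A_{N_2}(B_{N_2}-1)^2/B_{N_2}\big|\le 2B\,(B_{N_2}-1)^2$, hence a remainder bounded by $2B\,\mathrm{Var}(B_{N_2})=\frac{2B\,\chi^2(Q_t\|P_t)}{N_2}$. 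Summing the three contributions gives a bound of the form $\frac{cB(\chi^2(Q_t\|P_t)+1)}{N_2}$, which is dominated by the stated constant $12$; since the estimate is uniform in $f$ over $\|f\|_\infty\le B$, the supremum is controlled by the same quantity.

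The main obstacle, as flagged above, is handling the remainder without any deterministic lower bound on the random denominator $B_{N_2}$. I expect it to be dispatched cleanly by the observation that $|A_{N_2}/B_{N_2}|\le\|g\|_\infty$ automatically (the self-normalized estimator never explodes), so the problematic factor $1/B_{N_2}$ is absorbed and everything reduces to the routine computation $\mathrm{Var}(B_{N_2})=\chi^2(Q_t\|P_t)/N_2$. An alternative route to the sharp constant would split on the event $\{B_{N_2}\ge\tfrac12\}$ and control its complement by a Bernstein-type deviation bound via $\|w_t\|_\infty\le B$, but since only an $O(1/N_2)$ bound with a constant of the displayed order is needed, the pointwise argument suffices. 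In the appendix itself, since the statement is quoted verbatim, it is enough to cite Theorem~2.1 of \citet{agapiou2017importance} in place of reproducing this derivation.
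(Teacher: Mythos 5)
Your proposal is correct, but it is worth noting that the paper itself does not prove this statement at all: in Appendix~\ref{app:AppendixA} the bound is quoted verbatim as Theorem~2.1 of \citet{agapiou2017importance} and simply cited, exactly as you suggest doing in your final remark. What you add is a self-contained reconstruction, and it checks out: after the scale-invariance reduction to $\mathbb{E}_{P_t}[w_t]=1$, the centering $g=f-\mathbb{E}_{Q_t}[f]$, and the identity $\tfrac{A}{B}=A-A(B-1)+\tfrac{A(B-1)^2}{B}$, the first term has zero mean, the cross term gives $\tfrac{1}{N_2}\mathbb{E}_{P_t}[w_t^2 g]\le \tfrac{2B(\chi^2(Q_t\|P_t)+1)}{N_2}$ in absolute value, and the remainder is handled by the pointwise bound $|A_{N_2}|\le\|g\|_\infty B_{N_2}$ (valid since $w_t\ge 0$), which removes the random denominator and leaves $2B\,\mathrm{Var}(B_{N_2})=\tfrac{2B\chi^2(Q_t\|P_t)}{N_2}$; the total is $\tfrac{4B(\chi^2(Q_t\|P_t)+1)}{N_2}$, uniformly in $\|f\|_\infty\le B$, which indeed implies the stated constant $12$. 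Compared with the paper's citation-only treatment (and with the original argument in \citet{agapiou2017importance}, which reaches the bias bound through a slightly different decomposition combined with their MSE estimate), your elementary expansion buys a sharper constant and makes the appendix self-contained, at the cost of a short extra computation; the only caveats are cosmetic — the claim that $\chi^2(Q_t\|P_t)$ is finite ``by Assumption~\ref{ass:2}'' should be phrased via boundedness of the unnormalized weight divided by its mean, and one should implicitly assume $B_{N_2}>0$ almost surely so the SNIS ratio is well defined, an assumption equally implicit in the cited theorem.
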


For the second term, we first need a lemma to ensure Lipschitzity of the SNIS statistics:

\begin{lemma}\label{lem:lip}
  Assume Assumptions \ref{ass:1}--\ref{ass:4} hold. Let $F_t(X)=\sum_{j=1}^{N_2}\frac{w_t(x_j)}{\sum_{j=1}^{N_2}w_t(x_j)}f(x_j)$ with $X=(x_1,\dots,x_{N_2})\in\mathbb{R}^{N_2\times d}$ and $x_{m+1}=(X_{md+1},\dots,X_{(m+1)d})$ for $m=0,1,\dots,d-1$. Under the concentration condition, say $\left|\frac{\sum_{j=1}^{N_2}w_t(x_j)}{N_2}-\EE w_t\right|\leq\frac{1}{2}\EE w_t$, $F_t(X)$ is $\sqrt{\frac{40}{N_2}}\frac{B\beta}{\EE w_t}$-Lipschitz.
\end{lemma}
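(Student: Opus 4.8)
\textbf{Proof plan for Lemma \ref{lem:lip}.}

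The plan is to bound the gradient of $F_t$ with respect to the stacked vector $X\in\mathbb{R}^{N_2 d}$, since a bound on $\|\nabla F_t\|_2$ over the region carved out by the concentration condition immediately gives the claimed Lipschitz constant on that region. Write $S(X):=\sum_{j=1}^{N_2}w_t(x_j)$ and note $F_t(X)=S(X)^{-1}\sum_{j=1}^{N_2}w_t(x_j)f(x_j)$. First I would differentiate with respect to the block of coordinates corresponding to a single particle $x_k$: by the quotient rule,
\begin{equation}
  \nabla_{x_k}F_t = \frac{f(x_k)\nabla w_t(x_k) + w_t(x_k)\nabla f(x_k)}{S(X)} - \frac{\left(\sum_{j}w_t(x_j)f(x_j)\right)\nabla w_t(x_k)}{S(X)^2}.
\end{equation}
Using Assumption \ref{ass:2} ($\|w_t\|_\infty,\|f\|_\infty\le B$) and Assumption \ref{ass:4} ($\|\nabla f\|_2,\|\nabla w_t\|_2\le\beta$), each of the three pieces is controlled: the first by $2B\beta/S(X)$, the third by $B\beta\cdot(N_2 B)/S(X)^2 = B^2\beta N_2/S(X)^2$ (crudely bounding $\sum_j w_t(x_j)f(x_j)$ by $N_2 B^2$). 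Then I invoke the concentration hypothesis $|S(X)/N_2-\EE w_t|\le\tfrac12\EE w_t$, which gives $S(X)\ge \tfrac12 N_2\EE w_t$, hence $1/S(X)\le 2/(N_2\EE w_t)$ and $N_2/S(X)^2\le 4/(N_2(\EE w_t)^2)$. Substituting, $\|\nabla_{x_k}F_t\|_2 \lesssim B\beta/(N_2\EE w_t)$ with an explicit numerical constant.

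Next I would assemble the full gradient norm. Since the blocks $\nabla_{x_k}F_t$ for $k=1,\dots,N_2$ are the components of $\nabla F_t$,
\begin{equation}
  \|\nabla F_t\|_2^2 = \sum_{k=1}^{N_2}\|\nabla_{x_k}F_t\|_2^2 \le N_2\cdot\left(\frac{C B\beta}{N_2\EE w_t}\right)^2 = \frac{C^2 B^2\beta^2}{N_2(\EE w_t)^2}
\end{equation}
for the numerical constant $C$ obtained above, so $\|\nabla F_t\|_2\le \sqrt{C^2/N_2}\cdot B\beta/\EE w_t$. Tracking the constants carefully — the per-block bound being roughly $2+$ (contribution from the quotient term) times $B\beta$ over $N_2\EE w_t$ — should land on $C^2=40$, i.e. the stated Lipschitz constant $\sqrt{40/N_2}\,B\beta/\EE w_t$. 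Finally, a bound on the gradient norm on a convex set implies the Lipschitz property on that set via the mean value inequality; the region $\{X: |S(X)/N_2-\EE w_t|\le\tfrac12\EE w_t\}$ is not convex in general, but one only needs the Lipschitz estimate to hold \emph{within} this concentration event, and for the intended application (a bounded-differences / concentration argument for the empirical process term) it suffices to have the gradient bound pointwise on the event, so I would state the conclusion as a pointwise gradient bound that yields the Lipschitz control used downstream.

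The main obstacle I anticipate is bookkeeping the constants so they actually collapse to $40$: the crude bound $\sum_j w_t(x_j)f(x_j)\le N_2 B^2$ may be too lossy, and one might instead want $|\sum_j w_t(x_j)f(x_j)|\le B\cdot S(X)$ (using $|f|\le B$), which turns the third term into $B\cdot S(X)\cdot\beta/S(X)^2 = B\beta/S(X)\le 2B\beta/(N_2\EE w_t)$ — cleaner and avoiding the extra $N_2$ factor entirely. With that refinement each block contributes at most $(2B\beta + B\beta + 2B\beta)/(N_2\EE w_t)$-ish terms; squaring, summing over $N_2$ blocks, and being slightly generous gives $40/N_2$. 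A secondary subtlety is that $f$ is scalar-valued so $\nabla f$ lives in $\mathbb{R}^d$ and the norms are genuinely Euclidean norms on the per-particle blocks — I would be careful to keep the $\ell_2$ structure consistent throughout rather than mixing in $\ell_\infty$ bounds, since the final Lipschitz claim is with respect to the Euclidean metric on $\mathbb{R}^{N_2\times d}$.
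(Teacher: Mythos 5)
Your proposal is correct and follows essentially the same route as the paper: a block-wise gradient computation via the quotient rule, the bounds $\|f\|_\infty,\|w_t\|_\infty\le B$ and $\|\nabla f\|_2,\|\nabla w_t\|_2\le\beta$, the lower bound $\sum_j w_t(x_j)\ge \tfrac12 N_2\,\EE w_t$ from the concentration event, and assembly of the $N_2$ blocks in the Euclidean norm (with the conclusion, as in the paper, really being a pointwise gradient bound on that event). The refinement you flag, $\bigl|\sum_j w_t(x_j)f(x_j)\bigr|\le B\sum_j w_t(x_j)$, is exactly what is needed (the paper achieves the same effect by grouping the terms as $\sum_j w_t(x_j)(f(x_1)-f(x_j))$ and using $(a+b)^2\le 2a^2+2b^2$), and with it your per-block estimate yields the constant $36/N_2\le 40/N_2$, so the stated Lipschitz constant holds.
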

\begin{proof}[Lemma \ref{lem:lip}]
  For $s=1,\dots,d$,
\begin{align*}
  \frac{\partial F_t}{\partial X_s}&=\frac{\left(\frac{\partial w_t(x_1)}{\partial x_{1,k}}f(x_1)+\frac{\partial f(x_1)}{\partial x_{1,k}}w_t(x_1)\right)\left(\sum_{j=1}^{N_2}w_t(x_j)\right)-\frac{\partial w_t(x_1)}{\partial x_{1,k}}\left(\sum_{j=1}^{N_2}w_t(x_j)f(x_j)\right)}{\left(\sum_{j=1}^{N_2}w_t(x_j)\right)^2}\\
  &=\frac{\partial f(x_1)}{\partial x_{1,k}}\frac{w_t(x_1)}{\sum_{j=1}^{N_2}w_t(x_j)}+\frac{\partial w_t(x_1)}{\partial x_{1,k}}\frac{\sum_{j=1}^{N_2}\left(w_t(x_j)(f(x_1)-f(x_j))\right)}{\left(\sum_{j=1}^{N_2}w_t(x_j)\right)^2}\\
  \Rightarrow\frac{\partial F_t}{\partial x_1}&=\frac{w_t(x_1)}{\sum_{j=1}^{N_2}w_t(x_j)}\nabla f(x_1)+\frac{\sum_{j=1}^{N_2}\left(w_t(x_j)(f(x_1)-f(x_j))\right)}{\left(\sum_{j=1}^{N_2}w_t(x_j)\right)^2}\nabla w_t\\
  \Rightarrow \left\|\frac{\partial F_t}{\partial x_1}\right\|_2^2&\leq 2\frac{w_t^2(x_1)}{\left(\sum_{j=1}^{N_2}w_t(x_j)\right)^2}\|\nabla f(x)\|_2^2+2\left(\frac{\sum_{j=1}^{N_2}\left(w_t(x_j)(f(x_1)-f(x_j))\right)}{\left(\sum_{j=1}^{N_2}w_t(x_j)\right)^2}\right)^2\|\nabla w_t\|_2^2\\
  &\leq 2B^2\beta^2\frac{1}{\left(\sum_{j=1}^{N_2}w_t(x_j)\right)^2}+8B^2\beta^2\frac{1}{\left(\sum_{j=1}^{N_2}w_t(x_j)\right)^2}=\frac{10B^2\beta^2}{\left(\sum_{j=1}^{N_2}w_t(x_j)\right)^2}.
\end{align*}

Similarly, we can derive the same upper bounds for $x_2,\dots,x_{N_2}$. Therefore,\[\left\|\frac{\partial F_t}{\partial X}\right\|_2^2=\sum_{j=1}^{N_2}\left\|\frac{\partial F_t}{\partial x_j}\right\|_2^2\leq \frac{10B^2\beta^2N_2}{\left(\sum_{j=1}^{N_2}w_t(x_j)\right)^2}=\frac{10B^2\beta^2}{N_2}\left(\frac{N_2}{\sum_{j=1}^{N_2}w_t(x_j)}\right)^2.\]

Under the concentration condition, we can conclude that \[\left\|\frac{\partial F_t}{\partial X}\right\|_2^2\leq\frac{10B^2\beta^2}{N_2}\left(\frac{N_2}{\sum_{j=1}^{N_2}w_t(x_j)}\right)^2\leq \frac{40B^2\beta^2}{N_2(\EE w_t)^2}\Rightarrow \|F_t\|_{\text{Lip}}\leq \sqrt{\frac{40}{N_2}}\frac{B\beta}{\EE w_t}.\]
\end{proof}

Then we can show the following bound for the weighted empirical process:

\begin{theorem}[Upper bound 1 with smoothness and convexity conditions]\label{thm:bound2} Assume Assumptions \ref{ass:1}--\ref{ass:5} hold. Let $\varepsilon>0$. Let $\mathscr{N}_\infty(\varepsilon,\mathcal{F}_D)$ denote the $L_\infty$ $\varepsilon$-covering number of $\mathcal{F}_D$. Then\begin{equation}\label{eq:bound2}\begin{aligned}&\mathbb{P}\left(\sup_{f\in\mathcal{F}_D}\left|\sum_{j=1}^{N_2}\frac{w_t(x'_j)}{\sum_{j=1}^{N_2}w_t(x'_j)}f(x'_j)-\EE_{x'\sim P_t}\left[\sum_{j=1}^{N_2}\frac{w_t(x'_j)}{\sum_{j=1}^{N_2}w_t(x'_j)}f(x'_j)\right]\right|>\varepsilon\right)\\ \leq\,& 2\mathscr{N}_\infty
  \left(\frac{\varepsilon}{3},\mathcal{F}_D\right)\exp\left(-\frac{\gamma_tN_2(\EE w_t)^2\varepsilon^2}{1440B^2\beta^2}\right)+2\exp\left(-\frac{N_2(\EE w_t)^2}{2B^2}\right).\end{aligned}\end{equation}
\end{theorem}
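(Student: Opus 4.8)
The plan is the classical covering-plus-concentration route, adapted to the fact that the empirical process here is \emph{self-normalized}, so its Lipschitz control (Lemma~\ref{lem:lip}) is only available on a high-probability event. Write $F_t^f(X)=\sum_{j=1}^{N_2}\frac{w_t(x'_j)}{\sum_k w_t(x'_k)}f(x'_j)$ for $X=(x'_1,\dots,x'_{N_2})\sim P_t^{\otimes N_2}$, and let $E=\{|\frac1{N_2}\sum_j w_t(x'_j)-\EE w_t|\le\frac12\EE w_t\}$ be the event of Lemma~\ref{lem:lip}. Since $0\le w_t\le B$ by Assumption~\ref{ass:2}, Hoeffding's inequality gives $\mathbb P(E^c)\le 2\exp(-N_2(\EE w_t)^2/(2B^2))$, which is precisely the second term of \eqref{eq:bound2}, so it remains to bound $\mathbb P(\sup_{f\in\mathcal F_D}|F_t^f-\EE F_t^f|>\varepsilon,\,E)$. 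Because the self-normalized weights form a convex combination, $|F_t^f-F_t^{f'}|\le\|f-f'\|_\infty$ and $|\EE F_t^f-\EE F_t^{f'}|\le\|f-f'\|_\infty$; taking an $(\varepsilon/3)$-net $\{f_1,\dots,f_M\}$ of $\mathcal F_D$ with $M=\mathscr N_\infty(\varepsilon/3,\mathcal F_D)$ and a union bound, the claim reduces to showing $\mathbb P(|F_t^{f_k}-\EE F_t^{f_k}|>\varepsilon/3,\,E)\le 2\exp(-\gamma_tN_2(\EE w_t)^2\varepsilon^2/(1440B^2\beta^2))$ for each fixed $k$.

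For fixed $f=f_k$, Lemma~\ref{lem:lip} (inspecting its proof) shows $\|\nabla F_t^f\|\le L:=\sqrt{40/N_2}\,B\beta/\EE w_t$ on the superlevel set $\{\frac1{N_2}\sum_j w_t(x'_j)\ge\frac12\EE w_t\}\supseteq E$; since this set is not convex I would replace $F_t^f$ by the globally defined surrogate
\[
 \bar F(X)=\frac{A(X)}{\max\{D(X),\tfrac12\EE w_t\}},\qquad A=\tfrac1{N_2}\textstyle\sum_j w_t(x'_j)f(x'_j),\quad D=\tfrac1{N_2}\textstyle\sum_j w_t(x'_j),
\]
which agrees with $F_t^f$ on $E$, satisfies $|\bar F|\le B$, and is globally $L$-Lipschitz (on $\{D\ge\tfrac12\EE w_t\}$ use the gradient bound above, on its complement differentiate $A/(\tfrac12\EE w_t)$ directly, noting $\|\nabla A\|\le 2B\beta/\sqrt{N_2}$). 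As $P_t$ is $\gamma_t$-strongly log-concave (Assumption~\ref{ass:5}) and strong log-concavity tensorizes, $P_t^{\otimes N_2}$ is $\gamma_t$-strongly log-concave, so the concentration inequality for Lipschitz functions of strongly log-concave measures (via the logarithmic Sobolev inequality) gives $\mathbb P(|\bar F-\EE\bar F|>s)\le 2\exp(-\gamma_t s^2/(4L^2))$. On $E$ we have $F_t^f=\bar F$, and $|\EE\bar F-\EE F_t^f|=|\EE[(\bar F-F_t^f)\mathbf{1}_{E^c}]|\le 2B\,\mathbb P(E^c)$; substituting $s=\varepsilon/3$ and the value of $L$ produces the stated exponent, and summing the $M$ pointwise bounds together with $\mathbb P(E^c)$ from the first paragraph yields \eqref{eq:bound2}.

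The only genuinely delicate point is this: Lemma~\ref{lem:lip} controls $\nabla F_t^f$ only on a \emph{non-convex} superlevel set of $\frac1{N_2}\sum_j w_t(x'_j)$, so the log-concave concentration inequality --- which needs a function that is Lipschitz on all of $\mathbb R^{N_2 d}$ --- cannot be invoked on $F_t^f$ itself. Manufacturing the globally $L$-Lipschitz surrogate $\bar F$ (equivalently, a McShane Lipschitz extension of $F_t^f|_E$) fixes this, at the cost of re-centering: one must verify that replacing $\EE\bar F$ by $\EE F_t^f$ loses only $O(B\,\mathbb P(E^c))$, which is where the slack in the constant $1440$ (versus the $\approx 720$ of a tight accounting) is absorbed. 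Everything else --- Hoeffding for $E^c$, the $L_\infty$-net estimate, tensorization of strong log-concavity, and the Herbst/Bakry--\'Emery concentration bound --- is routine.
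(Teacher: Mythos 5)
Your proposal follows essentially the same architecture as the paper's proof: Hoeffding's inequality for the event that the normalizing sum $\frac{1}{N_2}\sum_j w_t(x'_j)$ concentrates around $\EE w_t$, an $L_\infty$ $\varepsilon/3$-net of $\mathcal{F}_D$ exploiting that the self-normalized weights form a convex combination, and then, per net point, the Lipschitz constant $L=\sqrt{40/N_2}\,B\beta/\EE w_t$ from Lemma~\ref{lem:lip} fed into the concentration inequality for Lipschitz functions of $\gamma_t$-strongly log-concave product measures, which with $s=\varepsilon/3$ yields exactly the exponent $\gamma_t N_2(\EE w_t)^2\varepsilon^2/(1440B^2\beta^2)$. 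Where you differ is in how the restricted Lipschitz control is reconciled with the unconditional concentration inequality: the paper simply applies the bound ``conditionally on the concentration event,'' which is not literally licensed (the conditional law on a non-convex sublevel/superlevel set need not be strongly log-concave, and the centering used is the unconditional expectation), whereas your globally $L$-Lipschitz surrogate $\bar F=A/\max\{D,\tfrac12\EE w_t\}$, which agrees with $F_t^f$ on the event and is bounded by $B$, makes this step rigorous. The one imprecision in your write-up is the bookkeeping of the re-centering: since $s=\varepsilon/3$ already produces the constant $1440$ exactly, there is no ``slack versus $720$'' available to absorb the term $|\EE\bar F-\EE F_t^f|\le 2B\,\mathbb{P}(E^c)$; strictly you obtain $2\exp\bigl(-\gamma_t(\varepsilon/3-2B\,\mathbb{P}(E^c))^2/(4L^2)\bigr)$, so either the constant degrades slightly, or one notes that the correction is exponentially small in $N_2$ (and can be folded into the second, Hoeffding-type term or a marginally larger constant). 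This is a constant-level blemish, not a gap in the argument, and your treatment of the delicate step is in fact tighter than the paper's.
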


\begin{proof}[Theorem \ref{thm:bound2}]
 The first step is a straightforward application of the Theorem 3.16 in \citet{hdpbook}. Assumption 5 imposes the strong log-concavity with parameter $\gamma_t$, and Lemma \ref{lem:lip} indicates the Lipschitzity with parameter $L=\sqrt{\frac{40}{N_2}}\frac{B\beta}{\mathbb{E}w_t}$. Then\[\footnotesize\begin{aligned}&\mathbb{P}\left(\left|\sum_{j=1}^{N_2}\frac{w_t(x'_j)}{\sum_{j=1}^{N_2}w_t(x'_j)}f(x'_j)-\EE_{x'\sim P_t}\left[\sum_{j=1}^{N_2}\frac{w_t(x'_j)}{\sum_{j=1}^{N_2}w_t(x'_j)}f(x'_j)\right]\right|>\frac{\varepsilon}{3}\,\,\,\Bigg|\,\underbrace{\left|\frac{\sum_{j=1}^{N_2}w_t(x_j)}{N_2}-\EE w_t\right|\leq\frac{1}{2}\EE w_t}_{\text{under this concentration condition}}\right)\\ \leq\,\,& 2\exp\left(-\frac{\gamma_tN_2(\EE w_t)^2\varepsilon^2}{1440B^2\beta^2}\right).\end{aligned}\]
 
When $N_2$ is relatively large, this concentration condition holds with probability approaching $1$ by the large deviation theories, i.e.
\[\mathbb{P}\left(\left|\frac{\sum_{j=1}^{N_2}w_t(x_j)}{N_2}-\EE w_t\right|\leq \frac{\EE w_t}{2}\right)>1-2\exp\left(-\frac{N_2(\EE w_t)^2}{2B^2}\right).\]
 
 Now induce a $L_\infty$ $\displaystyle\frac{\varepsilon}{3}$-covering $\mathcal{F}_{D,\frac{\varepsilon}{3}}$ of $\mathcal{F}_D$. Therefore, for every $f\in\mathcal{F}_D$, there exists at least an $f_0\in\mathcal{F}_{D,\frac{\varepsilon}{3}}$ such that $\|f-f_0\|_\infty<\frac{\varepsilon}{3}$. In this way,\[\footnotesize\begin{aligned}
    &\left\{\sup_{f\in\mathcal{F}_D}\left|\sum_{j=1}^{N_2}\frac{w_t(x'_j)}{\sum_{j=1}^{N_2}w_t(x'_j)}f(x'_j)-\EE_{x'\sim P_t}\left[\sum_{j=1}^{N_2}\frac{w_t(x'_j)}{\sum_{j=1}^{N_2}w_t(x'_j)}f(x'_j)\right]\right|>\varepsilon\right\}\\
    =\,&\left\{\exists f\in\mathcal{F}_D:\left|\sum_{j=1}^{N_2}\frac{w_t(x'_j)}{\sum_{j=1}^{N_2}w_t(x'_j)}f(x'_j)-\EE_{x'\sim P_t}\left[\sum_{j=1}^{N_2}\frac{w_t(x'_j)}{\sum_{j=1}^{N_2}w_t(x'_j)}f(x'_j)\right]\right|>\varepsilon\right\}\\
    \subset\,&\Bigg\{\exists f\in\mathcal{F}_D,f_0\in\mathcal{F}_{D,\frac{\varepsilon}{3}}:\left|\sum_{j=1}^{N_2}\frac{w_t(x'_j)}{\sum_{j=1}^{N_2}w_t(x'_j)}(f-f_0)(x'_j)\right|\\
    &\qquad\qquad\qquad\qquad\quad+\left|\sum_{j=1}^{N_2}\frac{w_t(x'_j)}{\sum_{j=1}^{N_2}w_t(x'_j)}f_0(x'_j)-\EE_{x'\sim P_t}\left[\sum_{j=1}^{N_2}\frac{w_t(x'_j)}{\sum_{j=1}^{N_2}w_t(x'_j)}f_0(x'_j)\right]\right|
    \\
    &\qquad\qquad\qquad\qquad\quad+\left|\EE_{x'\sim P_t}\left[\sum_{j=1}^{N_2}\frac{w_t(x'_j)}{\sum_{j=1}^{N_2}w_t(x'_j)}(f_0-f)(x'_j)\right]\right|>\varepsilon\Bigg\}\\
    \subset\,&\left\{\exists f\in\mathcal{F}_D,f_0\in\mathcal{F}_{D,\frac{\varepsilon}{3}}:\frac{\varepsilon}{3}+\left|\sum_{j=1}^{N_2}\frac{w_t(x'_j)}{\sum_{j=1}^{N_2}w_t(x'_j)}f_0(x'_j)-\EE_{x'\sim P_t}\left[\sum_{j=1}^{N_2}\frac{w_t(x'_j)}{\sum_{j=1}^{N_2}w_t(x'_j)}f_0(x'_j)\right]\right|+\frac{\varepsilon}{3}>\varepsilon\right\}\\
    =\,&\left\{\exists f_0\in\mathcal{F}_{D,\frac{\varepsilon}{3}}:\left|\sum_{j=1}^{N_2}\frac{w_t(x'_j)}{\sum_{j=1}^{N_2}w_t(x'_j)}f_0(x'_j)-\EE_{x'\sim P_t}\left[\sum_{j=1}^{N_2}\frac{w_t(x'_j)}{\sum_{j=1}^{N_2}w_t(x'_j)}f_0(x'_j)\right]\right|>\frac{\varepsilon}{3}\right\}.
  \end{aligned}\] That's to say,
  \[\footnotesize\begin{aligned}
    &\mathbb{P}\left(\sup_{f\in\mathcal{F}_D}\left|\sum_{j=1}^{N_2}\frac{w_t(x'_j)}{\sum_{j=1}^{N_2}w_t(x'_j)}f(x'_j)-\EE_{x'\sim P_t}\left[\sum_{j=1}^{N_2}\frac{w_t(x'_j)}{\sum_{j=1}^{N_2}w_t(x'_j)}f(x'_j)\right]\right|>\varepsilon\right)\\
    \leq\,&\mathbb{P}\left(\sup_{f\in\mathcal{F}_{D,\frac{\varepsilon}{3}}}\left|\sum_{j=1}^{N_2}\frac{w_t(x'_j)}{\sum_{j=1}^{N_2}w_t(x'_j)}f(x'_j)-\EE_{x'\sim P_t}\left[\sum_{j=1}^{N_2}\frac{w_t(x'_j)}{\sum_{j=1}^{N_2}w_t(x'_j)}f(x'_j)\right]\right|>\frac{\varepsilon}{3}\right)\\
    \leq\,&\mathbb{P}\left(\sup_{f\in\mathcal{F}_{D,\frac{\varepsilon}{3}}}\left|\sum_{j=1}^{N_2}\frac{w_t(x'_j)}{\sum_{j=1}^{N_2}w_t(x'_j)}f(x'_j)-\EE_{x'\sim P_t}\left[\sum_{j=1}^{N_2}\frac{w_t(x'_j)}{\sum_{j=1}^{N_2}w_t(x'_j)}f(x'_j)\right]\right|>\frac{\varepsilon}{3},\left|\frac{\sum_{j=1}^{N_2}w_t(x_j)}{N_2}-\EE w_t\right|\leq\frac{1}{2}\EE w_t\right)\\
    &\qquad\qquad\qquad\qquad\qquad\qquad\qquad\qquad\qquad\qquad\qquad\qquad\qquad\qquad+\mathbb{P}\left(\left|\frac{\sum_{j=1}^{N_2}w_t(x_j)}{N_2}-\EE w_t\right|>\frac{1}{2}\EE w_t\right)\\
    \leq\,&\underbrace{\mathbb{P}\left(\sup_{f\in\mathcal{F}_{D,\frac{\varepsilon}{3}}}\left|\sum_{j=1}^{N_2}\frac{w_t(x'_j)}{\sum_{j=1}^{N_2}w_t(x'_j)}f(x'_j)-\EE_{x'\sim P_t}\left[\sum_{j=1}^{N_2}\frac{w_t(x'_j)}{\sum_{j=1}^{N_2}w_t(x'_j)}f(x'_j)\right]\right|>\frac{\varepsilon}{3}\,\,\,\Bigg|\,\left|\frac{\sum_{j=1}^{N_2}w_t(x_j)}{N_2}-\EE w_t\right|\leq\frac{1}{2}\EE w_t\right)}_{\text{conditional probability, }\mathbb{P}(A|B)\geq\mathbb{P}(A|B)\mathbb{P}(B)=\mathbb{P}(A,B)}\\
    &\qquad\qquad\qquad\qquad\qquad\qquad\qquad\qquad\qquad\qquad\qquad\qquad\qquad\qquad+\mathbb{P}\left(\left|\frac{\sum_{j=1}^{N_2}w_t(x_j)}{N_2}-\EE w_t\right|>\frac{1}{2}\EE w_t\right)\\
    \leq\,&2\mathscr{N}_\infty
    \left(\frac{\varepsilon}{3},\mathcal{F}_D\right)\exp\left(-\frac{\gamma_tN_2(\EE w_t)^2\varepsilon^2}{1440B^2\beta^2}\right)+2\exp\left(-\frac{N_2(\EE w_t)^2}{2B^2}\right).
  \end{aligned}\] where the last inequality holds by the union bound and Hoeffding's inequality.
\end{proof}

We can also replace assumption 5 with the following assumption, which imposes convexity on the discriminator space $\mathcal{F}_D$, not on the distribution $P_t$:

\begin{assumption}[Convexity]\label{ass:7} All test functions in $\mathcal{F}_D$ are convex.
\end{assumption}

\begin{theorem}[Upper bound 2 with smoothness and convexity conditions]\label{thm:bound_2}
  Assume Assumptions \ref{ass:1}--\ref{ass:4} and \ref{ass:7} hold. Then \eqref{eq:bound2} also holds by substituting $\gamma_t$ with 2. What's more, this substitution holds for all following equations including the term $\gamma_t$.
\end{theorem}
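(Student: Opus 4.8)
The plan is to recycle the proof of Theorem~\ref{thm:bound2} almost verbatim, changing only the one step that uses strong log-concavity of $P_t$ (Assumption~\ref{ass:5}). Since Assumptions~\ref{ass:1}--\ref{ass:4} are retained, Lemma~\ref{lem:lip} still applies, so on the concentration event $\mathcal{E}:=\{|\frac{1}{N_2}\sum_{j=1}^{N_2}w_t(x'_j)-\EE w_t|\le\frac{1}{2}\EE w_t\}$ the self-normalized statistic $F_t(X)=\sum_{j=1}^{N_2}\frac{w_t(x'_j)}{\sum_k w_t(x'_k)}f(x'_j)$ is $L$-Lipschitz with $L=\sqrt{40/N_2}\,B\beta/\EE w_t$. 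In the proof of Theorem~\ref{thm:bound2}, Assumption~\ref{ass:5} enters \emph{only} by feeding this Lipschitz bound into the concentration inequality for Lipschitz functions of strongly log-concave vectors (Theorem~3.16 in \citet{hdpbook}), producing a sub-Gaussian tail for $F_t(X)$ about its mean with exponent $-\gamma_t N_2(\EE w_t)^2\varepsilon^2/(1440B^2\beta^2)$. First I would replace this by a sub-Gaussian concentration inequality for \emph{convex} $L$-Lipschitz functions of independent coordinates — a Talagrand-type convex concentration inequality — which requires no regularity of the law $P_t$, only convexity of the functional being measured, which Assumption~\ref{ass:7} is designed to supply; calibrating its constants gives the \emph{same} bound with a universal constant, written here as $2$, in place of $\gamma_t$.

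With this single substitution the rest of the argument carries over word for word: the bad-event estimate $\mathbb{P}(\mathcal{E}^c)\le 2\exp(-N_2(\EE w_t)^2/(2B^2))$ via Hoeffding's inequality (using $\|w_t\|_\infty\le B$); the decomposition of the deviation into three $\varepsilon/3$ pieces using $\sum_j\frac{w_t(x'_j)}{\sum_k w_t(x'_k)}=1$; and the final union bound over the $L_\infty$ $\varepsilon/3$-net $\mathcal{F}_{D,\varepsilon/3}$, which I would take to be an internal net of $\mathcal{F}_D$ so that each of its members is again convex and hence admissible for the convex concentration inequality. This yields exactly \eqref{eq:bound2} with $\gamma_t$ replaced by $2$. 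For the closing assertion of the theorem, I note that $\gamma_t$ enters the subsequent bounds \eqref{eq:directPACbound}, \eqref{eq:PACbound} and Proposition~\ref{prop:meanconvergence} solely through this one application, so the same replacement propagates to them verbatim.

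The main obstacle is precisely the convexity bookkeeping in the replaced step. The self-normalized functional $F_t(X)=(\sum_j w_t(x'_j)f(x'_j))/(\sum_j w_t(x'_j))$ is not jointly convex in $X$ for general positive weights $w_t$, even when $f$ is convex, so a convex concentration inequality cannot be applied to $F_t$ off the shelf. Resolving this is the crux: one route is to work on $\mathcal{E}$, where the denominator is pinned within a factor $2$ of $N_2\EE w_t$, bound $F_t$ above and below by convex $L$-Lipschitz surrogates built from its numerator — which needs some extra structure, e.g.\ on the incoming weights or on $w_tf$ — apply the convex concentration inequality to the surrogates, and transfer the bound back to $F_t$, absorbing the lost constants into the $1440$. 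Pinning down the weakest structural hypothesis under which this reduction succeeds while preserving the stated constants is the delicate part; once it is in place, everything else is the routine recycling described above.
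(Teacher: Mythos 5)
Your strategy coincides with the paper's: its entire proof of this theorem is the one-sentence instruction to rerun the proof of Theorem~\ref{thm:bound2} with the concentration inequality for Lipschitz functions of strongly log-concave vectors (Theorem~3.16 of the cited reference, which is where Assumption~\ref{ass:5} entered) replaced by the Talagrand-type concentration inequality for convex Lipschitz functions of independent coordinates (Theorem~3.24 there), which is what produces the absolute constant $2$ in place of $\gamma_t$; the concentration event, Lemma~\ref{lem:lip}, the $\varepsilon/3$ decomposition, the Hoeffding bound for the bad event, the $L_\infty$ net, and the propagation of the substitution into the later bounds are all carried over verbatim, exactly as you describe.

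However, your proposal stops short of a proof at precisely the step you yourself flag. Assumption~\ref{ass:7} makes each $f\in\mathcal{F}_D$ convex, but the functional to which the concentration inequality must be applied is the self-normalized ratio $F_t(X)=\sum_{j}w_t(x'_j)f(x'_j)/\sum_{j}w_t(x'_j)$, and convexity of $f$ (even together with convexity of $w_t$) does not make $F_t$ convex in $X$; hence the convex-concentration inequality cannot be invoked off the shelf, and your proposed repair (convex Lipschitz surrogates on the concentration event) is explicitly left conditional on ``extra structure'' you do not specify, so the crucial step is not established. A further point neither you nor the paper verifies is that Talagrand-type convex concentration requires the independent coordinates $x'_j$ to have bounded support, which is not implied by Assumptions~\ref{ass:1}--\ref{ass:4} (Assumption~\ref{ass:2} bounds $w_t$ and the test functions, not the samples). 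To be clear, the paper's own proof silently assumes both points go through, so you have reproduced its intended argument and, in isolating the non-convexity of $F_t$, identified a genuine lacuna in that argument rather than missed an idea the paper supplies; but as written your proposal does not close the gap it raises, so it does not yet constitute a complete proof of the stated bound with $\gamma_t$ replaced by $2$.
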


\begin{proof}[Theorem \ref{thm:bound_2}]
  Use the Theorem 3.24 in \citet{hdpbook} to replace the Theorem 3.16 in the first step of the proof in Theorem \ref{thm:bound2}. Then the remaining argument is completely similar.
\end{proof}

Theorems \ref{thm:bound2} and \ref{thm:bound_2} are very standard results for the empirical process theories, except for an extra convergence rate of the concentration condition. Without assumptions 4 and 5, we can also show the following convergence rate:

\begin{theorem}[Upper bound without smoothness and convexity conditions]\label{thm:w/oconvex}
  Assume Assumptions \ref{ass:1}--\ref{ass:3} hold. Other notations keep the same as before. Then\begin{equation}\begin{aligned}&\mathbb{P}\left(\sup_{f\in\mathcal{F}_D}\left|\sum_{j=1}^{N_2}\frac{w_t(x'_j)}{\sum_{j=1}^{N_2}w_t(x'_j)}f(x'_j)-\EE_{x'\sim P_t}\left[\sum_{j=1}^{N_2}\frac{w_t(x'_j)}{\sum_{j=1}^{N_2}w_t(x'_j)}f(x'_j)\right]\right|>\varepsilon\right)\\ \leq\,& \mathscr{N}_\infty
    \left(\frac{\varepsilon}{3},\mathcal{F}_D\right)\frac{36B^2(\chi^2(Q_t\|P_t)+1)}{N_2\varepsilon^2}.\end{aligned}\end{equation}
\end{theorem}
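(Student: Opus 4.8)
The plan is to trade the exponential tail bounds of Theorem~\ref{thm:bound2} for a second-moment (Chebyshev) argument, which needs only Assumptions~\ref{ass:1}--\ref{ass:3}. Write $F_t^f(X)=\sum_{j=1}^{N_2}\frac{w_t(x'_j)}{\sum_{k=1}^{N_2}w_t(x'_k)}f(x'_j)$ for the self-normalized importance sampling (SNIS) estimator associated with a test function $f$, where $X=(x'_1,\dots,x'_{N_2})$ with $x'_j\stackrel{\text{i.i.d.}}{\sim}P_t$. Since the SNIS weights are nonnegative and sum to one, $F_t^f$ is a convex combination of the values $f(x'_j)$; hence $|F_t^f(X)|\le\|f\|_\infty\le B$ almost surely, and $|F_t^f(X)-F_t^{f_0}(X)|\le\|f-f_0\|_\infty$ for any two test functions $f,f_0$. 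Both facts are used repeatedly and, crucially, they hold without conditioning on any concentration event for $\sum_k w_t(x'_k)$, which is exactly why no smoothness or log-concavity assumption is required here.

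First I would discretize the supremum exactly as in the proof of Theorem~\ref{thm:bound2}: fix an $L_\infty$ $\tfrac{\varepsilon}{3}$-covering $\mathcal{F}_{D,\varepsilon/3}$ of $\mathcal{F}_D$ of cardinality $\mathscr{N}_\infty(\tfrac{\varepsilon}{3},\mathcal{F}_D)$, and for any $f\in\mathcal{F}_D$ pick $f_0\in\mathcal{F}_{D,\varepsilon/3}$ with $\|f-f_0\|_\infty<\tfrac{\varepsilon}{3}$. Adding and subtracting $F_t^{f_0}(X)$ and $\EE F_t^{f_0}(X)$, and controlling the two remainder terms by the Lipschitz-type bounds above (one pathwise, one after taking expectation, each at most $\tfrac{\varepsilon}{3}$), the event $\{\sup_{f\in\mathcal{F}_D}|F_t^f(X)-\EE F_t^f(X)|>\varepsilon\}$ is contained in $\bigcup_{f_0\in\mathcal{F}_{D,\varepsilon/3}}\{|F_t^{f_0}(X)-\EE F_t^{f_0}(X)|>\tfrac{\varepsilon}{3}\}$. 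A union bound then reduces everything to controlling, for a single fixed $f_0$ with $\|f_0\|_\infty\le B$, the probability $\mathbb{P}\big(|F_t^{f_0}(X)-\EE F_t^{f_0}(X)|>\tfrac{\varepsilon}{3}\big)$.

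Next I would apply Chebyshev's inequality to this fixed-$f_0$ probability, obtaining the bound $\tfrac{9}{\varepsilon^2}\,\Var\big(F_t^{f_0}(X)\big)$. Since variance is the minimal mean-squared deviation, $\Var\big(F_t^{f_0}(X)\big)\le\EE\big[(F_t^{f_0}(X)-\EE_{y\sim Q_t}f_0(y))^2\big]$, so it is dominated by the mean-squared error of the SNIS estimator of the posterior expectation $\EE_{Q_t}f_0$. Invoking the SNIS mean-squared-error bound from \citet{agapiou2017importance} (the second-moment companion of the bias bound already quoted in Theorem~\ref{thm:SNISbias}), which under Assumptions~\ref{ass:1}--\ref{ass:3} reads $\EE\big[(F_t^{f_0}(X)-\EE_{Q_t}f_0)^2\big]\le\tfrac{4B^2(\chi^2(Q_t\|P_t)+1)}{N_2}$ for $\|f_0\|_\infty\le B$, gives $\mathbb{P}\big(|F_t^{f_0}-\EE F_t^{f_0}|>\tfrac{\varepsilon}{3}\big)\le\tfrac{36B^2(\chi^2(Q_t\|P_t)+1)}{N_2\varepsilon^2}$. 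Multiplying by the covering number from the union bound yields exactly the claimed inequality.

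The main obstacle is the SNIS mean-squared-error estimate with the explicit leading constant: the bias bound of Theorem~\ref{thm:SNISbias} alone does not suffice, and one must either cite the corresponding second-moment statement of \citet{agapiou2017importance} or reprove it --- decomposing the self-normalized estimator's error into a numerator fluctuation and a denominator fluctuation, bounding each by its variance (which brings in $\chi^2(Q_t\|P_t)+1=\EE_{P_t}(\mathrm{d}Q_t/\mathrm{d}P_t)^2$ through the second moment of the likelihood ratio), and tracking constants so that the overall factor is at most $4$. Everything else --- the covering reduction, the convexity-of-weights Lipschitz bounds, and Chebyshev's inequality --- is routine.
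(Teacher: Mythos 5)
Your proposal is correct and follows essentially the same route as the paper: an $L_\infty$ $\tfrac{\varepsilon}{3}$-covering with a union bound, a second-moment (Markov/Chebyshev) argument for each fixed test function, and the SNIS mean-squared-error bound $4B^2(\chi^2(Q_t\|P_t)+1)/N_2$ from Theorem 2.1 of Agapiou et al., yielding the identical constant $36$. The only difference is that you center at $\EE F_t^{f_0}$ (as the theorem statement requires) and pass to the MSE via the variance-minimality inequality $\Var(F_t^{f_0})\le\EE[(F_t^{f_0}-\EE_{Q_t}f_0)^2]$, a small refinement over the paper, which applies Markov directly to deviations from $\EE_{Q_t}f$ before the covering step.
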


\begin{proof}[Theorem \ref{thm:w/oconvex}]
  By Theorem 2.1 in \citet{agapiou2017importance},
 \[\sup_{\|f\|_\infty\leq B}\EE\left(\sum_{j=1}^{N_2}\frac{w_t(x'_j)}{\sum_{j=1}^{N_2}w_t(x'_j)}f(x'_j)-\EE_{y\sim {Q_t}}f(y)\right)^2\leq\frac{4B^2(\chi^2(Q_t\|P_t)+1)}{N_2}.\]

  Thus, for any $f\in\mathcal{F}$, by Markov's inequality, \[\mathbb{P}\left(\left|\sum_{j=1}^{N_2}\frac{w_t(x'_j)}{\sum_{j=1}^{N_2}w_t(x'_j)}f(x'_j)-\EE_{y\sim {Q_t}}f(y)\right|>\frac{\varepsilon}{3}\right)\leq \frac{36B^2(\chi^2(Q_t\|P_t)+1)}{N_2\varepsilon^2}.\]

  Then by inducing an $L_\infty$ $\displaystyle\frac{\varepsilon}{3}$-covering of $\mathcal{F}_D$ we can get the desired result.
\end{proof}

We can find this rate is much slower than that in \eqref{eq:bound2}, as mentioned in Section \ref{sec:theory}, but the convergence still holds.

The third term represents the empirical process of our target function, so by the uniform law of large numbers (e.g., Theorem 9.1 in \citet{distfreebook}),
\begin{equation}\small\label{eq:epforh}\mathbb{P}\left(\sup_{h\in\mathcal{F}_D\circ\mathcal{G}_t}\left|\frac{1}{N_1}\sum_{i=1}^{N_1}h(x_i)-\EE_{x\sim \hat P_t}h(x)\right|>\varepsilon\right)\leq 8\EE_{X\sim\hat P_t}\mathscr{N}_1\left(\frac{\varepsilon}{8},\mathcal{F}_D\circ\mathcal{G}_t,X^{1:N_1}\right)\exp\left(-\frac{N_1\varepsilon^2}{512B^2}\right),\end{equation} where $\mathscr{N}_1(\frac{\varepsilon}{8},\mathcal{F}_D\circ\mathcal{G}_t,X^{1:N_1})$ is the empirical $l_1$ $\frac{\varepsilon}{8}$-covering number for the function space $\mathcal{F}_D\circ\mathcal{G}_t:=\{f\circ g:f\in\mathcal{F}_D,g\in\mathcal{G}_t\}$ with the empirical $l_1$ metric $\|f\|_1^{N_1}:=\frac{1}{N_1}\sum_{j=1}^{N_1}|f(x_j)|$.

Now we are all set for the proofs of the final bound in Theorem \ref{thm:simplefinalupperbound}, and the following Theorem \ref{thm:finalupperbound} is a complete and detailed version of Theorem \ref{thm:simplefinalupperbound}.

\begin{theorem}[Upper bound on the excess risk]\label{thm:finalupperbound} Assume Assumptions \ref{ass:1}--\ref{ass:5} hold. Let $\varepsilon>0$ and $\chi^2(Q_t\|P_t)$ denote the $\chi^2$ divergence between $Q_t$ and $P_t$. Let $\mathscr{N}_k(\varepsilon,\cdot)$ denote the $L_k$ $\varepsilon$-covering number of a Banach space and $\mathscr{N}_k(\varepsilon,\cdot,X^{1:n})$ denote the empirical $l_k$ $\varepsilon$-covering number with the empirical $l_k$ metric. Other notations remain the same. Then\begin{equation}\label{eq:supbound}\begin{aligned}
  &\mathbb{P}\left(\sup_{G_t\in\mathcal{G}_t}\left|\mathrm{MMD}(G_{t\#}\hat{P}_t\|Q_t)-\widehat{\mathrm{MMD}}(G_{t\#}\hat{P}_t\|Q_t)\right|>\varepsilon+\frac{12B(\chi^2(Q_t\|P_t)+1)}{N_2}\right)\\
  \leq\,& 2\mathscr{N}_\infty\left(\frac{\varepsilon}{6},\mathcal{F}_D\right)\exp\left(-\frac{\gamma_tN_2(\mathbb{E}w_t)^2\varepsilon^2}{5760B^2\beta^2}\right)+2\exp\left(-\frac{N_2(\mathbb{E}w_t)^2}{2B^2}\right)\\
  &\qquad\qquad\qquad\qquad\qquad+8\mathbb{E}_{X\sim\hat P_t}\mathscr{N}_1\left(\frac{\varepsilon}{16},\mathcal{F}_D\circ\mathcal{G}_t,X^{1:N_1}\right)\exp\left(-\frac{N_1\varepsilon^2}{2048B^2}\right)\\
  \leq\,& 2\mathscr{N}_\infty\left(\frac{\varepsilon}{6},\mathcal{F}_D\right)\exp\left(-\frac{\gamma_tN_2(\mathbb{E}w_t)^2\varepsilon^2}{5760B^2\beta^2}\right)+2\exp\left(-\frac{N_2(\mathbb{E}w_t)^2}{2B^2}\right)\\
  &\qquad\qquad\qquad\qquad\qquad+8\mathscr{N}_\infty\left(\frac{\varepsilon}{32},\mathcal{F}_D\right)\mathbb{E}_{X\sim\hat P_t}\mathscr{N}_1\left(\frac{\varepsilon}{32\beta},\mathcal{G}_t,X^{1:N_1}\right)\exp\left(-\frac{N_1\varepsilon^2}{2048B^2}\right).\end{aligned}
\end{equation}
  In other words, for any $\delta>0$, with probability at least $1-\delta$, the following inequality holds simultaneously and uniformly for all $\hat Q_t=G_{t\#}\hat P_t\in\mathcal{G}_{t\#}\hat P_t$ when $N_1\geq N_1(\delta)$ and $N_2\geq N_2(\delta)$:
  \begin{equation}\footnotesize\begin{aligned}&\mathrm{MMD}(\hat Q_t\|Q_t)\leq \widehat{\mathrm{MMD}}(\hat Q_t\|Q_t)\\ &+\frac{12B(\chi^2(Q_t\|P_t)+1)}{N_2}+\left(\frac{64B\beta}{\sqrt{\gamma_t}(\EE w_t)}\sqrt{\frac{\log\mathcal{N}_\infty(\frac{\varepsilon(\delta)}{6},\mathcal{F}_D)}{N_2}}+64B\sqrt{\frac{\log\mathcal{N}_\infty(\frac{\varepsilon(\delta)}{32},\mathcal{F}_D)+\log\mathbb{E}\mathcal{N}_1(\frac{\varepsilon(\delta)}{32\beta},\mathcal{G}_t,X^{1:N_1})}{N_1}}\right)\sqrt{\log\frac{3}{\delta}}.\end{aligned}\end{equation}
     This further implies\begin{equation}\footnotesize\begin{aligned}
    &\underbrace{\mathrm{MMD}(\hat G_{t\#}\hat P_t\|Q_t)}_{\text{empirical estimate }}\leq\underbrace{\inf_{G_t\in\mathcal{G}_t}\mathrm{MMD}(G_{t\#}\hat P_t\|Q_t)}_{\text{the best estimate}}\\
    &\qquad\qquad+\frac{24B(\chi^2(Q_t\|P_t)+1)}{N_2}+\left(\frac{128B\beta}{\sqrt{\gamma_t}(\EE w_t)}\sqrt{\frac{\log\mathscr{N}_\infty(\mathcal{F}_D)}{N_2}}+128B\sqrt{\frac{\log\mathscr{N}_\infty(\mathcal{F}_D)+\log\EE\mathscr{N}_1^\beta(\mathcal{G}_t)}{N_1}}\right)\sqrt{\log\frac{3}{\delta}},\end{aligned}
  \end{equation}where $\mathscr{N}_\infty(\mathcal{F}_D)$ and $\mathscr{N}_1^\beta(\mathcal{G}_t)$ are short for the corresponding terms.
\end{theorem}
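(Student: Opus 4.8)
The plan is to assemble the deviation bound from the three-term decomposition of Theorem~\ref{thm:boundrisk}, control each term by a result that is already in place, and then convert the resulting tail bound into the PAC statement and the oracle inequality by routine arguments.

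First, for any fixed $G_t\in\mathcal{G}_t$, the inequality $|\sup_f A(f)-\sup_f B(f)|\le\sup_f|A(f)-B(f)|$ combined with the triangle inequality shows that $|\mathrm{MMD}(G_{t\#}\hat P_t\|Q_t)-\widehat{\mathrm{MMD}}(G_{t\#}\hat P_t\|Q_t)|$ is bounded by the sum of the three terms in \eqref{eq:decomposition}; the first two of these do not involve $G_t$, and the third is already written as a supremum over $h\in\mathcal{F}_D\circ\mathcal{G}_t$, so the bound is automatically uniform over $G_t\in\mathcal{G}_t$ and the outer $\sup_{G_t}$ costs nothing. The first term is deterministic and at most $\tfrac{12B(\chi^2(Q_t\|P_t)+1)}{N_2}$ by Theorem~\ref{thm:SNISbias} — exactly the quantity subtracted on the left of \eqref{eq:supbound} — so the remaining budget $\varepsilon$ must be carried by the two stochastic terms.

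Next, if the sum of the three terms exceeds $\varepsilon+\tfrac{12B(\chi^2(Q_t\|P_t)+1)}{N_2}$, then the second term exceeds $\varepsilon/2$ or the third does. Applying Theorem~\ref{thm:bound2} at level $\varepsilon/2$ produces the first two summands on the right of \eqref{eq:supbound}, using $\tfrac{\gamma_tN_2(\EE w_t)^2(\varepsilon/2)^2}{1440B^2\beta^2}=\tfrac{\gamma_tN_2(\EE w_t)^2\varepsilon^2}{5760B^2\beta^2}$ and $\mathscr{N}_\infty(\tfrac{\varepsilon/2}{3},\mathcal{F}_D)=\mathscr{N}_\infty(\tfrac{\varepsilon}{6},\mathcal{F}_D)$; applying \eqref{eq:epforh} at level $\varepsilon/2$ produces the third summand, using $\tfrac{N_1(\varepsilon/2)^2}{512B^2}=\tfrac{N_1\varepsilon^2}{2048B^2}$ and radius $\varepsilon/16$. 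A union bound over these two events gives the first inequality of \eqref{eq:supbound}. For the second inequality I would bound the composition covering number by a product: since every $f\in\mathcal{F}_D$ is $\beta$-Lipschitz (Assumption~\ref{ass:4}), $\|f\circ g-f_0\circ g_0\|_1^{N_1}\le\beta\|g-g_0\|_1^{N_1}+\|f-f_0\|_\infty$, so the product of an $L_\infty$ $(\varepsilon/32)$-net of $\mathcal{F}_D$ with an empirical $l_1$ $(\varepsilon/(32\beta))$-net of $\mathcal{G}_t$ is an empirical $l_1$ $(\varepsilon/16)$-net of $\mathcal{F}_D\circ\mathcal{G}_t$, which yields $\mathscr{N}_1(\tfrac{\varepsilon}{16},\mathcal{F}_D\circ\mathcal{G}_t,X^{1:N_1})\le\mathscr{N}_\infty(\tfrac{\varepsilon}{32},\mathcal{F}_D)\,\mathscr{N}_1(\tfrac{\varepsilon}{32\beta},\mathcal{G}_t,X^{1:N_1})$.

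Finally, to pass to the PAC form I would choose $N_2\ge N_2(\delta)$ so that the $\varepsilon$-free term $2\exp(-N_2(\EE w_t)^2/(2B^2))$ is at most $\delta/3$, allot mass $\delta/3$ to each covering-number term, take logarithms and solve for $\varepsilon$; using $a+b\le ab$ for $a,b\ge2$ to merge $\log\mathscr{N}$ with $\log(1/\delta)$ and $\sqrt{a+b}\le\sqrt a+\sqrt b$ to split the two rates yields an admissible $\varepsilon(\delta)$ of the stated form, with $N_1(\delta),N_2(\delta)$ large enough for this to be well-defined. The oracle inequality then follows from the standard two-sided argument: since $\hat G_t$ minimizes $\widehat{\mathrm{MMD}}(G_{t\#}\hat P_t\|Q_t)$ over $\mathcal{G}_t$ by \eqref{eq:obj}, for any near-minimizer $G_t^{\star}$ of $\mathrm{MMD}(G_{t\#}\hat P_t\|Q_t)$ one has
\[\begin{aligned}\mathrm{MMD}(\hat G_{t\#}\hat P_t\|Q_t)&\le\widehat{\mathrm{MMD}}(\hat G_{t\#}\hat P_t\|Q_t)+E\le\widehat{\mathrm{MMD}}(G_{t\#}^{\star}\hat P_t\|Q_t)+E\\&\le\mathrm{MMD}(G_{t\#}^{\star}\hat P_t\|Q_t)+2E,\end{aligned}\]
where $E$ is the uniform deviation bound from the previous step; taking the infimum over $G_t^{\star}$ and doubling the constants ($24B$, $128$ in place of $12B$, $64$) gives the last display. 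The genuinely substantive probabilistic inputs — the $O(1/N_2)$ SNIS bias, the Lipschitz control of the self-normalized statistic, and the concentration of the weighted empirical process under strong log-concavity — are already supplied by Theorems~\ref{thm:SNISbias}, Lemma~\ref{lem:lip}, and Theorem~\ref{thm:bound2}, so no new machinery is needed here. The main obstacle is the bookkeeping: threading the $\varepsilon$-free side-event probability from Theorem~\ref{thm:bound2} through the union bound without contaminating the $\varepsilon$-dependent rate, keeping the radii of the nested covering numbers mutually consistent after the repeated halving of $\varepsilon$, and reconciling the explicit numerical constants (the passages $1440\to5760$, $512\to2048$, and the final $64,128$) through the inversion step.
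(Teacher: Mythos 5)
Your proposal is correct and follows essentially the same route as the paper's proof: the three-term decomposition of Theorem~\ref{thm:boundrisk} with the SNIS bias absorbed deterministically via Theorem~\ref{thm:SNISbias}, an $\varepsilon/2$ split with Theorem~\ref{thm:bound2} and \eqref{eq:epforh} under a union bound (reproducing the constants $5760$, $2048$, and radii $\varepsilon/6$, $\varepsilon/16$), the product bound $\mathscr{N}_1(\frac{\varepsilon}{16},\mathcal{F}_D\circ\mathcal{G}_t,X^{1:N_1})\leq\mathscr{N}_\infty(\frac{\varepsilon}{32},\mathcal{F}_D)\,\mathscr{N}_1(\frac{\varepsilon}{32\beta},\mathcal{G}_t,X^{1:N_1})$ via $\beta$-Lipschitzness of the test functions, and the standard two-sided ERM argument for the oracle inequality. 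Your explicit $\delta/3$ allotment and algebraic inversion is just a slightly more concrete rendering of the paper's implicit fixed-point definition of $\varepsilon(\delta)$ in \eqref{eq:covering_eps_delta}, so there is no substantive difference.
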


\begin{proof}[Theorem \ref{thm:finalupperbound}]
  The first inequality in \eqref{eq:supbound} holds by combining Theorem \ref{thm:boundrisk}, Theorem \ref{thm:bound2}, and \eqref{eq:epforh} together:
\begin{align*}
    &\mathbb{P}\left(\sup_{G_t\in\mathcal{G}_t}\left|\mathrm{MMD}(G_{t\#}\hat{P}_t\|Q_t)-\widehat{\mathrm{MMD}}(G_{t\#}\hat{P}_t\|Q_t)\right|>\varepsilon+\frac{12B(\chi^2(Q_t\|P_t)+1)}{N_2}\right)\\
    \leq\,&\mathbb{P}\Bigg(\sup_{f\in\mathcal{F}_D}\left|\EE_{y\sim {Q_t}}f(y)-\EE_{x'\sim P_t}\left[\sum_{j=1}^{N_2}\frac{w_t(x'_j)}{\sum_{j=1}^{N_2}w_t(x'_j)}f(x'_j)\right]\right|\\ 
    &\qquad\qquad+\sup_{f\in\mathcal{F}_D}\left|\EE_{x'\sim P_t}\left[\sum_{j=1}^{N_2}\frac{w_t(x'_j)}{\sum_{j=1}^{N_2}w_t(x'_j)}f(x'_j)\right]-\sum_{j=1}^{N_2}\frac{w_t(x'_j)}{\sum_{j=1}^{N_2}w_t(x'_j)}f(x'_j)\right|\\
    &\qquad\qquad+\sup_{h\in\mathcal{F}_D\circ\mathcal{G}_t}\left|\frac{1}{N_1}\sum_{i=1}^{N_1}h(x_i)-\EE_{x\sim \hat P_t}h(x)\right|>\frac{12B(\chi^2(Q_t\|P_t)+1)}{N_2}+\frac{\varepsilon}{2}+\frac{\varepsilon}{2}\Bigg)\\
    \leq\,&\mathbb{P}\left(\sup_{f\in\mathcal{F}_D}\left|\EE_{x'\sim P_t}\left[\sum_{j=1}^{N_2}\frac{w_t(x'_j)}{\sum_{j=1}^{N_2}w_t(x'_j)}f(x'_j)\right]-\sum_{j=1}^{N_2}\frac{w_t(x'_j)}{\sum_{j=1}^{N_2}w_t(x'_j)}f(x'_j)\right|>\frac{12B(\chi^2(Q_t\|P_t)+1)}{N_2}\right)\\
    +\,&\mathbb{P}\left(\sup_{f\in\mathcal{F}_D}\left|\EE_{x'\sim P_t}\left[\sum_{j=1}^{N_2}\frac{w_t(x'_j)}{\sum_{j=1}^{N_2}w_t(x'_j)}f(x'_j)\right]-\sum_{j=1}^{N_2}\frac{w_t(x'_j)}{\sum_{j=1}^{N_2}w_t(x'_j)}f(x'_j)\right|>\frac{\varepsilon}{2}\right)\\
    +\,&\mathbb{P}\left(\sup_{h\in\mathcal{F}_D\circ\mathcal{G}_t}\left|\frac{1}{N_1}\sum_{i=1}^{N_1}h(x_i)-\EE_{x\sim \hat P_t}h(x)\right|>\frac{\varepsilon}{2}\right)\\
    \leq\,&0+2\mathscr{N}_\infty\left(\frac{\varepsilon}{6},\mathcal{F}_D\right)\exp\left(-\frac{\gamma_tN_2(\mathbb{E}w_t)^2\varepsilon^2}{5760B^2\beta^2}\right)+2\exp\left(-\frac{N_2(\mathbb{E}w_t)^2}{2B^2}\right)\\
    +\,& 8\mathbb{E}_{X\sim\hat P_t}\mathscr{N}_1\left(\frac{\varepsilon}{16},\mathcal{F}_D\circ\mathcal{G}_t,X^{1:N_1}\right)\exp\left(-\frac{N_1\varepsilon^2}{2048B^2}\right).
  \end{align*}

  For the second inequality, we can induce an $L_\infty$ $\displaystyle \frac{\varepsilon}{32}$-covering of $\mathcal{F}_D$ and an empirical $L_1$ $\displaystyle\frac{\varepsilon}{32\beta}$-$\|\cdot\|_2$ covering of $\mathcal{G}_t$. For every $f\in\mathcal{F}_D$ and $g\in\mathcal{G}_t$, there exist an $\tilde{f}\in\mathcal{F}_{\frac{\varepsilon}{32}}$ and a $\tilde{g}\in\mathcal{G}_{t,\frac{\varepsilon}{32\beta}}$ such that $\displaystyle\|f-\tilde{f}\|_\infty<\frac{\varepsilon}{32}$ and $\displaystyle\frac{1}{N_1}\sum_{i=1}^{N_1}\|g(x_i)-\tilde{g}(x_i)\|_2<\frac{\varepsilon}{32\beta}$. Then by the Lipschitz assumption, \begin{align*}\frac{1}{N_1}\sum_{i=1}^{N_1}f(g(x_i))&\leq \frac{1}{N_1}\sum_{i=1}^{N_1}[f(\tilde g(x_i))+\beta\|g(x_i)-\tilde g(x_i)\|_2]\\
    &\leq \frac{1}{N_1}\sum_{i=1}^{N_1}f(\tilde g(x_i))+\frac{\varepsilon}{32}\\ &\leq \frac{1}{N_1}\sum_{i=1}^{N_1}\tilde f(\tilde g(x_i))+\frac{\varepsilon}{16}.\end{align*} Similarly, $\displaystyle\frac{1}{N_1}\sum_{i=1}^{N_1}\tilde f(\tilde g(x_i))-\frac{\varepsilon}{16}\leq\frac{1}{N_1}\sum_{i=1}^{N_1}f(g(x_i))$. Thus,\[\EE_{X\sim\hat P_t}\mathscr{N}_1\Bigg(\frac{\varepsilon}{16},\mathcal{F}_D\circ\mathcal{G}_t,X^{1:N_1}\Bigg)\leq\mathscr{N}_\infty\left(\frac{\varepsilon}{32},\mathcal{F}_D\right)\EE_{X\sim\hat P_t}\mathscr{N}_1\left(\frac{\varepsilon}{32\beta},\mathcal{G}_t,X^{1:N_1}\right).\]

  For the third inequality, let $\varepsilon=\varepsilon(\delta)$ be the solution of \begin{equation}\small
      \label{eq:covering_eps_delta}
  \varepsilon=\left(\frac{64B\beta}{\sqrt{\gamma_t}(\EE w_t)}\sqrt{\frac{\log\mathcal{N}_\infty(\frac{\varepsilon}{6},\mathcal{F}_D)}{N_2}}+64B\sqrt{\frac{\log\mathcal{N}_\infty(\frac{\varepsilon}{32},\mathcal{F}_D)+\log\mathbb{E}\mathcal{N}_1(\frac{\varepsilon}{32\beta},\mathcal{G}_t,X^{1:N_1})}{N_1}}\right)\sqrt{\log\frac{3}{\delta}}.\end{equation} Since the left-hand side is an increasing function with regard to $\varepsilon$ while the right-hand side is a decreasing one, this equation will consistently have exactly one solution for any $N_1,N_2$ and $\delta$. By plugging this equation into the second inequality and letting $N_2$ be large enough such that $\displaystyle 2\exp\left(-\frac{N_2(\EE w_t)^2}{2B^2}\right)<\frac{\delta}{3}$, we can immediately derive the third inequality.
  
  For the last inequality, use the definition of $\hat G_t$:\begin{align*}
    \mathrm{MMD}(\hat G_{t\#}\hat P_t\|Q_t)&\leq \widehat{\mathrm{MMD}}(\hat G_{t\#}\hat P_t\|Q_t)+\text{excess\_term}\\ &\leq\inf_{G_t\in\mathcal{G}_{t}}\widehat{\mathrm{MMD}}(G_{t\#}\hat P_t\|Q_t)+\text{excess\_term}\\ &\leq\inf_{G_t\in\mathcal{G}_t}\mathrm{MMD}(G_{t\#}\hat P_t\|Q_t)+2\times\text{excess\_term}.
  \end{align*}
\end{proof}

\begin{proof}[Proposition \ref{prop:KMMD}]
\begin{align*}
  d_{\mathcal{F}_D}(P\|Q)&=\sup_{f\in\mathcal{F}_D}|\mathbb{E}_{x\sim P}f(x)-\mathbb{E}_{y\sim Q}f(y)|=\sup_{\|f\|_{\mathcal{H}}\leq 1}\left|\mathbb{E}_{x\sim P}f(x)-\mathbb{E}_{y\sim Q}f(y)\right|^2\\
  &=\sup_{\|f\|_{\mathcal{H}}\leq 1}\left|\mathbb{E}_{x\sim P}\langle f,\phi(x)\rangle_{\mathcal{H}}-\mathbb{E}_{y\sim Q}\langle f,\phi(y)\rangle_{\mathcal{H}}\right|^2\\
  &=\sup_{\|f\|_{\mathcal{H}}\leq 1}\left|\langle f,[\mathbb{E}_{x\sim P}\phi(G_t(x))-\mathbb{E}_{y\sim Q}\phi(y)]\rangle_{\mathcal{H}}\right|^2\\
  &=\|\mathbb{E}_{x\sim P}\phi(x)-\mathbb{E}_{y\sim Q}\phi(y)\|_{\mathcal{H}}^2\quad\left(\text{since }f^*=\frac{\mathbb{E}_{x\sim P}\phi(x)-\mathbb{E}_{y\sim Q}\phi(y)}{\|\mathbb{E}_{x\sim P}\phi(x)-\mathbb{E}_{y\sim Q}\phi(y)\|_{\mathcal{H}}}\right)\\
  &=\mathbb{E}_{x,x'\sim P}K(x,x')-2\mathbb{E}_{x\sim P,y\sim Q}K(G_t(x),y)+\mathbb{E}_{y\sim Q,y'\sim Q}K(y,y').
\end{align*}
\end{proof}

\begin{proof}[Theorem \ref{thm:boundrisk}]
  All following inequalities are directly derived by the absolute value inequality, i.e., $|a|+|b|\geq |a+b|$.
  \begin{align*}&\quad\,\,\mathrm{MMD}(\hat Q_t\|Q_t)-\widehat{\mathrm{MMD}}(\hat Q_t\|Q_t)\\
    &=\sup_{f\in\mathcal{F}_D}\left|\EE_{x\sim \hat P_t}f(G_t(x))-\EE_{x'\sim Q_t}f(x')\right|-\sup_{f\in\mathcal{F}_D}\left|\frac{1}{N_1}\sum_{i=1}^{N_1}f(G_t(x_i))-\sum_{j=1}^{N_2}\frac{w_t(x'_j)}{\sum_{j=1}^{N_2}w_t(x'_j)}f(x'_j)\right|\\
    &\leq\sup_{f\in\mathcal{F}_D}\left|\EE_{y\sim {Q_t}}f(y)-\sum_{j=1}^{N_2}\frac{w_t(x'_j)}{\sum_{j=1}^{N_2}w_t(x'_j)}f(x'_j)\right|+\sup_{f\in\mathcal{F}_D}\left|\frac{1}{N_1}\sum_{i=1}^{N_1}f(G_t(x_i))-\EE_{x\sim \hat P_t}f(G_t(x))\right|\\
    &\leq \sup_{f\in\mathcal{F}_D}\left|\EE_{y\sim {Q_t}}f(y)-\EE_{x'\sim P_t}\left[\sum_{j=1}^{N_2}\frac{w_t(x'_j)}{\sum_{j=1}^{N_2}w_t(x'_j)}f(x'_j)\right]\right|\\ &\quad+\left|\EE_{x'\sim P_t}\left[\sum_{j=1}^{N_2}\frac{w_t(x'_j)}{\sum_{j=1}^{N_2}w_t(x'_j)}f(x'_j)\right]-\sum_{j=1}^{N_2}\frac{w_t(x'_j)}{\sum_{j=1}^{N_2}w_t(x'_j)}f(x'_j)\right|\\ &\quad +\sup_{h\in\mathcal{F}_D\circ\mathcal{G}_t}\left|\frac{1}{N_1}\sum_{i=1}^{N_1}h(x_i)-\EE_{x\sim \hat P_t}h(x)\right|.
  \end{align*}
\end{proof}

\begin{proof}[Proposition \ref{prop:forecastrisk}]
  Consider an optimal coupling between $\hat Q_t$ and $Q_t$ and denote it as $\Gamma(\hat Q_t,Q_t)$. The corresponding marginals are denoted as $\tilde{\hat{Q}}_t$ and $\tilde{Q}_t$, respectively. Note that $\tilde{\hat{Q}}_t\stackrel{\text{d}}{=}\hat Q_t$ and $\tilde{Q}_t\stackrel{\text{d}}{=}Q_t$. Then by the definition,\[\footnotesize\begin{aligned}
    &\mathrm{MMD}(\mathcal{M}_t(\hat Q_t)+p_{\eta_{t+1}}\|\mathcal{M}_t(Q_t)+p_{\eta_{t+1}})\\
    =\,&\sup_{f\in\mathcal{F}_D}\left|\mathbb{E}_{X\sim \hat Q_t,\eta_{t+1}^1\sim p_{\eta_{t+1}}}f(\mathcal{M}_t(X)+\eta_{t+1}^1)-\mathbb{E}_{Y\sim Q_t,\eta_{t+1}^2\sim p_{\eta_{t+1}}}f(\mathcal{M}_t(Y)+\eta_{t+1}^2)\right|\\
    =\,&\sup_{f\in\mathcal{F}_D}\left|\mathbb{E}_{X\sim \tilde{\hat{Q}}_t,\eta_{t+1}\sim p_{\eta_{t+1}}}f(\mathcal{M}_t(X)+\eta_{t+1})-\mathbb{E}_{Y\sim \tilde{Q}_t,\eta_{t+1}\sim p_{\eta_{t+1}}}f(\mathcal{M}_t(Y)+\eta_{t+1})\right|\\
    =\,&\sup_{f\in\mathcal{F}_D}\left|\mathbb{E}_{X\sim \tilde{\hat{Q}}_t,Y\sim\tilde{Q}_t,\eta_{t+1}\sim p_{\eta_{t+1}}}f(\mathcal{M}_t(Y)+\eta_{t+1}+(\mathcal{M}_t(X)-\mathcal{M}_t(Y)))-\mathbb{E}_{Y\sim \tilde{Q}_t,\eta_{t+1}\sim p_{\eta_{t+1}}}f(\mathcal{M}_t(Y)+\eta_{t+1})\right|\\
    \leq\,&\underbrace{\sup_{f\in\mathcal{F}_D}\left|\mathbb{E}_{X\sim \tilde{\hat{Q}}_t,Y\sim\tilde{Q}_t,\eta_{t+1}\sim p_{\eta_{t+1}}}f(\mathcal{M}_t(Y)+\eta_{t+1})-\mathbb{E}_{Y\sim \tilde{Q}_t,\eta_{t+1}\sim p_{\eta_{t+1}}}f(\mathcal{M}_t(Y)+\eta_{t+1})\right|}_{=\,0}\\
    &\qquad\qquad\qquad\qquad\qquad\qquad\qquad+\beta\EE_{X\sim \tilde{\hat{Q}}_t,Y\sim\tilde{Q}_t}\|\mathcal{M}_t(X)-\mathcal{M}_t(Y)\|_2\qquad(\text{by Lipschitzity of }f)\\
    \leq\,&\beta\gamma\EE_{X\sim \tilde{\hat{Q}}_t, Y\sim\tilde{Q}_t}\|X-Y\|_2 \qquad\qquad\,\,\,\,(\text{by Lipschitzity of }\mathcal{M}_t)\\
    \leq\,&\beta\gamma\left(\EE_{X,Y\sim\Gamma(\hat Q_t,Q_t)}\|X-Y\|_2^2\right)^{\frac{1}{2}} \qquad(\text{by Cauchy's inequality})\\
    =\,&W_2(\hat Q_t\|Q_t).
  \end{aligned}\]
\end{proof}

\begin{proof}[Proposition \ref{prop:forecastriskkernel}]
  By the definition, \[\begin{aligned}
    &\mathrm{MMD}(\mathcal{M}_t(\hat Q_t)+p_{\eta_{t+1}}\|\mathcal{M}_t(Q_t)+p_{\eta_{t+1}})\\
    =\,&\sup_{\|f\|_{\mathcal{H}}\leq 1}\left|\mathbb{E}_{X\sim \hat Q_t,\eta_{t+1}^1\sim p_{\eta_{t+1}}}f(\mathcal{M}_t(X)+\eta_{t+1}^1)-\mathbb{E}_{Y\sim Q_t,\eta_{t+1}^2\sim p_{\eta_{t+1}}}f(\mathcal{M}_t(Y)+\eta_{t+1}^2)\right|\\
    \leq \,&\sup_{\|f\|_{\mathcal{H}}\leq 1}\left|\mathbb{E}_{X\sim \hat Q_t}f(\mathcal{M}_t(X))-\mathbb{E}_{Y\sim Q_t}f(\mathcal{M}_t(Y))\right|+\beta\EE_{\eta_{t+1}^1\sim p_{\eta_{t+1}}}\|\eta_{t+1}^1\|+\beta\EE_{\eta_{t+1}^2\sim p_{\eta_{t+1}}}\|\eta_{t+1}^2\|\\
    \leq\,&\sup_{\|g\|_{\mathcal{H}}\leq \Omega}\left|\mathbb{E}_{X\sim \hat Q_t}g(X)-\mathbb{E}_{Y\sim Q_t}g(Y)\right|+2\beta\EE_{p_{\eta_{t+1}}}\|\eta_{t+1}\|\quad(g=f\circ\mathcal{M}_t)\\
    \leq\,&\Omega\sup_{\|g\|_{\mathcal{H}}\leq 1}\left|\mathbb{E}_{X\sim \hat Q_t}g(X)-\mathbb{E}_{Y\sim Q_t}g(Y)\right|+2\beta(\EE_{p_{\eta_{t+1}}}\|\eta_{t+1}\|_2^2)^{\frac{1}{2}}\\
    =\,&\Omega\cdot\mathrm{MMD}(\hat Q_t\|Q_t)+2\beta\cdot\trace(\Var(\eta_{t+1}))^{\frac{1}{2}}.
  \end{aligned}\]
\end{proof}

\begin{proof}[Corollary \ref{cor:log-covering}] By the sub-Gaussian assumption, \[\footnotesize\begin{aligned}\mathbb{P}\Bigg(\exists i\in[N_1],\|X_i\|_\infty\geq C_6\sigma_t\sqrt{\log(2dN_1)+\frac{N_1\varepsilon^2}{2048B^2}}\Bigg)&\stackrel{\text{union bound}}{\leq} N_1d\mathbb{P}\left(X_{11}\geq C_6\sigma_t\sqrt{\log(2dN_1)+\frac{N_1\varepsilon^2}{2048B^2}}\right)\\
  &\stackrel{\text{tail bound}}{\leq}\exp\left(-\frac{N_1\varepsilon^2}{2048B^2}\right)\end{aligned}\]
  where $C_6$ is an absolute constant and the term ``$\exp\left(-\frac{N_1\varepsilon^2}{2048B^2}\right)$'' matches the nonasymptotic bound on the right-hand side of \eqref{eq:supbound}. In this way, under the condition \[\mathscr{A}:=\left\{\max_{1\leq i\leq N_1}\|X_i\|_\infty\leq M:=C_6\sigma_t\sqrt{\log(2dN_1)+\frac{N_1\varepsilon^2}{2048B^2}}\right\},\] since the $l_1$ norm is dominated by the $L_\infty$ norm, 
  \begin{small}\begin{align}\log\mathbb{E}_{X\sim\hat P_t}\mathscr{N}_1\left(\frac{\varepsilon}{32\beta},\mathcal{G}_t,X^{1:N_1}\right) & \leq \log\mathscr{N}_\infty\left(\frac{\varepsilon}{32\beta},\mathcal{G}_t|_{[-M,M]^d}\right)\label{eq:expectation_transfer_to_bounded_case}\\
    &\stackrel{\eqref{eq:suzuki}}{\leq} 2C_7SL\log\left(\beta L\|W\|_\infty(R\vee 1)\sigma_t\sqrt{\frac{\log(2dN_1)}{\varepsilon^2}+\frac{N_1}{2048B^2}}\right).\end{align}\end{small}
  Therefore, \[\small\begin{aligned}
    &\qquad\mathbb{P}\left(\sup_{G_t\in\mathcal{G}_t}\left|\mathrm{MMD}(G_{t\#}\hat{P}_t\|Q_t)-\widehat{\mathrm{MMD}}(G_{t\#}\hat{P}_t\|Q_t)\right|>\varepsilon+\frac{12B(\chi^2(Q_t\|P_t)+1)}{N_2}\right)\\
    &\leq\mathbb{P}\left(\sup_{G_t\in\mathcal{G}_t}\left|\mathrm{MMD}(G_{t\#}\hat{P}_t\|Q_t)-\widehat{\mathrm{MMD}}(G_{t\#}\hat{P}_t\|Q_t)\right|>\varepsilon+\frac{12B(\chi^2(Q_t\|P_t)+1)}{N_2},\mathscr{A}\right)+\mathbb{P}(\mathscr{A}^c)\\
    &\leq\mathbb{P}\left(\sup_{G_t\in\mathcal{G}_t}\left|\mathrm{MMD}(G_{t\#}\hat{P}_t\|Q_t)-\widehat{\mathrm{MMD}}(G_{t\#}\hat{P}_t\|Q_t)\right|>\varepsilon+\frac{12B(\chi^2(Q_t\|P_t)+1)}{N_2}\Bigg|\mathscr{A}\right)+\exp\left(-\frac{N_1\varepsilon^2}{2048B^2}\right)\\
    &\leq 2\mathscr{N}_\infty\left(\frac{\varepsilon}{6},\mathcal{F}_D\right)\exp\left(-\frac{\gamma_tN_2(\mathbb{E}w_t)^2\varepsilon^2}{5760B^2\beta^2}\right)+2\exp\left(-\frac{N_2(\mathbb{E}w_t)^2}{2B^2}\right)\\
    &\qquad\qquad\qquad\qquad\qquad\qquad+9\mathscr{N}_\infty\left(\frac{\varepsilon}{32},\mathcal{F}_D\right)\mathscr{N}_\infty\left(\frac{\varepsilon}{32\beta},\mathcal{G}_t|_{[-M,M]^d}\right)\exp\left(-\frac{N_1\varepsilon^2}{2048B^2}\right),
  \end{aligned}\] where the last inequality follows from \eqref{eq:directPACbound} and \eqref{eq:expectation_transfer_to_bounded_case}.
\end{proof}

\begin{proof}[Proposition \ref{prop:gaussconvergence}] We first prove a lemma.
\begin{lemma}\label{lem:subgaussian_approximation_error}
    Let $N$ be the size of neural networks. Then under assumptions \ref{ass:1}-\ref{ass:6} and the sub-Gaussian condition, the approximation error will converge at the following rate:
    \begin{equation}\label{eq:subgaussian_approximation_error}
   \inf_{G_t\in\mathcal{G}_t}\mathrm{MMD}(\hat Q_t\|Q_t))\lesssim (\log N)^{\frac{s}{2}}N^{-\frac{s}{d}}.
   \end{equation}
   That is, the convergence rate with regard to $N$ in sub-Gaussian cases will be slower by a factor $(\log N)^{\frac{s}{2}}$ than that in bounded cases.
\end{lemma}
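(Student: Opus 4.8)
The plan is to reduce the sub-Gaussian approximation problem to the compact-support case treated in \citet{suzuki2019adaptivity,chakraborty2024statistical} by truncating onto a box whose side length grows only logarithmically in $N$, and then to bookkeep the extra cost of that truncation. By Assumption \ref{ass:6} there is an oracle transport map $G'_t\in B_{p,q}^s(\Omega)$ with $G'_{t\#}\hat P_t=Q_t$, so for any network $\bar G_t\in\Phi(L,W,S,R)=\mathcal{G}_t$, writing $\hat Q_t=\bar G_{t\#}\hat P_t$,
\[
\mathrm{MMD}(\hat Q_t\|Q_t)=\sup_{f\in\mathcal{F}_D}\left|\EE_{x\sim\hat P_t}\bigl[f(\bar G_t(x))-f(G'_t(x))\bigr]\right|.
\]
Fixing a radius $M=M(N)$, I would split this expectation over $\{x\in[-M,M]^d\}$ and its complement. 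Outside the box I use only the uniform bound $\|\mathcal{F}_D\|_\infty\le B$ from Assumption \ref{ass:2}, contributing at most $2B\,\mathbb{P}_{x\sim\hat P_t}(x\notin[-M,M]^d)$; inside the box I use the $\beta$-Lipschitzness of $f$ from Assumption \ref{ass:4} (which covers the Gaussian RKHS unit ball, with $\beta=1/w$), bounding the integrand by $\beta\sqrt d\,\|\bar G_t-G'_t\|_{L^\infty([-M,M]^d)}$. Hence it remains to (i) bound a sub-Gaussian tail and (ii) approximate $G'_t$ in $L^\infty$ on a slowly growing box by a network with $\lesssim N$ parameters.

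For (i), choosing $M\asymp\sigma_t\sqrt{\log N}$ and combining the sub-Gaussian tail bound with a union bound over the $d$ coordinates, exactly as in the proof of Corollary \ref{cor:log-covering}, gives $\mathbb{P}_{x\sim\hat P_t}(x\notin[-M,M]^d)\le 2d\exp(-M^2/(2\sigma_t^2))\lesssim N^{-2s/d}$, which is negligible against $N^{-s/d}$. For (ii), I would pass from $[-M,M]^d$ to a fixed reference cube by the dilation $u\mapsto Mu$: the rescaled map $\tilde G:=G'_t(M\,\cdot)$ lies in $B_{p,q}^s$ of the reference cube with $\|\tilde G\|_{B_{p,q}^s}\lesssim M^s\|G'_t\|_{B_{p,q}^s(\Omega)}\asymp M^s$, since under dilation each order of smoothness in the modulus-of-smoothness seminorm picks up a factor $M$ while the $L^p$ part scales only like $M^{-d/p}$ (harmless because $s>d/p$ keeps $B_{p,q}^s\hookrightarrow L^\infty$ and we need only $L^\infty$ control). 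The Besov approximation theorem for deep ReLU networks then yields $\widetilde{\mathrm{net}}\in\Phi(L,W,S,R)$ with $S\asymp N$ nonzero parameters and $\|\widetilde{\mathrm{net}}-\tilde G\|_{L^\infty}\lesssim\|\tilde G\|_{B_{p,q}^s}N^{-s/d}\lesssim M^sN^{-s/d}$ up to log factors; undoing the dilation by folding $x\mapsto x/M$ into the input layer produces $\bar G_t$ with $\|\bar G_t-G'_t\|_{L^\infty([-M,M]^d)}\lesssim M^sN^{-s/d}\asymp(\log N)^{s/2}N^{-s/d}$.

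Combining the two pieces,
\[
\inf_{G_t\in\mathcal{G}_t}\mathrm{MMD}(G_{t\#}\hat P_t\|Q_t)\;\le\;\beta\sqrt d\,(\log N)^{s/2}N^{-s/d}+2B\,N^{-2s/d}\;\lesssim\;(\log N)^{s/2}N^{-s/d},
\]
which is \eqref{eq:subgaussian_approximation_error}, and the bounded-support case $M=O(1)$ gives the rate $N^{-s/d}$ with exactly the extra factor $(\log N)^{s/2}$ stripped away. I expect the main obstacle to be the claim $\|\tilde G\|_{B_{p,q}^s}\lesssim M^s$, i.e., the precise scaling of the Besov seminorm under the dilation $u\mapsto Mu$; this is the one step that must be argued from the definition of $B_{p,q}^s$, and where one uses $M\ge1$ to see that the $M^s$ term dominates the lower-order differences. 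A secondary but benign point is that folding the $1/M$ rescaling (and the $M^s$ output rescaling) into the network enlarges the weight bound $R$ by a $\mathrm{poly}(\log N)$ factor; since $R$ enters the covering-number bounds \eqref{eq:suzuki} and \eqref{eq:sub-gaussin-covering} only inside a logarithm, this leaves the statistical-error side and the final rate of Proposition \ref{prop:gaussconvergence} unchanged.
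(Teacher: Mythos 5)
Your proposal is correct and follows essentially the same route as the paper's proof: the identical oracle-based decomposition into a sub-Gaussian tail term (bounded by $2B$ times the tail probability) plus a Lipschitz-times-$L^\infty$ approximation term on a box, the Suzuki-type Besov network approximation giving the $M^sN^{-s/d}$ scaling, and the balancing in the box size that produces the extra $(\log N)^{s/2}$ factor. The only cosmetic difference is that you justify the $M^s$ dependence via a dilation/Besov-norm rescaling (and fix $M\asymp\sigma_t\sqrt{\log N}$ up front), whereas the paper obtains the same $H^sN^{-s/d}$ bound by tiling $[-H,H]^d$ with unit cubes and then minimizing over $H$.
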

\begin{proof} By bounded, Lipschitzity and sub-Gaussian assumptions,
    \begin{align*}
        \inf_{G_t\in\mathcal{G}_t}\text{MMD}&(\hat Q_t\|Q_t) = \inf_{G_t\in\mathcal{G}_t}\sup_{f\in\mathcal{F}_D}\left|\mathbb{E}_{x\sim G_{t\#}\hat P_t}f(x)-\mathbb{E}_{y\sim Q_t}f(y)\right|\\
        &=\inf_{G_t\in\mathcal{G}_t}\sup_{f\in\mathcal{F}_D}\left|\mathbb{E}_{x\sim G_{t\#}\hat P_t}f(x)-\mathbb{E}_{y\sim G_{t\#}'\hat P_t}f(y)\right|\\
        &=\inf_{G_t\in\mathcal{G}_t}\sup_{f\in\mathcal{F}_D}\int f(G_t(x))-f(G_t'(x))\text{d}\hat P_t\\
        &=\inf_{G_t\in\mathcal{G}_t}\sup_{f\in\mathcal{F}_D}\left(\int_{\|x\|_\infty>H} f(G_t(x))-f(G_t'(x))\text{d}\hat P_t+\int_{\|x\|_\infty\leq H} f(G_t(x))-f(G_t'(x))\text{d}\hat P_t\right)\\
        &\lesssim 2B \hat P_t(\|x\|_\infty>H) + \inf_{G_t\in\mathcal{G}_t}\beta\int_{\|x\|_\infty\leq H}\|G_t(x)-G_t'(x)\|_2\text{d}\hat P_t\\
        &\lesssim 2Bd\exp(-\frac{H^2}{\sigma_t^2}) + \inf_{G_t\in\mathcal{G}_t}\beta\|G_t-G_t'\|_{L_\infty[-H,H]^d}.
    \end{align*}
Assume $m\in\mathbb{N}$ satisfies $\frac{d}{p}<s<\min(m,m-1+\frac{1}{p})$, $\nu=\frac{s-\delta}{2\delta}$, $\varepsilon=N^{-\frac{s}{d}-(\nu^{-1}+d^{-1})(\frac{d}{p}-s)}+\log(N)^{-1}$, $W_0=6dm(m+2)+2d$, $L=3+2\lceil\log_2(\frac{3^{d\vee m}}{\varepsilon c_{(d,m)}})+5\rceil\lceil\log_2(d\vee m)\rceil$, $W=NW_0$, $S=(L-1)W_0^2N+N$ and $R=O(N^{(\nu^{-1}+d^{-1})(1\vee(\frac{d}{p}-s))})$ where $N$ is the number of B-spline functions.
By Proposition 1 of \citet{suzuki2019adaptivity} and dividing the domain into cubes with each side length 1, the second term has the convergence rate $H^sN^{-\frac{s}{d}}$. Since this inequality holds for all $H\in\mathbb{R}_+$, then by minimizing the right-hand side over $H$ we can find the extra factor $(\log N)^{\frac{s}{2}}$. 
\end{proof}

  Then we can derive this proposition simply by (1) using the chaining method (i.e., take expectations on the empirical MMD term of \eqref{eq:PACbound} in Theorem \ref{thm:simplefinalupperbound} and apply $\EE X=\int_0^{+\infty}\mathbb{P}(X>t)\text{d}t$) or refering to Lemma 26 in \citet{liang2021well}(i.e., the standard entropy integral bound) and (2) plugging \eqref{eq:kernelcovering} with $\gamma=\frac{1}{d}$ in the second case, \eqref{eq:sub-gaussin-covering} and \eqref{eq:subgaussian_approximation_error} into the Dudley entropy integral and minimizing the right-hande side over $\delta\in[0,\frac{1}{2}]$ and $N\in\mathbb{Z}_+$. That is, \[\begin{aligned}
      &\mathbb{E}\mathrm{MMD}(\hat Q_t\|Q_t)-\inf_{G_t\in\mathcal{G}_t}\mathrm{MMD}(\hat Q_t\|Q_t)\\
      &\lesssim \,\inf_{0<\delta<\frac{1}{2}}\Bigg(4\delta+\frac{8\sqrt{2}}{\sqrt{N_1}}\int_{\delta}^{\frac{1}{2}}\sqrt{\log\mathscr{N}_\infty(\varepsilon,\mathcal{F}_D)+\log\mathscr{N}_\infty(\frac{\varepsilon}{32\beta},\mathcal{G}_t|_{[-M,M]^d})}\text{d}\varepsilon\\
      &\qquad\qquad\qquad\qquad\qquad\qquad\qquad\quad +\frac{8\sqrt{2}}{\sqrt{N_2}}\int_{\delta}^{\frac{1}{2}}\sqrt{\log\mathscr{N}_\infty(\varepsilon,\mathcal{F}_D)}\text{d}\varepsilon\Bigg)\\
      &\lesssim \inf_{0<\delta<\frac{1}{2}}\Bigg(4\delta+\frac{8\sqrt{2}}{\sqrt{N_1}}\int_{\delta}^{\frac{1}{2}}\sqrt{\log\mathscr{N}_\infty(\varepsilon,\mathcal{F}_D)}\text{d}\varepsilon+\frac{8\sqrt{2}}{\sqrt{N_1}}\int_{\delta}^{\frac{1}{2}}\sqrt{\log\mathscr{N}_\infty(\frac{\varepsilon}{32\beta},\mathcal{G}_t|_{[-M,M]^d})}\text{d}\varepsilon\\
      &\qquad\qquad\qquad\qquad\qquad\qquad\qquad\quad +\frac{8\sqrt{2}}{\sqrt{N_2}}\int_{\delta}^{\frac{1}{2}}\sqrt{\log\mathscr{N}_\infty(\varepsilon,\mathcal{F}_D)}\text{d}\varepsilon\Bigg)\\
      &\lesssim \frac{N\log(NN_1)}{\sqrt{N_1}}+\frac{1}{\sqrt{N_2}}\qquad(\text{by taking } \delta=0)\\
      \Rightarrow \,\,& \EE\mathrm{MMD}(\hat Q_t\|Q_t)\\
      &\lesssim (\log N)^{\frac{s}{2}}N^{-\frac{s}{d}}+\frac{N\log(NN_1)}{\sqrt{N_1}}+\frac{1}{\sqrt{N_2}}\\
      &\lesssim (\log N_1)^{\frac{s}{2}\vee 1}N_1^{-\frac{s}{2(s+d)}}+N_2^{-\frac{1}{2}}. \qquad(\text{by taking } N\asymp N_1^{\frac{d}{2(s+d)}})\end{aligned}\]
\end{proof}
\begin{proof}[Proposition \ref{prop:meanconvergence}]
  Under the circumstance of the linear kernel, minimizing the empirical MMD loss is equivalent to minimizing the empirical $L_2$ loss:
  \begin{equation*}\footnotesize\begin{aligned}
    \hat G_t&=\underset{G_t\in\mathcal{G}_t}{\arg\min}\frac{1}{N_1^2}\sum_{i=1}^{N_1}\sum_{j=1}^{N_1}K(G_t(x_i),G_t(x_j))-\frac{2}{N_1}\sum_{i=1}^{N_1}\sum_{j=1}^{N_2}\underbrace{\frac{w_t(x_j')}{\sum_{j=1}^{N_2}w_t(x_j')}}_{:=w_j}K(G_t(x_i),x_j')\\
    &=\underset{G_t\in\mathcal{G}_t}{\arg\min}\frac{1}{N_1^2}\sum_{i=1}^{N_1}\sum_{j=1}^{N_1}K(G_t(x_i),G_t(x_j))-\frac{2}{N_1}\sum_{i=1}^{N_1}\sum_{j=1}^{N_2}w_jK(G_t(x_i),x_j')+\sum_{i=1}^{N_2}\sum_{j=1}^{N_2}w_iw_jK(x_i',x_j')\\
    &=\underset{G_t\in\mathcal{G}_t}{\arg\min}\left(\frac{\sum_{i=1}^{N_1}G_t(x_i)}{N_1}\right)^T\left(\frac{\sum_{j=1}^{N_1}G_t(x_j)}{N_1}\right)-2\left(\frac{\sum_{i=1}^{N_1}G_t(x_i)}{N_1}\right)^T\left(\sum_{j=1}^{N_2}w_jx_j'\right)+\left(\sum_{i=1}^{N_2}w_ix_i'\right)^T\left(\sum_{j=1}^{N_2}w_jx_j'\right)\\
    &=\underset{G_t\in\mathcal{G}_t}{\arg\min}\left\|\frac{\sum_{i=1}^{N_1}G_t(x_i)}{N_1}-\sum_{j=1}^{N_2}w_jx_j'\right\|_2^2.
  \end{aligned}\end{equation*} Then by taking the following two steps: (1) mimicking the proofs of Proposition \ref{prop:gaussconvergence} in this paper and (2) plugging \eqref{eq:kernelcovering} with $\gamma=d$ in the first case and \eqref{eq:sub-gaussin-covering} into the Dudley entropy integral, we can derive the desired conclusion. Moreover, the approximation error will be zero (i.e., the term $\inf_{G_t\in\mathcal{G}_t}\mathrm{MMD}(G_{t\#}\hat P_t\|Q_t)$ vanishes) as long as the constant function $h(X)\equiv \EE_{Y\sim Q_t}Y$ belongs to the set $\Phi(L,W,S,R)$. For example, let $R\geq\|\EE_{Y\sim Q_t}Y\|_\infty$.
\end{proof}

\newpage
\section{Supplementary Figures and Tables}
\label{app:AppendixB}
\begin{figure}[H]
  \centering
  \subfigure{
    \includegraphics[width=0.32\columnwidth]{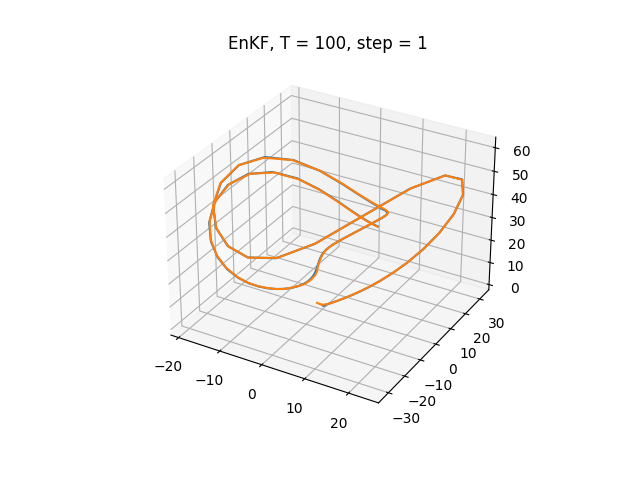}
  }\subfigure{
    \includegraphics[width=0.32\columnwidth]{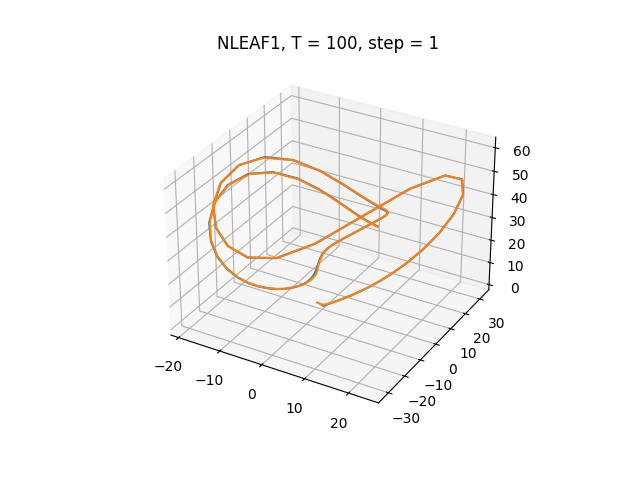}
  }\subfigure{
    \includegraphics[width=0.32\columnwidth]{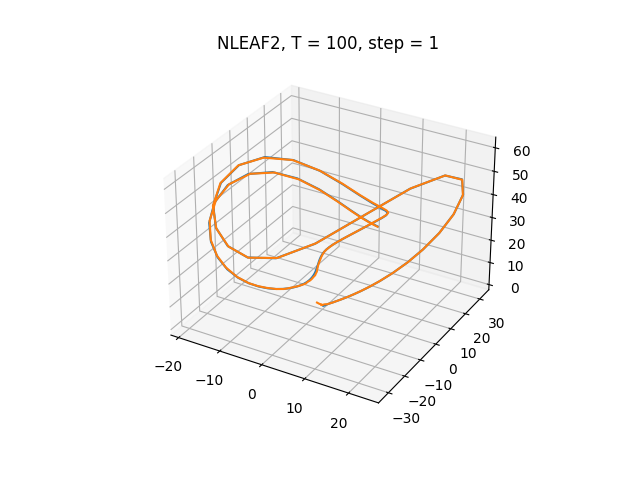}
  }\\\subfigure{
    \includegraphics[width=0.32\columnwidth]{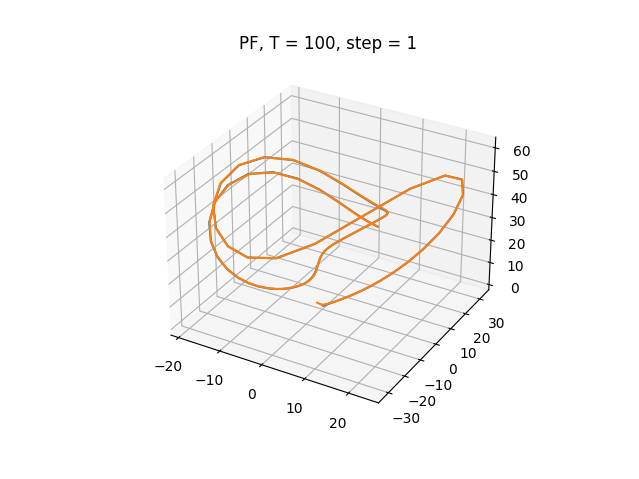}
  }\subfigure{
    \includegraphics[width=0.32\columnwidth]{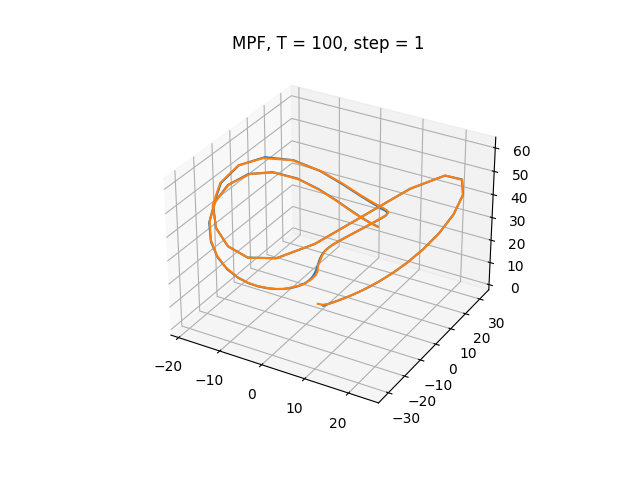}
  }\subfigure{
    \includegraphics[width=0.32\columnwidth]{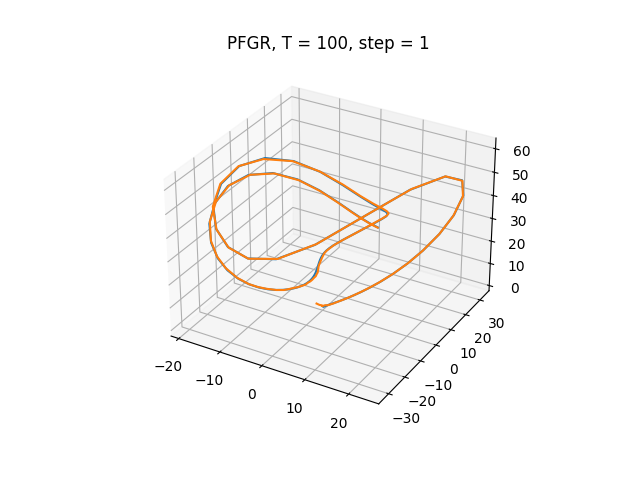}
  }\\\subfigure{
    \includegraphics[width=0.32\columnwidth]{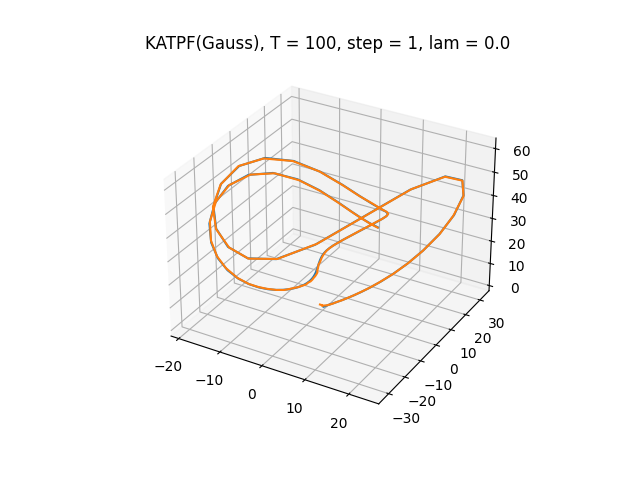}
  }\subfigure{
    \includegraphics[width=0.32\columnwidth]{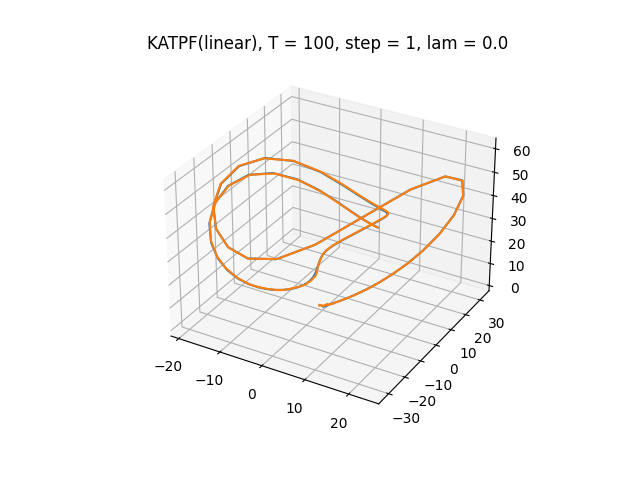}
  }\subfigure{
    \includegraphics[width=0.32\columnwidth]{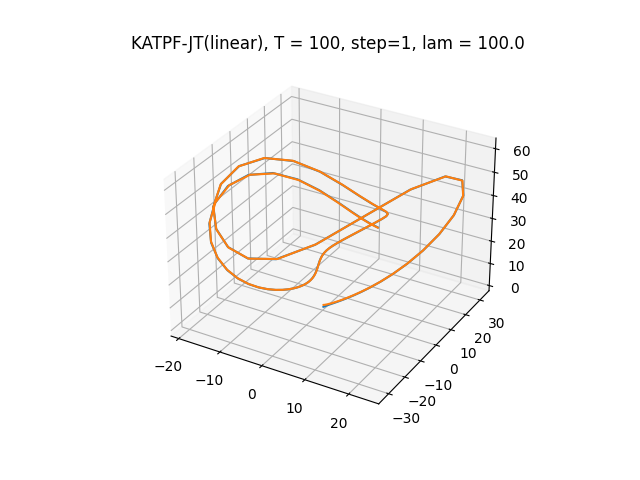}
  }
  \caption{Experimental results of the Lorenz63 system when $T=100$ and $\Delta t=0.02$.}
  \label{fig:lorenz63t100step1}
\end{figure}

\begin{figure}[H]
  \centering
  \subfigure{
    \includegraphics[width=0.32\columnwidth]{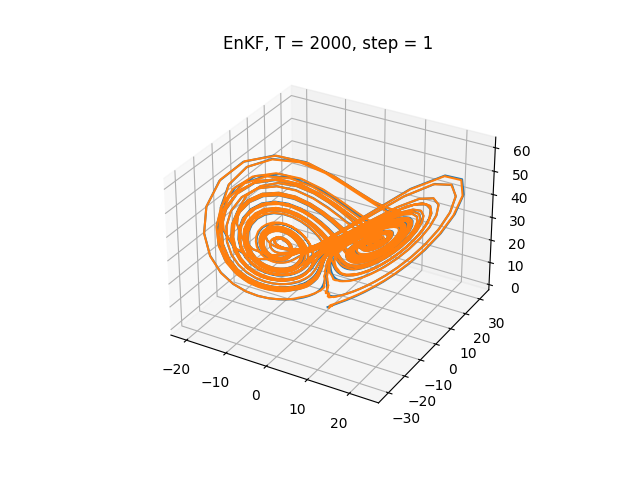}
  }\subfigure{
    \includegraphics[width=0.32\columnwidth]{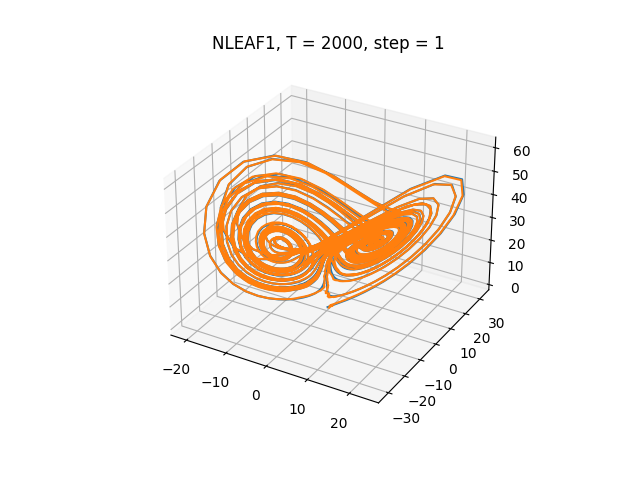}
  }\subfigure{
    \includegraphics[width=0.32\columnwidth]{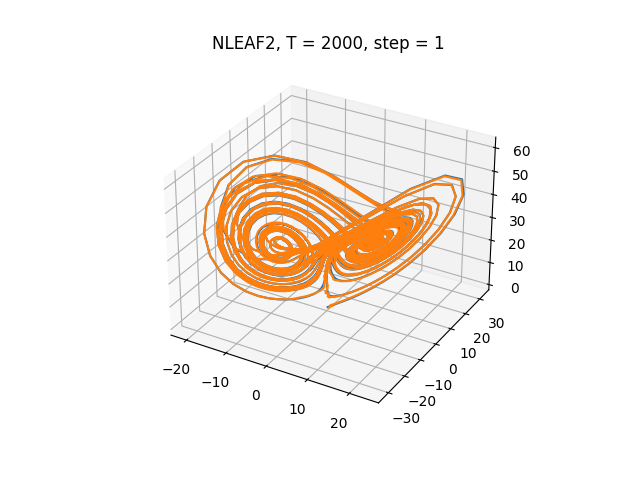}
  }\\\subfigure{
    \includegraphics[width=0.32\columnwidth]{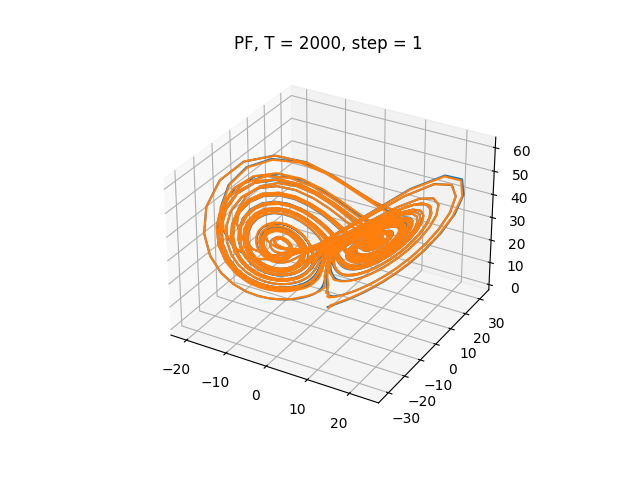}
  }\subfigure{
    \includegraphics[width=0.32\columnwidth]{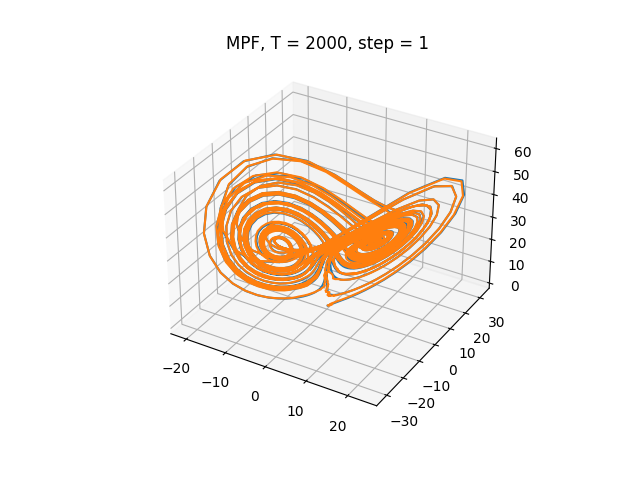}
  }\subfigure{
    \includegraphics[width=0.32\columnwidth]{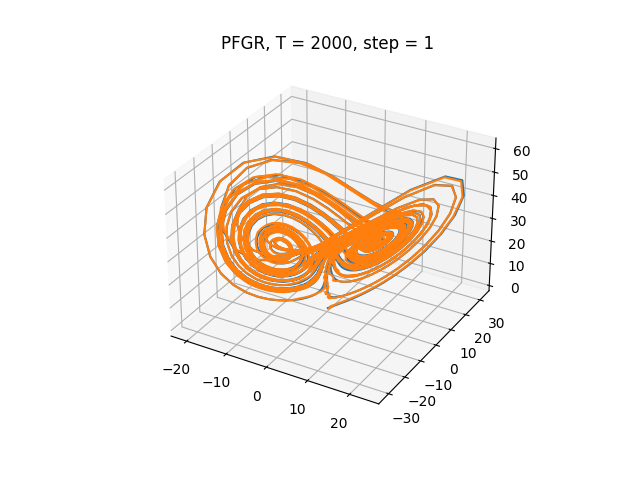}
  }\\\subfigure{
    \includegraphics[width=0.32\columnwidth]{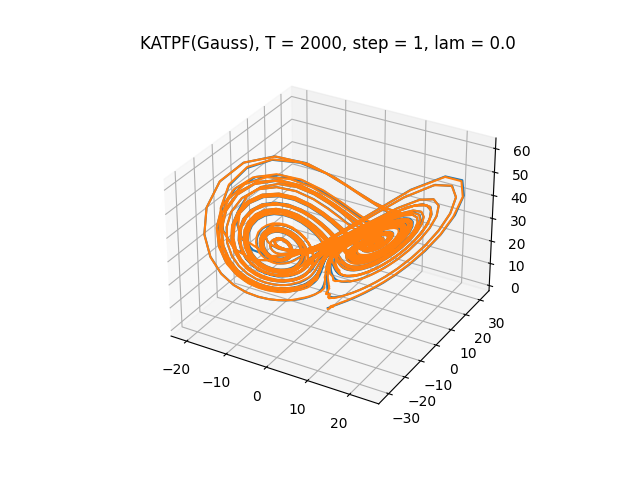}
  }\subfigure{
    \includegraphics[width=0.32\columnwidth]{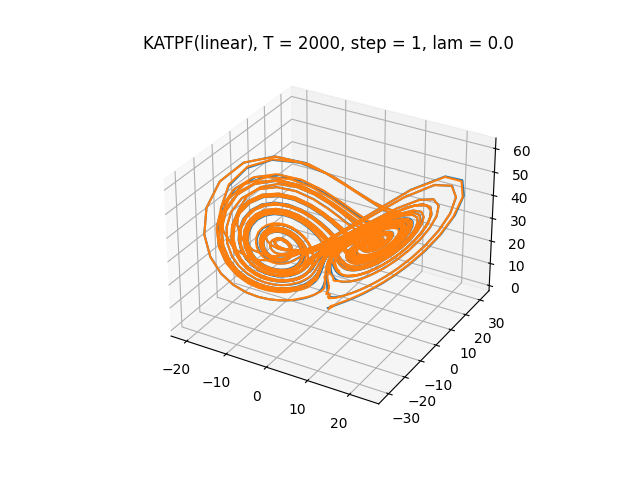}
  }\subfigure{
    \includegraphics[width=0.32\columnwidth]{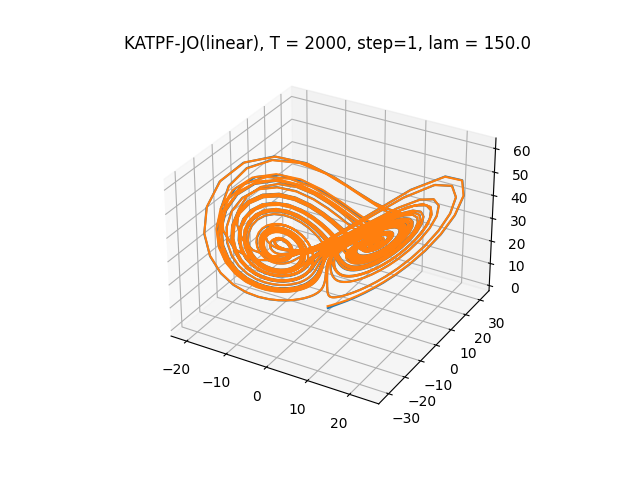}
  }
  \caption{Experimental results of the Lorenz63 system when $T=2000$ and $\Delta t=0.02$.}
  \label{fig:lorenz63step1}
\end{figure}

\begin{figure}[H]
  \centering
  \subfigure{
    \includegraphics[width=0.32\columnwidth]{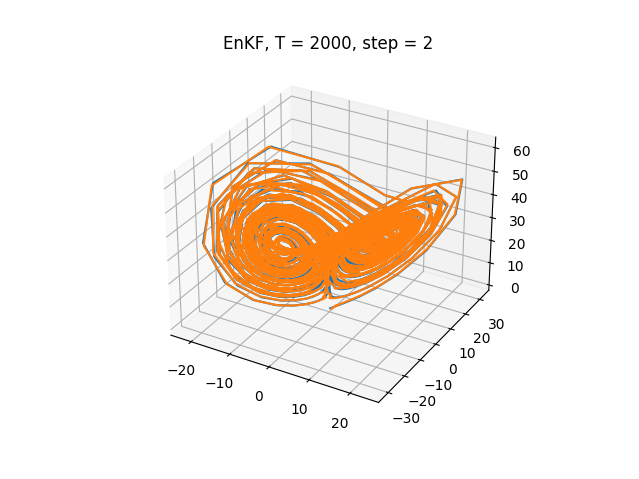}
  }\subfigure{
    \includegraphics[width=0.32\columnwidth]{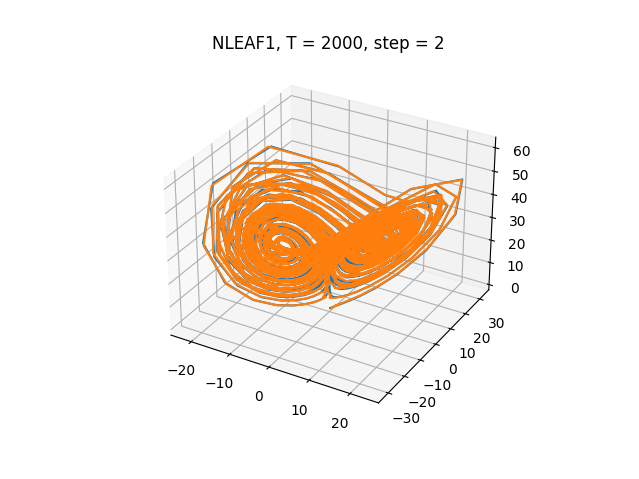}
  }\subfigure{
    \includegraphics[width=0.32\columnwidth]{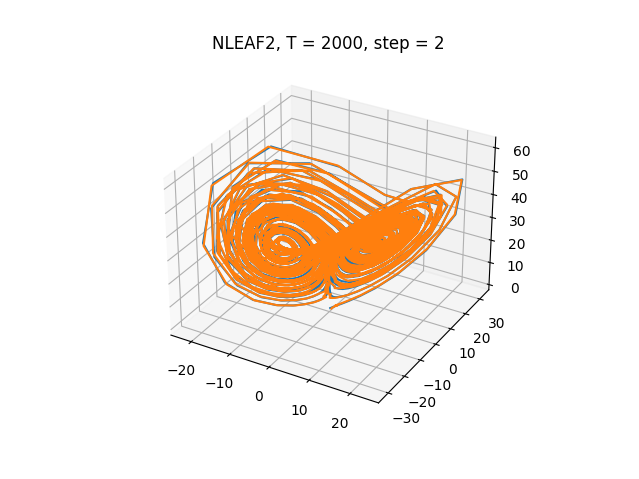}
  }\\\subfigure{
    \includegraphics[width=0.32\columnwidth]{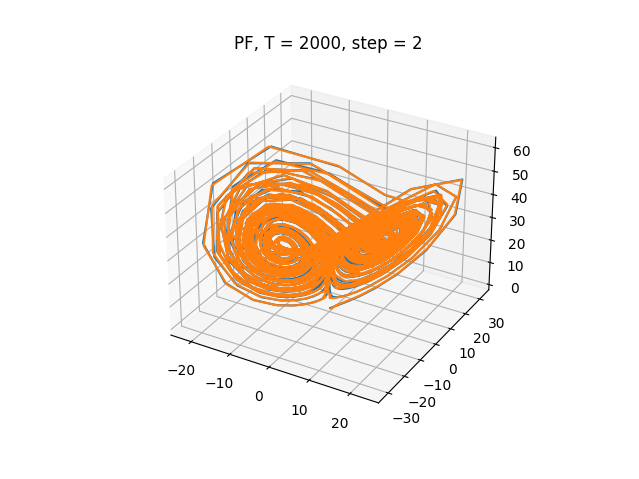}
  }\subfigure{
    \includegraphics[width=0.32\columnwidth]{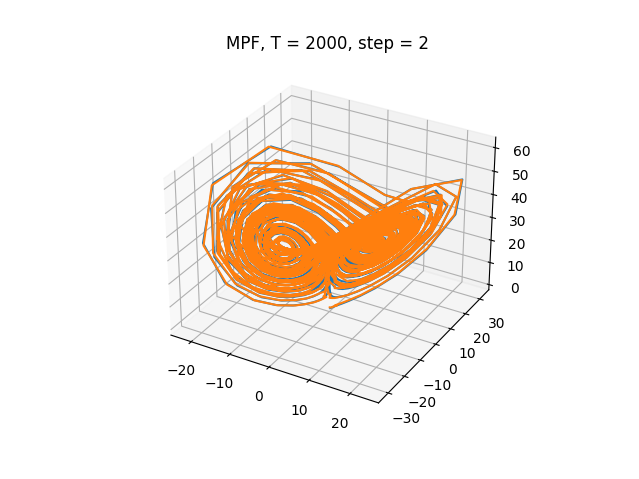}
  }\subfigure{
    \includegraphics[width=0.32\columnwidth]{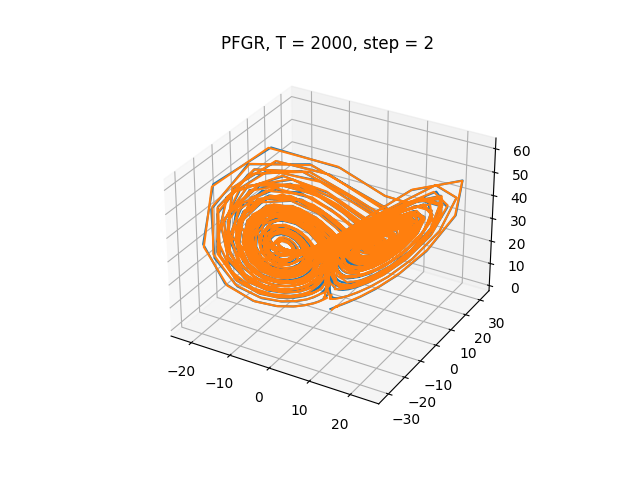}
  }\\\subfigure{
    \includegraphics[width=0.32\columnwidth]{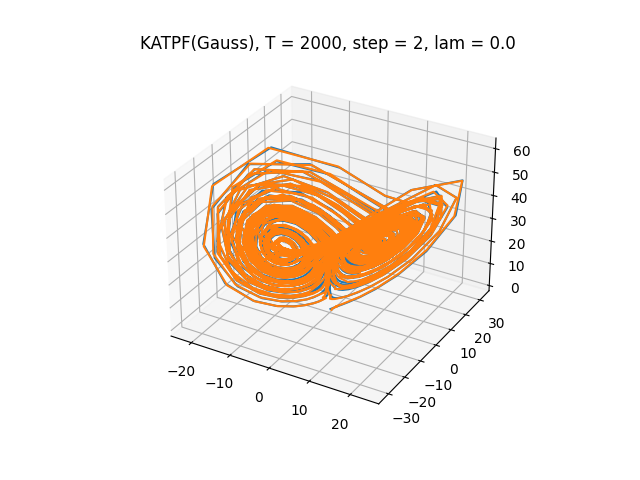}
  }\subfigure{
    \includegraphics[width=0.32\columnwidth]{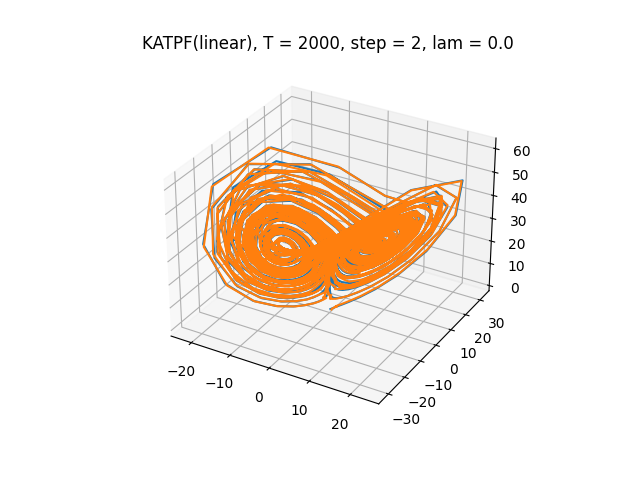}
  }\subfigure{
    \includegraphics[width=0.32\columnwidth]{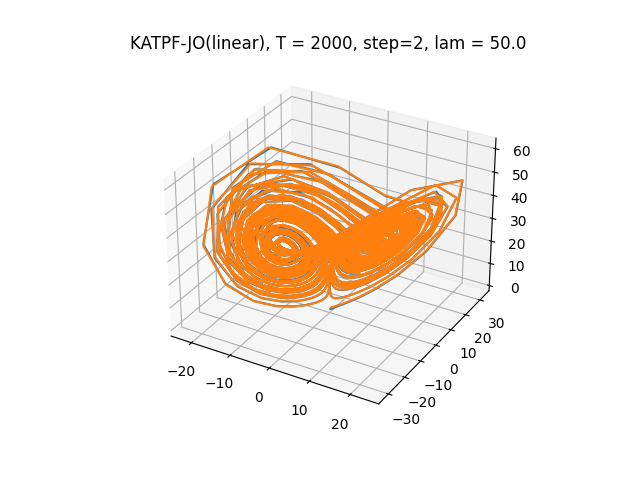}
  }
  \caption{Experimental results of the Lorenz63 system when $T=2000$ and $\Delta t=0.04$.}
  \label{fig:lorenz63step2}
\end{figure}

\begin{figure}[H]
  \centering
  \subfigure{
    \includegraphics[width=0.32\columnwidth]{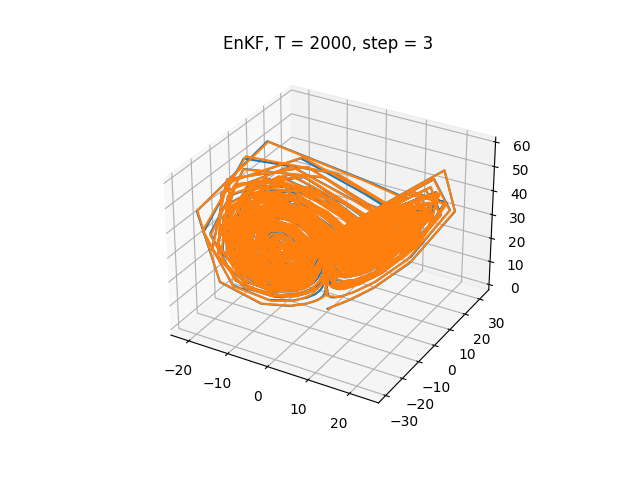}
  }\subfigure{
    \includegraphics[width=0.32\columnwidth]{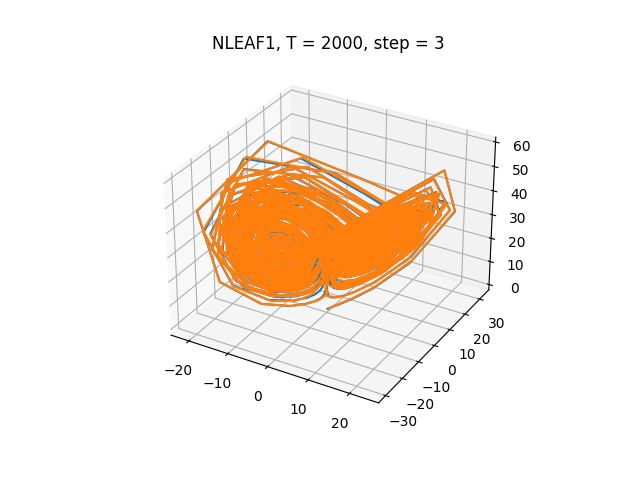}
  }\subfigure{
    \includegraphics[width=0.32\columnwidth]{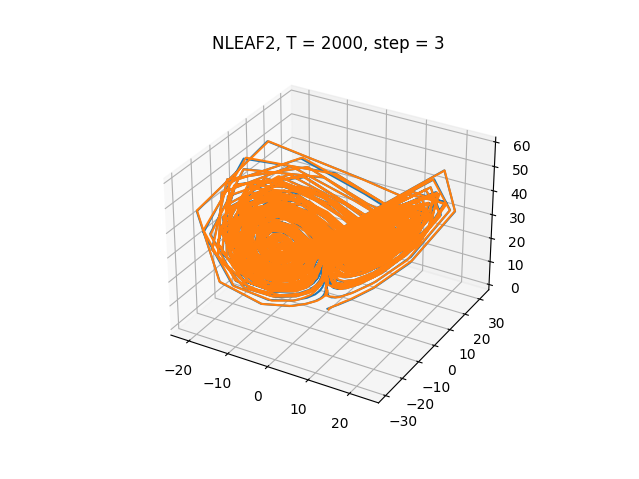}
  }\\\subfigure{
    \includegraphics[width=0.32\columnwidth]{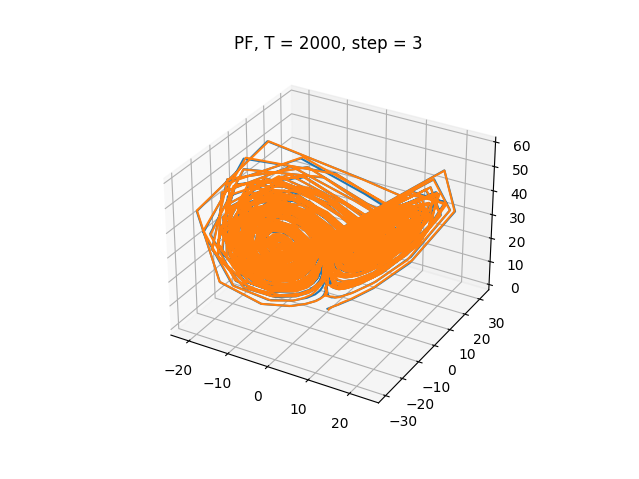}
  }\subfigure{
    \includegraphics[width=0.32\columnwidth]{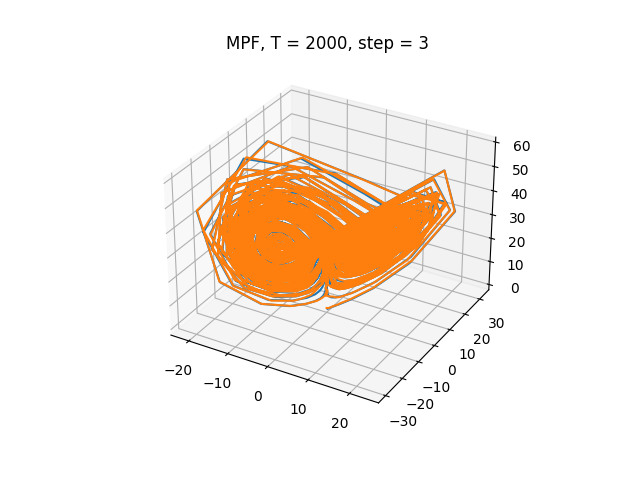}
  }\subfigure{
    \includegraphics[width=0.32\columnwidth]{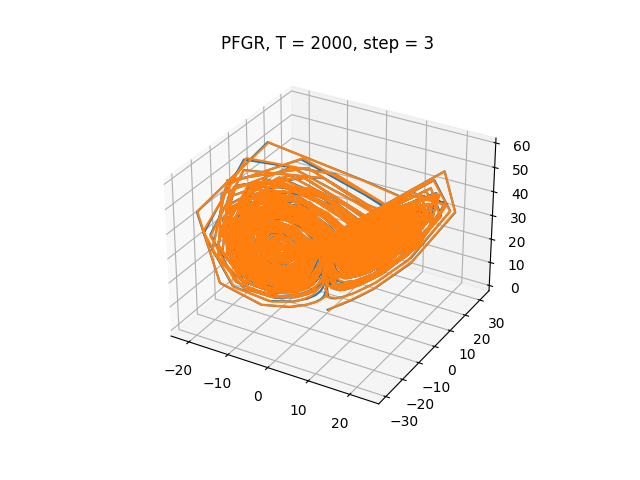}
  }\\\subfigure{
    \includegraphics[width=0.32\columnwidth]{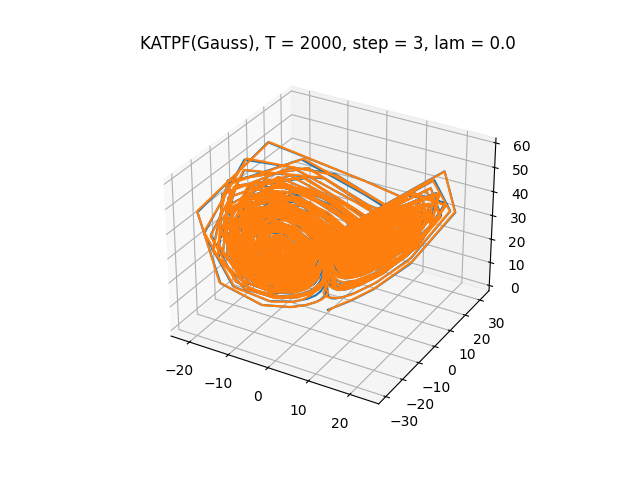}
  }\subfigure{
    \includegraphics[width=0.32\columnwidth]{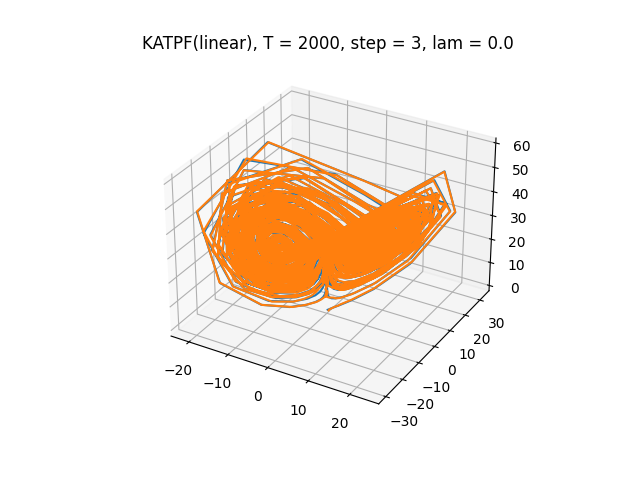}
  }\subfigure{
    \includegraphics[width=0.32\columnwidth]{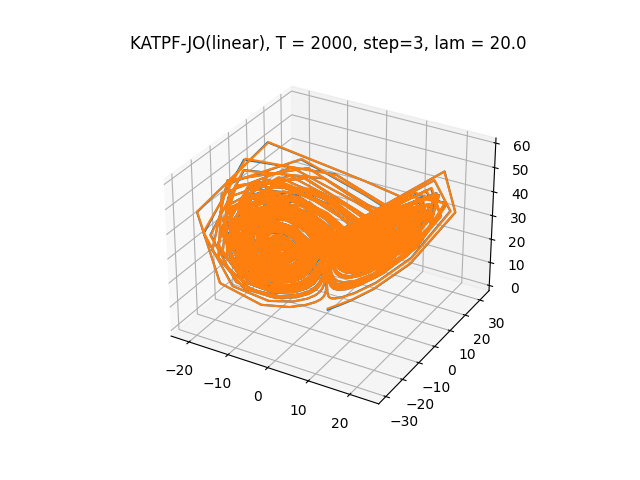}
  }
  \caption{Experimental results of the Lorenz63 system when $T=2000$ and $\Delta t=0.06$.}
  \label{fig:lorenz63step3}
\end{figure}

\begin{figure}[H]
  \centering
  \subfigure{
    \includegraphics[width=0.32\columnwidth]{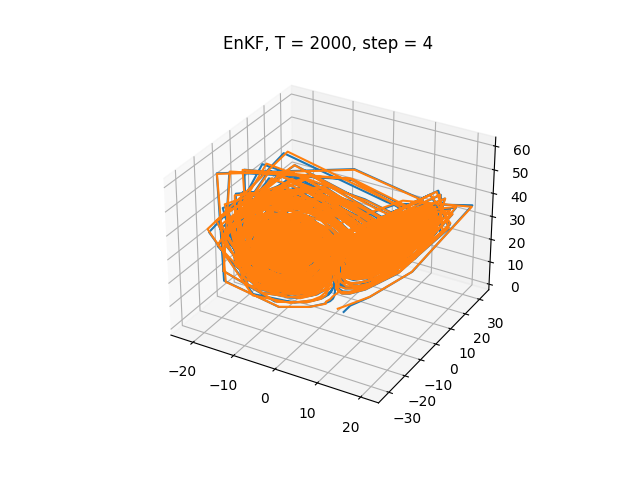}
  }\subfigure{
    \includegraphics[width=0.32\columnwidth]{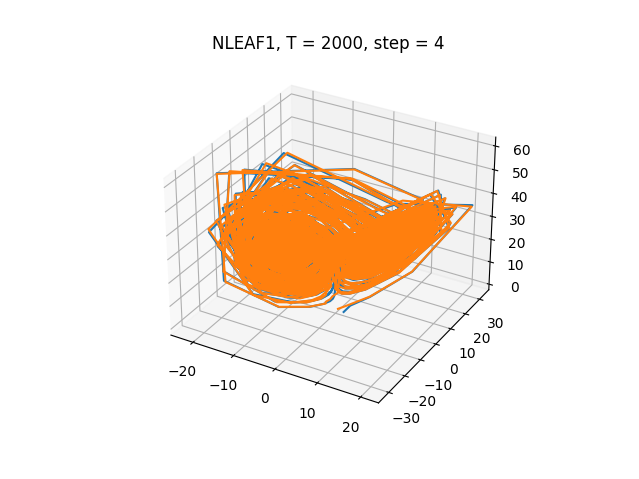}
  }\subfigure{
    \includegraphics[width=0.32\columnwidth]{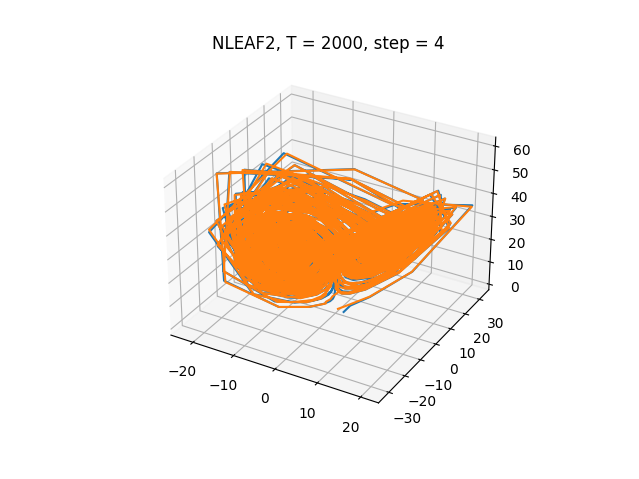}
  }\\\subfigure{
    \includegraphics[width=0.32\columnwidth]{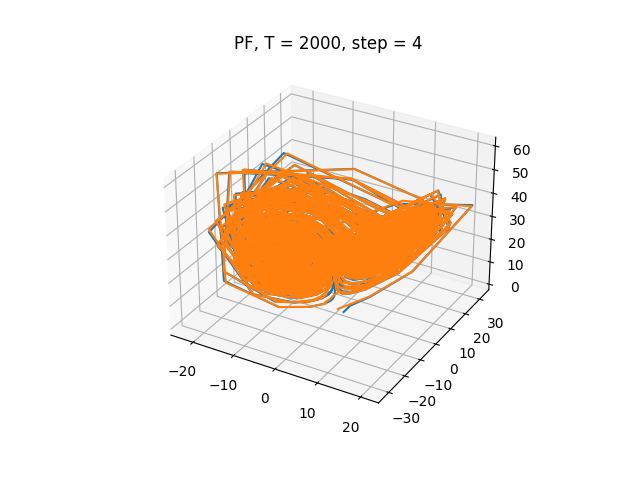}
  }\subfigure{
    \includegraphics[width=0.32\columnwidth]{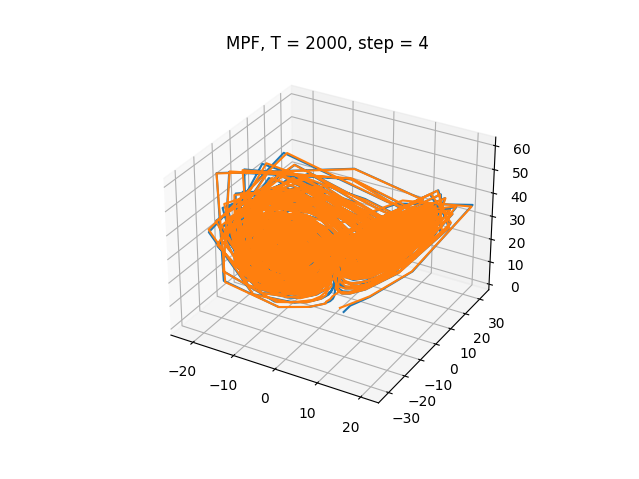}
  }\subfigure{
    \includegraphics[width=0.32\columnwidth]{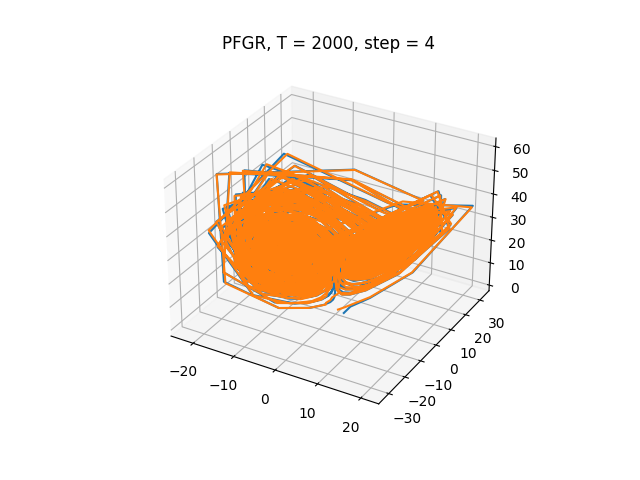}
  }\\\subfigure{
    \includegraphics[width=0.32\columnwidth]{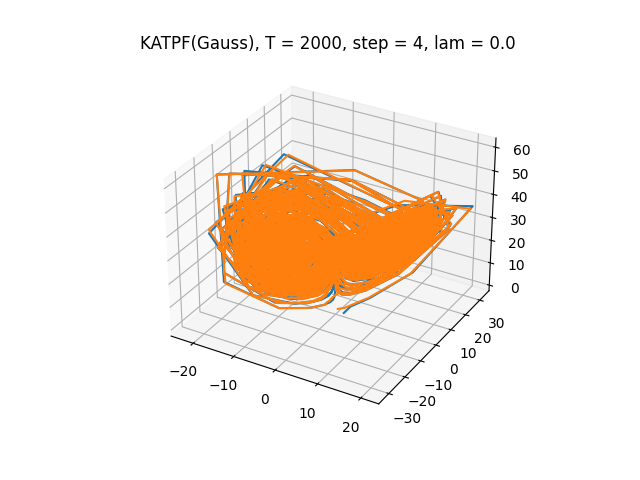}
  }\subfigure{
    \includegraphics[width=0.32\columnwidth]{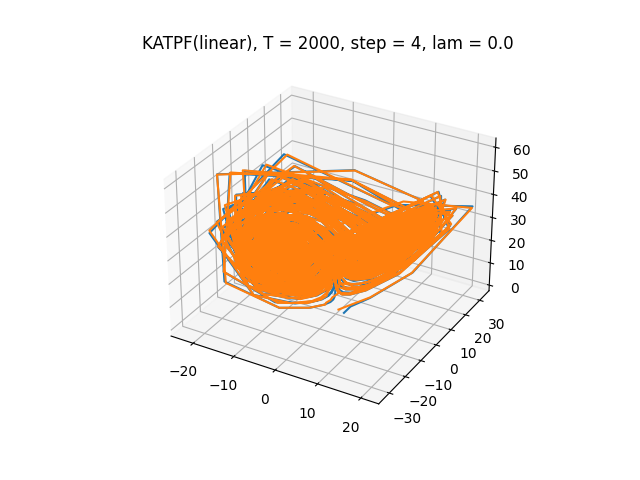}
  }\subfigure{
    \includegraphics[width=0.32\columnwidth]{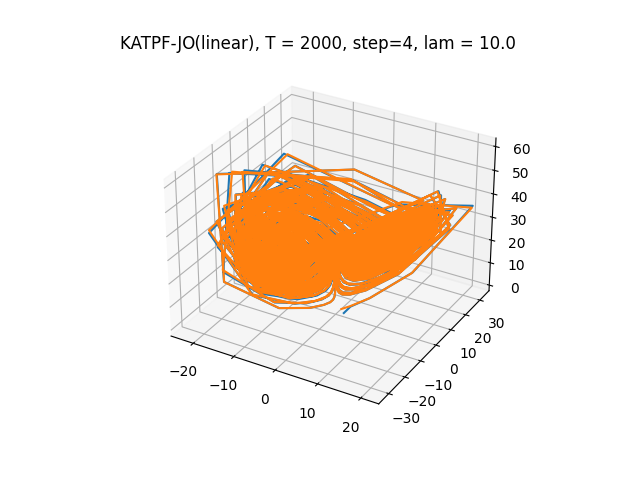}
  }
  \caption{Experimental results of the Lorenz63 system when $T=2000$ and $\Delta t=0.08$.}
  \label{fig:lorenz63step4}
\end{figure}
\newpage
\begin{table}[H]
  \centering\caption{Supplementary experimental results (RMSE) of the Lorenz63 system for ablation study. ``$\nearrow$'': the algorithmic convergence becomes worse and accordingly RMSE increases.}
  \vspace{0.5em}
  \label{tb:lorenz63_wo_OT}
  \begin{tabular}{cccccccc}
  \toprule
  
  KATPF-JO(linear)&$\lambda=0$&$\lambda=1$&$\lambda=10$&$\lambda=20$&$\lambda=50$&$\lambda^*$\\\midrule
    $T = 100,\Delta t=0.02$&0.1836&0.1652&0.1621&0.1582&0.1445&0.1398\\\midrule
    $T = 2000,\Delta t=0.02$& 0.1441&0.1335&0.1143&0.1068&0.0965&0.0843\\\midrule
    $T = 2000,\Delta t=0.04$&0.1819&0.1641&0.1380&0.1277&0.1157&0.1157\\\midrule
    $T = 2000,\Delta t=0.06$&0.1767&0.1412&0.1142&0.0973&$\nearrow$&0.0973\\\midrule
    $T = 2000,\Delta t=0.08$&0.2130&0.1795&0.1305&$\nearrow$&$\nearrow$&0.1305\\
    \bottomrule
  \end{tabular}
\end{table}

\,\\

\begin{figure}[H]
  \centering
  \subfigure{
    \includegraphics[width=0.28\columnwidth]{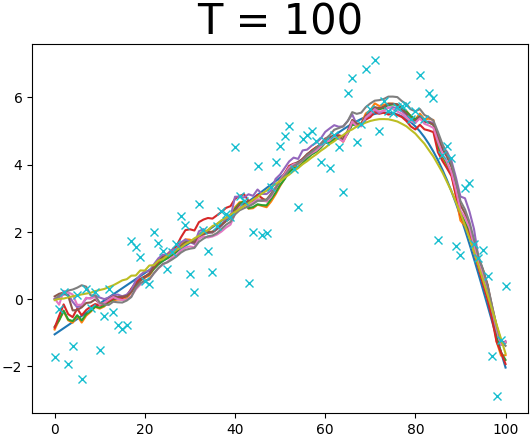}
  }\subfigure{
    \includegraphics[width=0.28\columnwidth]{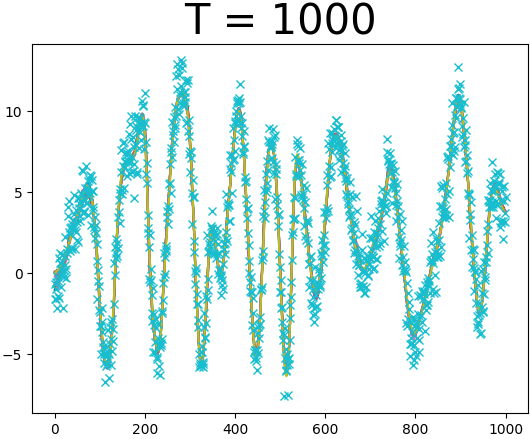}
  }\subfigure{
    \includegraphics[width=0.42\columnwidth]{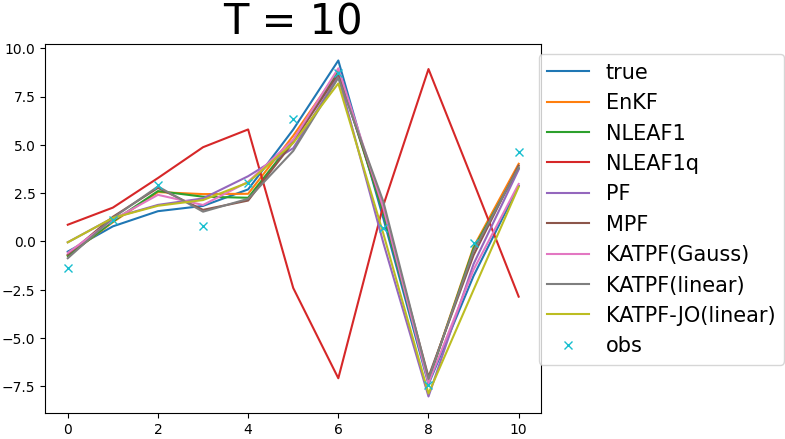}
  }
  \caption{State at location 1 over time in the Lorenz96 system.}
  \label{fig:lorenz96_loc1}
\end{figure}

\begin{figure}[H]
  \centering
  \subfigure{
    \includegraphics[width=0.28\columnwidth]{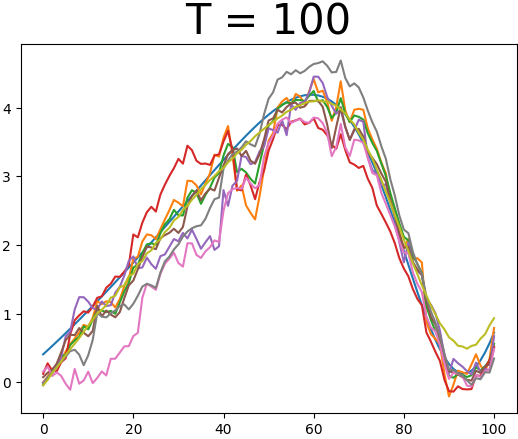}
  }\subfigure{
    \includegraphics[width=0.28\columnwidth]{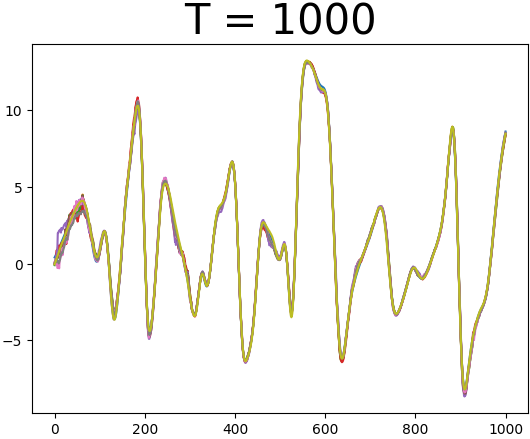}
  }\subfigure{
    \includegraphics[width=0.42\columnwidth]{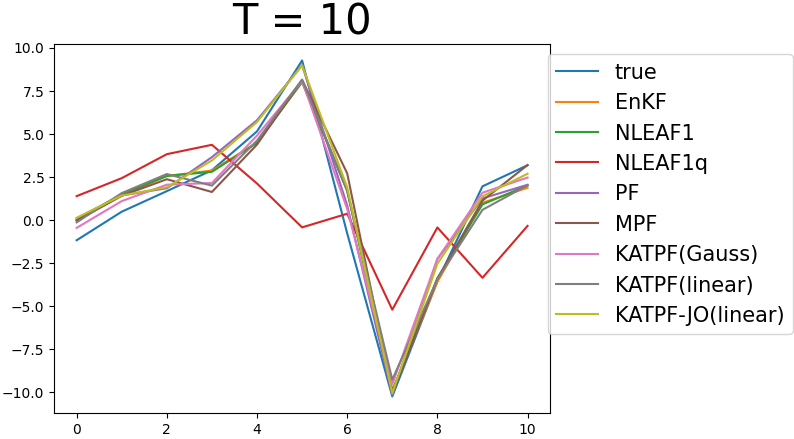}
  }
  \caption{State at location 2 over time in the Lorenz96 system (Recall that observation of this location is missing).}
  \label{fig:lorenz96_loc2}
\end{figure}

\newpage
\section{Covariance Inflation, Localization and Covariance Tapering}
\label{app:AppendixC}
In practice, covariance inflation \citep{lei2011moment}, localization \citep{hunt2007efficient} and covariance tapering \citep{furrer2006covariance} are all commonly used to adapt the ensemble methods to the high-dimensional scenarios.

Covariance inflation is usually executed after the analysis step but before the forecast step, where the particles are adjusted as follows:
\begin{equation}
  x_t^i \leftarrow \bar x_t + (1+\delta)(x_t^i-\bar x_t)
  \label{eq:inflat1}
\end{equation} with some $\delta>0$.
\citet{lei2011moment} used slightly different inflation method for PF:
\begin{equation}
  x_t^i\leftarrow x_t^i+2\delta\text{Cov}(x_t)^{\frac{1}{2}}\xi_t^i
  \label{eq:inflat2}
\end{equation}
with $\xi_t^i\stackrel{\text{i.i.d.}}{\sim}\mathcal{N}(0,1)$.
This can also be interpreted as a form of the regularized PF described in Section \ref{sec:intro} and Section \ref{subsec:pf}, with the Gaussian kernel.
Following \citet{lei2011moment}, unless otherwise specified, we use \eqref{eq:inflat2} for PFs and \eqref{eq:inflat1} for other methods after each analysis step in our experiments.

In the localization method, the overall state $x_t=(x_{t,1},\dots,x_{t,d})$ is divided into $d$ overlapping sub-states $\{x_{t,N_j}\}_{j=1}^d$ with $x_{t,N_j}=(x_{t,j-l},\dots,x_{t,j},\dots,x_{t,j+l})$ for some positive integer $l\leq\frac{d-1}{2}$ (with all indices taken modulo $d$).
Each sub-state $x_{t,N_j}$ is updated using its corresponding sub-observation $y_{t,\tilde{N}_j}:=H_t|_{N_j}x_{t,N_j}$($H_t|_{N_j}$ means the restriction of $H_t$ to the coordinate set $N_j$). Therefore, each coordinate $x_{t,j}$ is updated simultaneously in $2l+1$ sub-states $x_{t,N_{j-l}},\dots,x_{t,N_j},\dots,x_{t,N_{j+l}}$. The final update for $x_{t,j}$ is obtained by averaging the updates derived across the $2k+1$ sub-states $x_{t,N_{j-k}},\dots,x_{t,N_j},\dots,x_{t,N_{j+k}}$, where $k\leq l$.

The fundamental idea behind covariance tapering is to sparsify the covariance matrix by introducing zeros deliberately.
Let $T_\theta$ be a sparse positive definite correlation matrix (e.g., \citet{katzfuss2016understanding,furrer2006covariance}). Then the tapered covariance is computed as an element-wise product:
\begin{equation}
  \text{Cov}_{\text{tap}}(x,x')=\text{Cov}(x,x')\odot T_\theta(x,x').
  \label{eq:tapering}
\end{equation}
We can use $\widehat{\text{Cov}}_{\text{tap}}(x,x')$ instead of the sample covariance $\widehat{\text{Cov}}(x,x')$ to estimate the population covariance.
It is important to note that this technique is applicable and practical only for filtering methods that require covariance estimation, such as the EnKF.
\end{document}